\newcommand \ru{\mathrm{u}}
\newcommand \rv{\mathrm{v}}
\newcommand{\nn}{\nonumber}
\def\##1\#{\begin{align}#1\end{align}}
\def\$#1\${\begin{align*}#1\end{align*}}
\newcommand {\vc}{\textnormal {vc}}
\newcommand {\diam}{\textnormal {diam}}
\renewcommand {\d}{\textnormal {d}}
\newcommand {\opt}{\textnormal {opt}}
\newcommand {\median}{\textnormal {median}}
\def\T{\mathrm{\scriptscriptstyle T}} 
\def\sn{\sum_{i=1}^n}
\newcommand {\ro}{\text {o}}
\newcommand {\rO}{\text {O}}
\newcommand{\Rom}[1]{\text{\uppercase\expandafter{\romannumeral #1\relax}}}
\def\bet{\begin{theorem}}
	\def\eet{\end{theorem}}
\def\bel{\begin{lemma}}
	\def\eel{\end{lemma}}
\newcommand{\scolor}[1]{{\color{magenta}#1}}
\newcommand{\scomment}[1]{\scolor{$\dagger$}\marginpar{\tiny\scolor{S:\ #1}}\hspace{-3pt}}
\begin{document}
	
\title{\LARGE Adaptive Capped Least Squares}
	
\author{Qiang Sun\footnotemark[1], ~Rui Mao\thanks{Department of Statistical Sciences, University of Toronto, Toronto, ON M5S 3G3, Canada. E-mail:{\texttt{qsun@utstat.toronto.edu, rrui.mao@mail.utoronto.ca}}.} ~and ~Wen-Xin Zhou\thanks{Department of Mathematics, University of California, San Diego, La Jolla, CA 92093, USA. E-mail:  {\texttt{wez243@ucsd.edu}}.}}
	\date{} 
	\maketitle	
	\vspace{-0.25in}
\begin{abstract}
This paper proposes the capped least squares regression with an adaptive resistance parameter, hence the name, adaptive capped least squares regression.    The key observation is, by taking the resistant parameter to be data dependent, the proposed estimator achieves full asymptotic efficiency without losing the resistance property: it achieves the maximum breakdown point asymptotically. Computationally, we formulate the proposed regression problem as a quadratic mixed integer programming problem, which  becomes computationally expensive when the sample size gets large. The data-dependent resistant parameter, however, makes the loss function more convex-like for larger-scale problems. This makes a fast randomly initialized gradient descent algorithm possible for global optimization. Numerical examples indicate the superiority of the proposed estimator compared with classical methods. Three data applications to cancer cell lines, stationary background recovery in video surveillance, and blind image inpainting showcase its broad applicability.  
\end{abstract}
	
\noindent {\bf Keywords}: Breakdown point, capped least squares, data-dependent, efficiency, localized empirical process, $M$-estimator, resistance.
	
\section{Introduction}\label{sec:1}
	
Suppose we  collect data points $\{(x_i, y_i) \}_{i=1}^n$ that follow a linear model
\$
y_i=x_i^\T \beta^* +\epsilon_i, \ \ i=1,\ldots, n, 
\$
where $y_i$ is a univariate response,  $x_i$ is a $d$-dimensional predictor, $\beta^*$ is the regression coefficient vector, and $\epsilon_i$ is a random error. A standard approach for estimating $\beta^*$ is to solve the  empirical risk minimization problem of the form
\$
\widehat \beta \in \argmin_{\beta\in \RR^d} \frac{1}{n} \sum_{i=1}^n  f(\beta; x_i , y_i),
\$
where $f$ is a risk or loss function.  When data are collected without contamination, the risk function  $f$ is often taken as $f(\beta; x_i, y_i):= \ell (y_i-x_i^\T\beta ) = (y_i - x_i^\T \beta )^2/2$. The resulting estimator $\hat \beta$ is known as the ordinary least squares (OLS) estimator, which is also the maximum likelihood estimator when the errors $\epsilon_i$ follow independent and identical  normal distributions. 
Albeit having desirable statistical and computational properties, the ordinary least squares estimator  is highly sensitive to outliers in both the feature and the response space.


Since the seminal work of  \cite{huber1964robust}, many authors have  proposed robust regression methods by replacing the square loss by some loss that grows slowly at tails. For example, the Huber loss exhibits a linear growth while away from zero.  Although it  protects against outlying $y_i$, it is not robust to outliers in the feature space:  one single outlier in the feature space can have arbitrary large effect  on the estimate. To characterize this phenomenon, \cite{hampel1971general} introduced the notion of breakdown point, which is defined as the smallest percentage of contaminated data that can cause  the estimator to take on arbitrarily large aberrant values.  Because robustness can be somewhat a broad concept, we prefer to call an estimator with high breakdown point a resistant estimator. 
\cite{rousseeuw1984least} proposed the least median of squares (LMS) estimator and the least trimmed squares (LTS) estimator, which were among the first  equivariant regression estimators that can attain asymptotic maximum breakdown point of $1/2$. The LMS estimator, however, only has a cubic rate of convergence and thus has zero asymptotic efficiency. The LTS estimator is asymptotically normal but has a low efficiency of $7\%$ under the normal errors \citep{maronna2019robust}. Other resistant regression estimators include the S-estimator  \citep{rousseeuw1984robust}, the $\tau$-estimator \citep{yohai1988high}, the MM-estimator \citep{yohai1987high}, and the rank-based estimator \citep{wang2020tuning}. 
 \cite{maronna2019robust} have argued that the redescending $M$-estimators, such as the Tukey's biweight $M$-estimator, achieve better balance between  resistance and efficiency.


For most resistant $M$-estimators, there is often a tuning constant $\tau$, which we refer to as the resistance parameter,  governing the tradeoff between resistance and efficiency. 
A common practice is to pick the resistance parameter based on the $95\%$ asymptotic efficiency rule \citep{western1995concepts} 
so that resistance can be introduced at a manageable cost of efficiency. 
Despite being resistant, such an approach  makes resulting estimator biased 
when error distributions are asymmetric \citep{fan2017estimation, sun2019adaptive}.  
Moreover, a fixed-$\tau$ loss is asymptotically different from that derived from the likelihood principal, so that the resulting robust estimator is less efficient. 
Several authors have worked on improving the efficiency of an initial resistant estimator  using  multi-stage procedures.  
For example, \cite{gervini2002class} proposed a fully asymptotically efficient multi-stage estimator with high breakdown point: the later stages use a sequence of reweighted least squares to improve the  efficiency.  
 \cite{bondell2013efficient} proposed an empirical-likelihood framework that down-weights outlying observations by measuring the divergence between the empirical likelihood and the normal error distribution. Both work assumes symmetric errors.

Given all the aforementioned works, it seems that efficiency and resistance can only be achieved simultaneously via multi-stage procedures. 
A natural question that arises is as follows:
{Is there a simple, one-stage estimator,  without estimating any empirical distribution, that can achieve full asymptotic efficiency and maximum breakdown point?}
This paper gives a positive answer by proposing such a one-stage estimator. Unlike most of the literature which only considers symmetric errors with unimodal densities, we accommodate the practice where errors can be asymmetric, and characterize the tradeoff between the statistical accuracy (bias), asymptotic efficiency and robustness.   The key observation is that the robustification parameter  $\tau=\tau(n)$ should  grow as the sample size grows, in which case  the estimation bias diminishes asymptotically.  
From the statistical efficiency perspective, the loss function approaches to that derived from the likelihood principal under normals as the resistance parameter grows to infinity in the asymptotic limit, and thus the resulting estimator can be fully asymptotically efficient. Theoretically, we prove its asymptotic properties  under mild moment conditions with increasing dimensions.  Counterintuitively, by taking $\tau=\tau(n)$ to be data-dependent, the ensuing estimator maintains the robustness property, that is, it achieves the maximum high breakdown point of $1/2$ asymptotically. 

Computationally, we formulate the corresponding optimization program as a mixed quadratic integer programming (MIP) problem which can be readily solved by \texttt{CPLEX}. However, solving MIP for the global optimum is NP-hard and thus computationally intractable for large scale problems. 
By allowing the resistance parameter to grow with the sample size, the problem becomes more convex-like as $n\rightarrow \infty$, and thus become easier computationally when the sample size increases. Indeed, since the empirical loss function has a growing quadratic region, first-order algorithms will converge as long as the starting point is not in the flat region. This motivates us to propose a randomly initialized gradient descent algorithm, which is able to  find the global optimum with high probability. An R package that implements our algorithm can be found at \url{https://github.com/rruimao/ACLS}.

The rest of this paper proceeds as follows. In Section \ref{sec:2}, we introduce the adaptive capped least squares regression and prove its resistance property. Section \ref{sec:4} is devoted to asymptotic properties where we show the proposed estimator achieves full asymptotic efficiency. Section \ref{sec:5} presents algorithms. Simulation studies and real data applications are provided in Sections  \ref{sec:6} and \ref{sec:7} to support our method and theory.  We close this paper with a discussion in Section \ref{sec:8}.\\

	
\noindent{\sc Notation} We summarize here the notation that will be used throughout the paper. For any vector $\ru = (\ru_1, \ldots, \ru_d)^\T \in \RR^d$ and $q \geq 1$, $\|\ru\|_q=\big(\sum_{j=1}^d |\ru_j|^q\big)^{1/q}$ is the $\ell_q$ norm. For any vectors $\ru , \rv \in \RR^d$, we write $\langle \ru, \rv \rangle = \ru^\T \rv$. We  use $C$ to denote a generic constant which may change from line to line. For two sequences of real numbers $\{ a_n \}_{n\geq 1}$ and $\{ b_n \}_{n\geq 1}$, $a_n \lesssim b_n$ denotes $a_n \leq C b_n$ for some constant $C>0$, $a_n \gtrsim b_n$ if $b_n \lesssim a_n$, and $a_n \sim b_n$ indicates that $a_n \lesssim b_n$ and $b_n \lesssim a_n$. 
If $A$ is an $m\times n$ matrix, we use $\| A \|_q$ to denote its $\ell_q$ operator norm, defined by $\| A \|_q = \max_{ \ru \in \RR^n} \| A \ru \|_q/\|\ru\|_q$. 
\section{Methodology}\label{sec:2}

We start with definitions of the capped least squares loss and  the resistance parameter. 
\begin{definition}[Capped Least Squares, CLS]
The  capped least squares loss  $ \ell_{\tau}(\cdot) $ is defined as 
\begin{eqnarray*}
\ell_{\tau}(x)=\begin{cases}
                           x^2/2, & \textnormal{if}\  |x|\leq \tau; \\     
                           \tau^2/2, & \textnormal{if}\  |x|>\tau,
                    \end{cases}   
\end{eqnarray*}
where $\tau=\tau(n)>0$ is referred to as the resistance parameter. 
\end{definition}
 The loss $\ell_\tau(\cdot)$ is quadratic for small values of $x$ and stays flat when $x$ exceeds $\tau$ in magnitude. The parameter $\tau$ therefore controls the blending of the quadratic and flat regions, and the flat region brings resistance. 
Define the empirical loss function $ \cL_{n,\tau}(\beta )=n^{-1} \sum_{i=1}^n \ell_{\tau}(y_i-x _i^\T \beta ) $.  The adaptive capped least squares estimator is then defined as 
	\begin{eqnarray}\label{eq:flathuber}
	\hat \beta _{\tau} \in \argmin_{\beta \in \mathbb{R}^d}\cL_{n,\tau}(\beta )=\argmin_{\beta\in \RR^d}\left\{\frac{1}{n} \sum_{i=1}^n \ell_{\tau}(y_i-x _i^\T \beta )\right\}.
	\end{eqnarray}
With a growing resistance parameter $\tau=\tau(n)$, the empirical loss $\cL_{n,\tau}(\beta)$ approaches the least squares loss function, so that the ensuing estimator is asymptotically unbiased and efficient. Perhaps surprisingly, this does not cause any loss of resistance. We will use the term, adaptive capped least squares (ACLS), to emphasize the fact that $\tau=\tau(n)$ should be data-dependent.  This distinguishes our framework from classical redescending-type estimators.  


Comparing with the ordinary least squares estimator, outliers are completely removed in the  estimation procedure \eqref{eq:flathuber} which leads to resistance. To see this,  we first fix $\tau$ and write
	$
	\psi_\tau(x)=x1(|x|\leq \tau),
	$
	which can be  thought as the first order derivative of $\ell_\tau(x)$, except at the point of $\tau$ and $-\tau$. Heuristically,  $\widehat\beta_\tau$ solves the estimating equations 
	\$
\frac{1}{n}\sum_{i=1}^n \psi_\tau (y_i -x_i^\T\beta ) x_i=0.
	\$
Note that $\psi_\tau(r_i)=0$ if the $i$-th residual $r_i=y_i-x_i^\T\beta $ exceeds $\tau$ in magnitude, which corresponds to a ``bad" fit possibly due to an outlying observation for the $i$-th individual. As a result, this sample does not contribute to the above estimating equations. This implies that the capped least squares estimator could be more robust than the Huber estimator \citep{huber1964robust}. Comparing with other redescending-type losses such as the Tukey's biweight loss, the CLS estimator is more efficient. This is because the CLS loss is exactly quadratic in the center while, for example, the Tukey's biweight loss is not.  Figure \ref{fig:eff} shows the relative efficiency of the Tukey's biweight estimator over the  CLS  estimator over a sequence of same tuning parameters $\tau$. We shall mention that the CLS estimator is also referred to as the  trimmed mean estimator under location models \citep{huber1964robust}. 

\begin{figure}[t]
	\centering
	\includegraphics[width=.6\textwidth]{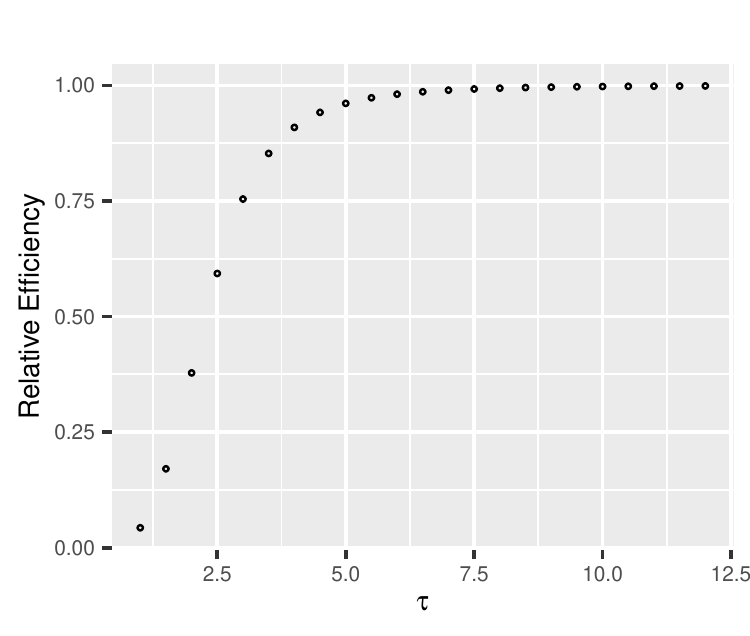} 
	\caption{Relative efficiency of the Tukey's biweight estimator vs the CLS estimator.}\label{fig:eff}
\end{figure}


While $\tau$ serves as a capping parameter that encourages resistance, at first sight it seems that $\widehat \beta_\tau$ will lose resistance when $\tau=\tau(n)$ tends to infinity with $n$. Perhaps surprisingly, at least to us,  we will show in what follows: the resistance property is preserved  as long as $\tau=\tau(n)$ is  fixed for every $n$. Denote the collected $n$ data vectors by $\cZ=\{z_1, \ldots, z_n\}$, where $z_i=(x_i, y_i)$.  For any estimator $\widehat \beta$, as a function of $\cZ$, the finite-sample breakdown point $\varepsilon^*(\widehat\beta , \cZ)$ of $\widehat\beta$  \citep{donoho1983notion,hampel1971general} is defined as 
	\$
	\varepsilon^*(\widehat\beta , \cZ)=\frac{1}{n}\min\Bigg\{m: \sup_{\tilde\cZ\in \cP_m(\cZ)}\big\|\beta (\tilde\cZ)\big\|_2=\infty\Bigg\},
	\$
	where $\cP_m(\cZ)=\big\{\widetilde\cZ:\,  \tilde z_i\ne z_i,\, i\in \cI~\textnormal{s.t.} ~ |\cI|\leq m\big\}$. Denote by $\cG$ the set of uncontaminated/clean samples. The following result is on the breakdown point of the Huber estimator, taken from \cite{maronna2019robust}. 
	
	\begin{proposition}\label{pro:huber}
	The Huber estimator has breakdown point at most $1/n$. 
	\end{proposition}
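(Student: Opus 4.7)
The plan is to exhibit a single contaminating observation that forces the Huber estimator to leave every bounded ball, which by definition gives $\varepsilon^*(\widehat\beta,\cZ)\le 1/n$. Fix the clean sample $\cZ=\{(x_i,y_i)\}_{i=1}^n$ and, for a scalar $M>0$ destined to tend to infinity, replace only the first observation by the high-leverage, high-response pair $(\tilde x_1,\tilde y_1)=(Me_1,M^2)$, with $e_1$ the first canonical basis vector. Denote the perturbed sample by $\widetilde\cZ\in\cP_1(\cZ)$ and write $\widehat\beta_M$ for the corresponding Huber estimator, i.e., any minimizer of $\widetilde\cL(\beta)=n^{-1}\sum_i\rho_c(\tilde y_i-\tilde x_i^\T\beta)$, where $\rho_c$ is the Huber loss with tuning constant $c>0$ (quadratic on $[-c,c]$, linear outside).

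The argument then reduces to comparing $\widetilde\cL$ at two trial parameters. At the aligned choice $\beta=Me_1$ the contaminated residual vanishes, while each of the $n-1$ clean residuals $y_i-Mx_{i,1}$ grows at most linearly in $M$; the linear tail of $\rho_c$ thus yields an upper bound of the form $\widetilde\cL(Me_1)\le AM+B$ for constants $A,B$ depending only on $\cZ$ and $c$. On the other hand, for any $\beta$ with $\|\beta\|_2\le R$ and $M$ large enough that $MR\le M^2/2$, the contaminated residual satisfies $|M^2-M\beta_1|\ge M^2/2\gg c$, placing it deep in the linear regime of $\rho_c$ and contributing at least $\tfrac{c}{2}M^2-\tfrac{c^2}{2}$ to the risk; hence $\widetilde\cL(\beta)\gtrsim M^2/n$ uniformly over the ball $\{\|\beta\|_2\le R\}$.

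Comparing the linear-in-$M$ upper bound against the quadratic-in-$M$ lower bound shows that, for every fixed $R>0$, the minimizer $\widehat\beta_M$ must eventually exit $\{\|\beta\|_2\le R\}$, so $\|\widehat\beta_M\|_2\to\infty$ as $M\to\infty$. This produces a one-point contamination under which the Huber estimator diverges, proving $\varepsilon^*(\widehat\beta,\cZ)\le 1/n$. The only nontrivial ingredient is the pair of quantitative bounds above, both of which follow directly from the piecewise form of $\rho_c$ and the fixed-sample boundedness of the clean $(x_i,y_i)$; I do not anticipate a serious obstacle, and in particular no concentration or uniqueness argument for $\widehat\beta_M$ is required, since the conclusion only asserts that some minimizer escapes every bounded ball.
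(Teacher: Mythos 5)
Your proof is correct, but it takes a genuinely different route from the paper's. The paper (following Section 5.13.1 of \cite{maronna2019robust}) works with the estimating equation $\sum_{i} g(y_i-x_i^\T\widehat\beta)x_i=0$ for the Huber score $g$: it sends $y_1$ and $x_1$ to infinity with $y_1/\|x_1\|_2\to\infty$ and derives a contradiction, since if $\widehat\beta$ stayed bounded the first score term $g(y_1-x_1^\T\widehat\beta)x_1$ would blow up while the remaining sum stays bounded. You instead argue variationally: with the explicit one-point contamination $(\tilde x_1,\tilde y_1)=(Me_1,M^2)$, the trial point $\beta=Me_1$ gives an $O(M)$ upper bound on the empirical Huber risk (the contaminated residual is exactly zero and the clean residuals are $O(M)$ in the linear tail), whereas every $\beta$ in a fixed ball incurs a contaminated residual of order $M^2$ and hence risk $\gtrsim cM^2/(2n)$; comparing the two forces any minimizer out of every bounded ball. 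Your bounds check out (for $\|\beta\|_2\le R$ and $M\ge 2R$ one indeed has $|M^2-M\beta_1|\ge M^2/2$, and $\rho_c(x)\le c|x|$ everywhere), and the conclusion $\sup_{\tilde\cZ\in\cP_1(\cZ)}\|\widehat\beta(\tilde\cZ)\|_2=\infty$ matches the paper's definition of breakdown point. The loss-comparison route buys two things the paper's argument glosses over: it does not rely on the first-order condition holding exactly (so no differentiability or interior-stationarity issues), and it handles non-uniqueness of the minimizer cleanly, since it shows that \emph{every} minimizer must escape the ball. The paper's score-equation argument is shorter and is the standard textbook derivation; both are valid proofs of the same upper bound.
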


	The above proposition demonstrates that the Huber estimator, and thus the adaptive Huber estimator,  is not robust to outliers in the feature space. Our first main result is on the high breakdown point of  the proposed ACLS estimator, under the general position assumption \citep{mili1996robust}. 
	\begin{assumption}[General Position]\label{ass:gp}
		We assume the data  are in general position, that is, any $d$ of them give a unique determination of $\beta$. 
	\end{assumption}
		
	\begin{theorem}[High Breakdown Point]\label{thm:2}
	Assume Assumption \ref{ass:gp} and let $c_r=\sum_{i\in \cG} \epsilon_i^2/(2|\cG|)$.  Then, the adaptive capped least squares estimator has breakdown point of at least 
		\$
		\frac{1}{n}\left(\frac{n(1-2c_r/\tau^2)-d+1}{2(1-c_r/\tau^2)}\right)\rightarrow \frac{1}{n}\left\lfloor\frac{n-d+2}{2}\right\rfloor  ~\textnormal{as}~\tau^2/c_r\rightarrow \infty. 
		\$
	\end{theorem}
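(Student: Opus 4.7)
The plan is to argue by comparison: I would show that the objective $\cL_{n,\tau}$ evaluated at the true parameter $\beta^*$ stays bounded under contamination, whereas at any hypothetical broken-down solution $\tilde\beta = \hat\beta_\tau(\tilde\cZ)$ with $\|\tilde\beta\|_2 = \infty$ the objective is forced to be so large that it cannot be the minimum unless the number of contaminated samples $m$ exceeds the stated threshold. Throughout, let $\tilde\cZ$ denote any contamination of $\cZ$ modifying at most $m$ indices, and let $\cG$ be the clean index set, so that $|\cG| = n - m$.

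The easy half is the upper bound at $\beta^*$. Since $\ell_\tau(\cdot) \leq \tau^2/2$ uniformly and $\ell_\tau(\epsilon_i) \leq \epsilon_i^2/2$ for $i \in \cG$, splitting the sum into clean and contaminated parts gives
\[
n\,\cL_{n,\tau}(\beta^*; \tilde\cZ) \;\leq\; \sum_{i \in \cG} \epsilon_i^2/2 \;+\; m\tau^2/2 \;=\; (n - m)\, c_r + m\tau^2/2 .
\]
The harder half is a matching lower bound at $\tilde\beta$ under $\|\tilde\beta\|_2 = \infty$. Let $k$ count the clean indices with $|y_i - x_i^\T \tilde\beta| \leq \tau$; if $k \geq d$, then any $d$ such observations, say indexed by $I$, produce an invertible $d \times d$ design submatrix $X_I$ by Assumption \ref{ass:gp}, and the slab inequality $\|y_I - X_I \tilde\beta\|_\infty \leq \tau$ forces $\tilde\beta$ into a bounded region, contradicting $\|\tilde\beta\|_2 = \infty$. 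Hence $k \leq d - 1$, so at least $n - m - d + 1$ clean samples contribute the full cap $\tau^2/2$, yielding
\[
n\,\cL_{n,\tau}(\tilde\beta; \tilde\cZ) \;\geq\; (n - m - d + 1)\,\tau^2/2 .
\]

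Finally, I would chain the two inequalities through the optimality relation $\cL_{n,\tau}(\tilde\beta;\tilde\cZ) \leq \cL_{n,\tau}(\beta^*;\tilde\cZ)$ to obtain $(n - 2m - d + 1)\tau^2/2 \leq (n - m)c_r$, and solve for $m$ to recover the finite-sample threshold $m \geq [n(1 - 2c_r/\tau^2) - d + 1]/[2(1 - c_r/\tau^2)]$; sending $\tau^2/c_r \to \infty$ then yields the classical maximum breakdown point $\lfloor (n - d + 2)/2 \rfloor / n$. The main obstacle I expect is the geometric step: one must verify that the bound on $\|\tilde\beta\|_2$ produced by a $d$-subset of clean slab constraints can be made uniform across the $\binom{n - m}{d}$ such subsets, so that outside a single fixed ball at most $d - 1$ clean slab constraints can hold simultaneously. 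Making this uniformity explicit and then carefully tracking how the cap $\tau$ and the clean residual quantity $c_r$ propagate through the final algebra is the delicate part; the remaining inequalities are routine.
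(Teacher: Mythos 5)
Your proposal is correct and follows the same overall skeleton as the paper's proof: compare the loss at the (putatively broken-down) minimizer with the loss at $\beta^*$, bound the latter by $(n-m)c_r + m\tau^2/2$, bound the former below by $(n-m-d+1)\tau^2/2$, and solve for $m$. The one place where you genuinely diverge is the geometric step showing that at most $d-1$ clean observations can have residual below $\tau$ when the estimate escapes to infinity. The paper handles this by extracting a subsequence with $\beta^k/\|\beta^k\|_2 \to \theta^\infty$ and invoking general position to say at most $d-1$ clean $x_i$ satisfy $x_i^\T\theta^\infty = 0$, so all other clean losses tend to the cap $\tau^2/2$ in the limit. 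You instead argue directly at finite sample size: if $d$ clean slab constraints $|y_i - x_i^\T\beta|\leq\tau$ held with an invertible $d\times d$ design submatrix (guaranteed by Assumption \ref{ass:gp}), then $\beta$ would be confined to a bounded set. Your worry about uniformity over the $\binom{n-m}{d}$ subsets is a non-issue: the clean $(x_i,y_i)$ are fixed, there are finitely many $d$-subsets of the original data, and the maximum of the finitely many resulting radii is a single finite $R$ outside of which at most $d-1$ clean slab constraints can hold; this covers every possible contamination at once. The only phrasing to tighten is that the supremum in the breakdown-point definition need not be attained, so "$\|\tilde\beta\|_2=\infty$" must be read as a sequence of contaminations $\cZ^k$ with $\|\widehat\beta(\cZ^k)\|_2 > k$ (exactly as the paper does); your slab argument then applies verbatim to each $\beta^k$ with $\|\beta^k\|_2 > R$, and in fact yields the inequality at every such finite $k$ rather than only in the limit, which is a mild sharpening of the paper's argument.
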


	Note that the adaptive capped least squares regression estimator is regression equivariant \citep{rousseeuw1987}, that is,  
	\$
	\widehat\beta_\tau\big(\{x _i, y_i+x _iv), i=1,\ldots n\}\big)=\widehat\beta_\tau\big(\{x _i, y_i), i=1,\ldots n\}\big)+v
	\$
	for any $v\in \RR^d$.   {The maximum breakdown point for regression equivariant estimator is $\left\lfloor(n-d+2)/2\right\rfloor/ n$} \citep{muller1995breakdown,mizera2002breakdown}.  Therefore, provided that $ \tau^2/c_r\rightarrow\infty $, the adaptive capped least squares regression estimator achieves the maximum breakdown point asymptotically. 
	

A common practice is to determine $\tau$ by the $95\%$ asymptotic efficiency rule \citep{maronna2019robust} to introduce resistance at the cost of efficiency. 
Yet our result suggests that, by letting $\tau=\tau(n)$ to be data-dependent and  diverging, 
 high breakdown point can be preserved. At the same time, full asymptotic efficiency is achieved as the CLS loss is exactly quadratic in an increasing region (with $n$) around the origin.
This is somewhat counter-intuitive at first glance. A careful examination, however, reveals that these two properties do not necessarily contradict each other. This is because the breakdown point is a finite-sample property while the statistical efficiency is an asymptotic notion.  

To illustrate the intuition, let us consider estimating the mean of a random variable. Suppose the uncontaminated data $\{ y_i \}_{i=1}^n$ follow the local model  
\$
y_i=\mu + \epsilon_i, ~ i =1,\ldots, n ,
\$
while we only observe the contaminated versions $\tilde y_i= y_i + \gamma_i$ with $\gamma_i$ denoting a mean shift parameter to indicate the deterministic contamination.  
The breakdown point of a procedure can be roughly understood as how many arbitrary $\gamma_i$'s the procedure can tolerate before it produces arbitrarily large estimates. For any fixed $n$ and $\tau(n)<\infty$, the procedure $\argmin_{\mu} (1/n)\sum_{i=1}^n\ell_\tau(\widetilde y_i-\mu)$ achieves high breakdown point by eliminating the effects of arbitrarily large $\gamma_i$'s when $|\widetilde y_i-\mu|>\tau$. In other words, the breakdown point property does not depend  on the actual value of $\tau$ as long as it is finite for every $n$. 
On the other hand, a data-dependent and diverging $\tau$ brings full asymptotic efficiency. We prove this rigorously in the next section.

Finally, we point out that, in the processing of finishing this manuscript, the authors  were pointed out a previous paper by \cite{he2000breakdown} showed that weighted $t$-type regression estimators of a can have high breakdown points and are efficient for certain heteroscedastic $t$-models due to being maximum likelihood estimators. \cite{simpson1987minimum}  showed full efficiency and resistance for minimum Hellinger distance estimator but only  at a target model like a Poisson model for count data. These work showed the promise of developing a single-stage fully efficient and highly resistant estimator in the general case, as provided by our paper. 

	

	\section{Asymptotic  Properties}\label{sec:4}
This section establishes the asymptotic properties for the proposed robust estimator.  For any prespecified $\tau>0$,  $\hat \beta_\tau$ is an $M$-estimator of 
	\$
	\beta_\tau^*=\argmin_{\beta \in \RR^d}  \cL_\tau(\beta) ~\mbox{ with }~  \cL_\tau(\beta):=  \frac{1}{n} \sum_{i=1}^n\EE \ell_\tau (y_i-x_i^\T\beta). 
	\$
We call $\beta_\tau^*$ the capped least squares regression coefficient, which is not necessarily equal to $\beta^*$ unless the error distributions are symmetric. By setting $\tau=\infty$, we have $\beta^*_\infty = \beta^*$ as long as $\EE[\epsilon_i|x_i]=0$. Throughout the paper, we assume that the capped least squares regression coefficient approaches $\beta^*$ as $\tau\rightarrow \infty$, that is, $ \| \beta_\tau^* -   \beta^* \|_2 \rightarrow 0$ as $\tau\rightarrow \infty$. To prove consistency, we impose the following assumption, which essentially assumes that the population global optimum is unique.  
	
	
	\begin{assumption}[Separability]  \label{ass:sep}
		For  every $\varepsilon>0$, $\beta \to \cL_\tau(\beta)$ with $\tau>0$ sufficiently large satisfies 
		\$
		\inf_{\beta: \|\beta- \beta_\tau^*\|_2\geq \varepsilon} \cL_\tau(\beta)>\cL_\tau(\beta_\tau^*). 
		\$
	\end{assumption}
	
	\begin{theorem}[Consistency]\label{thm:consistency}
		Suppose Assumption \ref{ass:sep} holds. Provided that the triplet $(d,n,\tau)$ satisfies $\tau^{4}   d/n\rightarrow 0$ as $n\to \infty$, we have $\widehat\beta_\tau  \overset{\mathbb{P}}{\to} \beta^*_\tau$. In addition, if $\tau \rightarrow \infty$, then $\widehat\beta_\tau \overset{\mathbb{P}}{\to} \beta^*.$
	\end{theorem}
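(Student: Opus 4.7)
The plan is to follow the classical two-ingredient consistency argument for $M$-estimators: combine the well-separatedness supplied by Assumption \ref{ass:sep} with a uniform law of large numbers for the centered process $\beta \mapsto \cL_{n,\tau}(\beta) - \cL_\tau(\beta)$. Assumption \ref{ass:sep} guarantees that for every $\varepsilon>0$ the gap $\eta(\varepsilon):=\inf_{\|\beta-\beta^*_\tau\|_2 \ge \varepsilon}\cL_\tau(\beta)-\cL_\tau(\beta^*_\tau)$ is strictly positive, so it suffices to prove $\cL_\tau(\widehat\beta_\tau) - \cL_\tau(\beta^*_\tau) \overset{\mathbb{P}}{\to} 0$.

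The standard sandwich delivers this from uniform convergence. By optimality $\cL_{n,\tau}(\widehat\beta_\tau) \le \cL_{n,\tau}(\beta^*_\tau)$, so
\[
0 \le \cL_\tau(\widehat\beta_\tau) - \cL_\tau(\beta^*_\tau) \le 2\sup_{\beta\in B}\bigl|\cL_{n,\tau}(\beta) - \cL_\tau(\beta)\bigr|
\]
whenever $\widehat\beta_\tau$ lies in a bounded set $B\subset\RR^d$. Thus the analytic core is a uniform deviation bound on $B$. Two structural properties of $\ell_\tau$ are essential here: the envelope is bounded by $\tau^2/2$, and $\ell_\tau$ is $\tau$-Lipschitz. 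This makes the function class $\{z\mapsto \ell_\tau(y-x^{\T}\beta):\beta\in B\}$ uniformly bounded and Lipschitz in a $d$-dimensional parameter. A symmetrization plus chaining (Dudley entropy) argument, exploiting the $O(d\log(1/\varepsilon))$ metric entropy of a Euclidean ball, followed by a bounded-differences concentration, then yields
\[
\sup_{\beta\in B}\bigl|\cL_{n,\tau}(\beta)-\cL_\tau(\beta)\bigr| = O_{\mathbb{P}}\bigl(\tau^2\sqrt{d/n}\bigr),
\]
which is $o_{\mathbb{P}}(1)$ exactly under the stated rate $\tau^4 d/n\to 0$.

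The main obstacle is the lack of coercivity of $\cL_{n,\tau}$: because $\ell_\tau$ is flat outside the $\tau$-strip, the empirical loss does not grow with $\|\beta\|_2$, and so $\widehat\beta_\tau$ does not automatically belong to a bounded set. To circumvent this I would use that $\cL_\tau(\beta)\to \tau^2/2$ as $\|\beta\|_2\to\infty$ while $\cL_\tau(\beta^*_\tau)<\tau^2/2$; combined with Assumption \ref{ass:sep}, there exist $R,\eta_0>0$ with $\cL_\tau(\beta)-\cL_\tau(\beta^*_\tau)\ge \eta_0$ for every $\|\beta-\beta^*_\tau\|_2\ge R$. The uniform deviation bound on the ball of radius $R$ then localises $\widehat\beta_\tau$ to that ball with high probability, closing the loop. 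The second conclusion $\widehat\beta_\tau\overset{\mathbb{P}}{\to}\beta^*$ when $\tau\to\infty$ is immediate from the first by the triangle inequality and the standing hypothesis $\|\beta^*_\tau-\beta^*\|_2\to 0$.
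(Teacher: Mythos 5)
Your overall architecture is the same as the paper's: the basic inequality $\cL_{n,\tau}(\widehat\beta_\tau)\le\cL_{n,\tau}(\beta^*_\tau)$ sandwiches $\cL_\tau(\widehat\beta_\tau)-\cL_\tau(\beta^*_\tau)$ by twice a uniform deviation, which is $O_{\PP}(\tau^2\sqrt{d/n})=o_{\PP}(1)$ under $\tau^4 d/n\to 0$, and Assumption \ref{ass:sep} converts this into $\widehat\beta_\tau\overset{\PP}{\to}\beta^*_\tau$; the second claim then follows from $\|\beta^*_\tau-\beta^*\|_2\to 0$. The divergence, and the gap, is in how the uniform law of large numbers is obtained. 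The paper proves its Theorem \ref{thm:unif} with the supremum taken over \emph{all} of $\RR^d$: it writes $\ell_\tau=\ell_\tau^++\ell_\tau^-$ as a sum of monotone functions composed with the finite-dimensional class $\cV=\{\epsilon-x^\T(\beta-\beta^*)\}$, invokes the VC-subgraph property (dimension $\le d+2$) preserved under monotone composition, and applies Dudley's integral with the uniform polynomial covering-number bound and envelope $\tau^2/2$. This entropy bound does not care whether $\beta$ ranges over a bounded set, so no localization of $\widehat\beta_\tau$ is ever needed. You instead propose chaining against the $O(d\log(1/\varepsilon))$ metric entropy of a Euclidean ball via the Lipschitz property of $\ell_\tau$, which only yields the deviation bound on a bounded set $B$, and you must then argue separately that $\widehat\beta_\tau\in B$ with high probability.

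That localization step is where your argument breaks. The population-level statement you invoke --- that $\cL_\tau(\beta)-\cL_\tau(\beta^*_\tau)\ge\eta_0$ for all $\|\beta-\beta^*_\tau\|_2\ge R$ --- already follows from Assumption \ref{ass:sep} with $\varepsilon=R$ (your auxiliary claim $\cL_\tau(\beta)\to\tau^2/2$ as $\|\beta\|_2\to\infty$ is both unnecessary and not true in general, e.g.\ along directions $v$ with $x_i^\T v=0$ for many $i$). But a population gap outside the ball does not localize the \emph{empirical} minimizer: to rule out $\|\widehat\beta_\tau-\beta^*_\tau\|_2\ge R$ you need $\inf_{\|\beta-\beta^*_\tau\|_2\ge R}\cL_{n,\tau}(\beta)>\cL_{n,\tau}(\beta^*_\tau)$ with high probability, i.e.\ a one-sided uniform deviation bound over the unbounded exterior region --- precisely the region your ball-based entropy bound does not cover. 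The sentence ``the uniform deviation bound on the ball of radius $R$ then localises $\widehat\beta_\tau$ to that ball'' is therefore circular. The fix is to control the empirical process globally, as the paper does via the VC-subgraph/uniform-entropy route (or some other device exploiting that $\ell_\tau$ is bounded and eventually constant); once you adopt that, your bounded-set machinery and the extra localization step become unnecessary. As a side remark, the paper itself notes that on a compact parameter space the Lipschitz-based bound you have in mind would improve the scaling to $\tau=o(\sqrt{n})$, which is exactly the trade-off between the two entropy arguments.
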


	
	The above theorem shows that as long as  $\tau$ does not grow too fast, the capped least squares estimator  $\widehat\beta_\tau$ is consistent. Technically, if we take the parameter space to be a compact subset of $\RR^d$, then the scaling condition in Theorem~\ref{thm:consistency} can be relaxed to $\tau = o(\sqrt{n})$. 
	To obtain the convergence rate and  the asymptotic normality, we need the following anti-concentration type property for random errors.

		\begin{assumption}[Anti-concentration]\label{ass:anti}
	There exist positive constants $L_0$, $\eta$ and  $\tau_0$ such that for any $\Delta \in [0, \tau/2]$ and $\tau\geq \tau_0$, 
	 \begin{align*}
		\max_{1\leq i \leq n}\EE \left\{|\epsilon_i |^{4+\eta}1(\tau-\Delta\leq |\epsilon_i |\leq \tau+\Delta )\right\}\leq \frac{L_0\Delta}{\tau}.
		\end{align*}
	\end{assumption}

	Assumption~\ref{ass:anti} is an anti-concentration-type condition on the distributions of $\epsilon_i$. It is satisfied if the density of $\epsilon_i$ decays sufficiently fast. For example, assume $\epsilon_i$'s are independent and identically distributed, and denote by $p(\cdot)$ the density function of $|\epsilon_i|$. Moreover, assume that $p(x ) \lesssim x^{-(5+\eta)}$ as $x\to \infty$. Then, for any $\Delta \in [0, \tau/2]$ with $\tau>0$ sufficiently large,
$$
	\EE \left\{|\epsilon_i |^{4+\eta}1(\tau-\Delta\leq |\epsilon_i |\leq \tau+\Delta )\right\} = \int_{\tau-\Delta}^{\tau+\Delta} t^{4+\eta} p(t) {\rm d} t \lesssim \int_{\tau-\Delta}^{\tau+\Delta}  t^{-1} {\rm d} t \lesssim \frac{\Delta }{\tau}.
$$

To prove asymptotic normality, we  need the following condition as a stronger version of Assumption \ref{ass:sep}.

	\begin{assumption}[Local Strong Convexity]\label{ass:inv}
		There exist some radius $r$, a curvature parameter $\kappa_\ell $ and a tolerance parameter $\eta$ such that, for any $\beta\in \BB_2(\beta^*_\tau, r)\coloneqq\{\beta: \|\beta-\beta^*_\tau\|_2\leq r\}$,
		\$
		\cL_\tau(\beta) - \cL_\tau(\beta^*_\tau)  \geq \kappa_\ell \|\beta-\beta_\tau^*\|_2^2. 
		\$
	\end{assumption}

The above assumption basically requires the population loss function to be locally strongly convex. This is true for the adaptive capped least squares loss, along with other nonconvex losses such as Tukey's biweight loss \citep{maronna2019robust}. For the predictors $x_i$'s, we impose the following boundedness assumption, which is standard in regression analysis with fixed designs.

\begin{assumption}\label{ass:design}
 There exists some constant $C_x$ such that $\max_{1\leq i\leq n}\|x_i\|_2\leq C_x \sqrt{d}$. 
\end{assumption}

Assumption \ref{ass:design} can be further relaxed by sacrificing the scaling condition $n\gtrsim d^3$ in the theorems.  Our first result characterizes the order of bias induced by robustification. For $k\geq 1 $, define the moment parameter $m_k:= \max_{1\leq i \leq n} \EE |\epsilon_i|^k$.	Finally, we are ready to present the main theorem of this section.  Let   	$
	\psi_\tau(x)=x1(|x|\leq \tau)$, which can be  viewed as the first order derivative of $\ell_\tau(x)$ except at the points $\pm \tau$. Define the  matrices 
$$
	\Sigma=  \frac{1}{n}\sum_{i=1}^n x_ix_i^\T ~~\mbox{ and }~~ \Sigma_\tau = \frac{1}{n} \sn \PP(|\epsilon_i| \leq \tau ) x_ix_i^\T .
$$

\begin{theorem}[Asymptotic Normality]\label{thm:asyn}
Suppose Assumptions \ref{ass:anti} -- \ref{ass:design} hold, and $m_{4+\eta}$ exists. Further assume that the eigenvalues of $\Sigma_\tau$ are bounded away from zero and infinity for $n$ large enough. Let $n\gtrsim d^3$ and $\tau\geq n^{\frac{1}{2(1+\eta)}}$. Provided $\widehat\beta_\tau \overset{\PP}{\to}\beta_\tau^*$, we have
\$
\sqrt{n}(\widehat\beta_\tau-\beta^*)
&= \frac{1}{\sqrt{n}}\sum_{i=1}^n \Sigma_\tau^{-1}\psi_\tau(\epsilon_i)x_i+\Delta_n\leadsto \cN(0, \sigma^2\Sigma^{-1}),
\$
where $\|\Delta_n\|_2=\ro_{\PP}(1)$ and  $\sigma^2= \lim_{n\to \infty} n^{-1}\sum_{i=1}^n \sigma_i^2 $ with $\sigma_i^2= \EE (\epsilon_i^2)$.
\end{theorem}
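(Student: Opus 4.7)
The plan is to derive a Bahadur-type representation for $\widehat\beta_\tau-\beta^*$ by linearizing the first-order condition, and then to apply the Lindeberg--Feller CLT to the resulting linear term. Since $\ell_\tau$ is differentiable off $\{|y_i-x_i^\T\beta|=\tau\}$, an event of probability zero by Assumption~\ref{ass:anti}, the minimizer satisfies almost surely $n^{-1}\sum_i\psi_\tau(\epsilon_i-x_i^\T\delta)\,x_i=0$, with $\delta:=\widehat\beta_\tau-\beta^*$. Writing $S_\delta:=n^{-1}\sum_i\psi_\tau(\epsilon_i-x_i^\T\delta)\,x_i$ and $T_\delta:=\EE S_\delta$, I would add and subtract expectations to decompose
\begin{equation*}
0\;=\;S_0\;+\;R_n(\delta)\;+\;(T_\delta-T_0),
\end{equation*}
where $R_n(\delta):=(S_\delta-S_0)-(T_\delta-T_0)$ is the centered localized empirical process. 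The representation then follows from (i) a Taylor expansion $T_\delta-T_0=-\Sigma_\tau\delta+o_\PP(\|\delta\|_2)$, (ii) the empirical-process bound $R_n(\delta)=o_\PP(n^{-1/2})$ uniformly in a consistency neighborhood, and (iii) a bias bound for the mean of $S_0$.

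For (i), integration by parts against the density of $\epsilon_i$ shows that the Jacobian of $t\mapsto\EE\psi_\tau(\epsilon_i-t)$ at $t=0$ equals $-\PP(|\epsilon_i|\le\tau)+\tau[p_i(\tau)+p_i(-\tau)]$; Assumption~\ref{ass:anti} forces $\tau\,p_i(\pm\tau)=O(\tau^{-(4+\eta)})=o(1)$, so the leading Jacobian reduces to $-\Sigma_\tau$. Combined with Assumption~\ref{ass:inv} and the consistency from Theorem~\ref{thm:consistency}, this yields the expansion in (i). For (iii), the tail inequality $|\EE\,\epsilon_i\,1(|\epsilon_i|>\tau)|\le m_{4+\eta}\tau^{-(3+\eta)}$ together with $\tau\ge n^{1/(2(1+\eta))}$ gives $\sqrt{n}\|T_0\|_2=o(1)$, so the deterministic mean of $S_0$ contributes only to the remainder $\Delta_n$, not to the asymptotic law.

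The main obstacle is (ii), the localized empirical-process bound, because $\psi_\tau$ has jumps of size $\tau$ at $\pm\tau$ and no Lipschitz control on its increments is available. The key decomposition is
\begin{equation*}
\psi_\tau(\epsilon_i-x_i^\T\delta)-\psi_\tau(\epsilon_i)\;=\;-x_i^\T\delta\cdot 1(|\epsilon_i|\le\tau)\;+\;J_i(\delta),
\end{equation*}
where the jump term $J_i(\delta)$ is supported on the boundary strip $\{|\epsilon_i|\in[\tau-|x_i^\T\delta|,\,\tau+|x_i^\T\delta|]\}$ and bounded in magnitude by $O(\tau)$. Assumption~\ref{ass:anti} with $\Delta=|x_i^\T\delta|$ caps the $(4+\eta)$-th moment on this strip at $L_0|x_i^\T\delta|/\tau$, which translates to a second-moment bound $\EE\|J_i(\delta)x_i\|_2^2\lesssim \|x_i\|_2^2\,|x_i^\T\delta|\,\tau^{-(2+\eta)}$ on the jump contribution. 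A peeling argument over dyadic shells $\{2^{-k-1}r_n<\|\delta\|_2\le 2^{-k}r_n\}$ combined with Talagrand's concentration inequality, with envelopes controlled via Assumption~\ref{ass:design} and covering numbers governed by the scaling $n\gtrsim d^3$, delivers the required uniform bound.

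Finally, rearranging to $\Sigma_\tau\delta=S_0+o_\PP(n^{-1/2})$ and splitting $S_0$ into its centered and mean parts reduces the distributional claim to the Lindeberg--Feller CLT applied to $n^{-1/2}\sum_i\Sigma_\tau^{-1}[\psi_\tau(\epsilon_i)-\EE\psi_\tau(\epsilon_i)]x_i$. The Lindeberg condition is trivial since $|\psi_\tau(\epsilon_i)|\le\tau=o(\sqrt{n})$ and $\|x_i\|_2\lesssim\sqrt{d}$ with $d=o(n)$; the asymptotic covariance $\Sigma_\tau^{-1}\bigl(n^{-1}\sum_i\mathrm{Var}(\psi_\tau(\epsilon_i))x_ix_i^\T\bigr)\Sigma_\tau^{-1}$ collapses to $\sigma^2\Sigma^{-1}$ via $\mathrm{Var}(\psi_\tau(\epsilon_i))\to\sigma_i^2$ and $\Sigma_\tau\to\Sigma$, both guaranteed by $m_{4+\eta}<\infty$ and the dominated convergence of the truncations. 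This yields the stated limit distribution.
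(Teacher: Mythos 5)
Your route is genuinely different from the paper's. You linearize the estimating equation $n^{-1}\sum_i\psi_\tau(\epsilon_i-x_i^\T\delta)x_i=0$ (a Z-estimator argument), whereas the paper never touches the first-order condition: it works with the loss-difference process $D_n(t)=n^{-1}\sum_i\{\ell_\tau(\epsilon_i-x_i^\T t)-\ell_\tau(\epsilon_i)\}$, shows $\EE D_n(t)=\tfrac12 t^\T\Sigma_\tau t+\ro(\|t\|_2^2)$ and $\sqrt{n}|\PP_nR(t)-\PP R(t)|=\ro_{\PP}(\|t\|_2)$ for the remainder $R_i(t)=\ell_\tau(\epsilon_i-x_i^\T t)-\ell_\tau(\epsilon_i)+\psi_\tau(\epsilon_i)x_i^\T t$, and then completes the square and invokes optimality of $\widehat\beta_\tau$ to extract the Bahadur representation. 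The paper's choice is deliberate: $\ell_\tau$ is neither convex nor differentiable at $\pm\tau$, and the argmin comparison sidesteps any appeal to a vanishing gradient. Your approach, when it works, is more classical and gives the representation more directly; the empirical-process work you describe for the score increments (jump term supported on the boundary strip, second moment controlled by Assumption~\ref{ass:anti}, peeling plus concentration) is the exact analogue of what the paper does for the loss increments, so the technical effort is comparable.

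There are, however, two genuine gaps in your justifications. First, your opening claim that the minimizer satisfies the score equation ``almost surely'' because $\{|y_i-x_i^\T\beta|=\tau\}$ has probability zero by Assumption~\ref{ass:anti} does not work as stated: that event is null for \emph{fixed} $\beta$, but $\widehat\beta_\tau$ is data-dependent and could in principle land on one of the $2n$ hyperplanes where $\cL_{n,\tau}$ is nondifferentiable; since the objective is nonconvex, you cannot fall back on subdifferential calculus either. The correct repair is a one-sided directional-derivative argument: at a point where a single residual sits exactly at $+\tau$ the left and right directional derivatives along $x_j$ differ by $\tau\|x_j\|_2^2/n>0$ with the wrong signs for a local minimum, so (barring exact cancellations between coincident boundary residuals) the minimizer cannot lie on these hyperplanes. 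This needs to be said; without it the whole derivation has no starting point. Second, your step (i) computes the Jacobian of $t\mapsto\EE\psi_\tau(\epsilon_i-t)$ by integration by parts against a density $p_i$, but Assumption~\ref{ass:anti} is a moment condition on a strip and does not grant the existence of a density. The expansion is still available — write $\EE\psi_\tau(\epsilon_i-t)-\EE\psi_\tau(\epsilon_i)=-t\,\PP(|\epsilon_i|\le\tau)+\rho_i(t)$ with $\rho_i(t)$ supported on $\{|\epsilon_i|\in[\tau-|t|,\tau+|t|]\}$ and bounded via Assumption~\ref{ass:anti} by $O(|t|\tau^{-(4+\eta)})$ — but the density-based derivation should be replaced by this direct indicator bookkeeping, which is precisely how the paper's proof of Lemma~\ref{thm:bias} and of the $\Sigma_\tau(t)\to\Sigma_\tau$ step proceeds. (A minor further slip: the second-moment bound on the jump contribution should read $\EE\|J_i(\delta)x_i\|_2^2\lesssim\|x_i\|_2^2\,|x_i^\T\delta|\,\tau^{-(3+\eta)}$, one more power of $\tau$ than you wrote; this only helps you.) With these repairs the argument goes through and yields the same representation and limit law as the paper.
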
	

Theorem \ref{thm:asyn} simply states  the ACLS estimator achieves full asymptotic efficiency. This is a direct consequence of the data-dependent and growing  $\tau$.   Lastly, we verify Assumption \ref{ass:inv} under the following assumption. 
 \begin{assumption}\label{ass:mineigen}
Assume that, for $\tau,\, n$ large enough,
 \$
\rho_\tau:=\lambda_{\min}\left(\frac{1}{n}\sum_{i=1}^nx_ix_i^\T\EE 1(|\epsilon_i|\leq 3\tau/4)\right)>0.
 \$
 \end{assumption}

  \begin{lemma}\label{lemma:sc}
Assume Assumptions \ref{ass:mineigen} and \ref{ass:design}. Suppose  $\beta\in \BB_2(\beta^*_\tau, r_0)$ for some $r_0=\tau/(4\sqrt{d})$. Take $\tau$ such that $\sqrt{d}/\tau^{3+\eta}\leq c\|\beta-\beta_\tau^*\|_2$ and $\sqrt{d^3}/\tau^{5+\eta}\leq c \|\beta-\beta_\tau^*\|_2^{-1}$ for sufficiently small constant $c$. If $\|\beta^*-\beta_\tau^*\|_2\lesssim \|\beta-\beta_\tau^*\|_2$, then Assumption \ref{ass:inv} holds with $\kappa_\ell=\rho_{\tau}/4$, that is, 
\$
\cL_\tau(\beta) - \cL_\tau(\beta^*_\tau)  \geq \frac{\rho_\tau}{4} \|\beta-\beta_\tau^*\|_2^2.
\$
 \end{lemma}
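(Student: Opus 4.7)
The plan is to reduce the claim to a one-dimensional Taylor expansion along the segment from $\beta_\tau^*$ to $\beta$. Let $u=\beta-\beta_\tau^*$, $v_i=x_i^\T u$, and $r_i^*=\epsilon_i+x_i^\T(\beta^*-\beta_\tau^*)$, and introduce the scalar functions
\[
h_i(v):=\EE\bigl[\ell_\tau(r_i^*-v)-\ell_\tau(r_i^*)+v\,\psi_\tau(r_i^*)\bigr].
\]
Because $\beta_\tau^*$ is the population minimizer of $\cL_\tau$, the first-order condition $n^{-1}\sum_i\EE[\psi_\tau(r_i^*)]\,x_i=0$ implies that the linear correction averages to zero, so that
\[
\cL_\tau(\beta)-\cL_\tau(\beta_\tau^*)=\frac{1}{n}\sum_{i=1}^n h_i(v_i).
\]
It therefore suffices to produce a matching quadratic lower bound on each $h_i(v_i)$.

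Next I would exploit the smoothing effect of the expectation. When the law of $r_i^*$ is continuous at $\pm\tau$, a direct computation gives $h_i(0)=0$, $h_i'(0)=0$, and
\[
h_i''(v)=\PP\bigl(|r_i^*-v|\le \tau\bigr)\ge 0,
\]
so $h_i$ is convex and Taylor's theorem with integral remainder yields $h_i(v_i)=v_i^2\int_0^1(1-t)\,h_i''(tv_i)\,\rd t$. To lower-bound $h_i''(tv_i)$, use the triangle inequality $|r_i^*-tv_i|\le|\epsilon_i|+|x_i^\T(\beta^*-\beta_\tau^*)|+|v_i|$; Assumption~\ref{ass:design} together with the radius condition $\|u\|_2\le r_0=\tau/(4\sqrt d)$ and the hypothesis $\|\beta^*-\beta_\tau^*\|_2\lesssim \|u\|_2$ controls the last two terms by $\tau/4$, giving
\[
\{|\epsilon_i|\le 3\tau/4\}\subset \{|r_i^*-tv_i|\le \tau\}\qquad \text{for all } t\in[0,1].
\]
Summing the pointwise bound and invoking Assumption~\ref{ass:mineigen} yields
\[
\cL_\tau(\beta)-\cL_\tau(\beta_\tau^*)\ge \frac{1}{2n}\sum_{i=1}^n v_i^2\,\PP(|\epsilon_i|\le 3\tau/4)=\frac{1}{2}\,u^\T\Sigma_{3\tau/4}\,u\ge \frac{\rho_\tau}{2}\|u\|_2^2,
\]
where $\Sigma_{3\tau/4}=n^{-1}\sum_i x_ix_i^\T \PP(|\epsilon_i|\le 3\tau/4)$. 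This is the clean version of the target bound, up to the constant $1/2$ rather than $1/4$.

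The main obstacle, and where the unusual scaling hypotheses $\sqrt d/\tau^{3+\eta}\le c\|u\|_2$ and $\sqrt{d^3}/\tau^{5+\eta}\le c/\|u\|_2$ should enter, is the control of the lower-order error terms that I have suppressed in the sketch above. Two sources of slack have to be absorbed: (i) the first-order condition at $\beta_\tau^*$ is only approximate once one accounts for the jumps of $\psi_\tau$ at $\pm\tau$; and (ii) the pointwise comparison of $h_i''(tv_i)$ with $\PP(|\epsilon_i|\le 3\tau/4)$ is degraded near the set $\{|\epsilon_i|\approx\tau\}$. Both contributions are governed by Assumption~\ref{ass:anti}, which bounds the truncated $(4+\eta)$-moment $\EE[|\epsilon_i|^{4+\eta}1(\tau-\Delta\le|\epsilon_i|\le \tau+\Delta)]$ by $L_0\Delta/\tau$; the exponents $3+\eta$ and $5+\eta$ appearing in the stated scaling conditions are calibrated precisely so that the resulting remainder is of size at most $(\rho_\tau/4)\|u\|_2^2$, which is then absorbed into the leading quadratic term, downgrading the constant from $\rho_\tau/2$ to $\rho_\tau/4$ and delivering the stated bound.
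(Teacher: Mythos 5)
Your plan targets the same mechanism as the paper's proof: the quadratic main term $\tfrac12 u^\T\bigl(n^{-1}\sum_i x_ix_i^\T\,\EE 1(|\epsilon_i|\le 3\tau/4)\bigr)u\ge \tfrac{\rho_\tau}{2}\|u\|_2^2$ extracted via the radius condition $r_0=\tau/(4\sqrt d)$ and Assumption \ref{ass:design}, error terms supported on $\{|\epsilon_i|\approx\tau\}$ controlled by Assumption \ref{ass:anti}, and the two scaling conditions calibrated to absorb those errors into half of the main term. But the two central displays of your argument are not correct as written. First, $h_i''(v)=\PP(|r_i^*-v|\le\tau)$ is false: differentiating $\EE\,\psi_\tau(r-v)$ produces boundary terms, and (when $r$ has density $f$) the correct expression is $h_i''(v)=\PP(|r-v|\le\tau)-\tau\{f(v+\tau)+f(v-\tau)\}$, so $h_i$ need not be convex and the clean Taylor lower bound does not follow. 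Moreover, Assumption \ref{ass:anti} gives only an \emph{integrated} control of mass near $\pm\tau$ (and no density is assumed to exist), so a pointwise bound on $f(\pm\tau+v)$ is unavailable; you must keep the boundary contribution inside the integral remainder, where it is bounded by $|v|\,\tau\,\PP(\tau-|v|\le|r|\le\tau+|v|)\lesssim v^2/\tau^{4+\eta}$ and is then absorbable. Carrying this out essentially reproduces the paper's argument, which never differentiates: it writes $\ell_\tau(x+u)-\ell_\tau(x)-\tfrac12u^2 1(|x|\le\tau)$ pointwise as a linear term plus explicit remainders supported on $\{\tau-|u|\le|x|\le\tau+|u|\}$ and only then takes expectations.

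Second, the exact first-order condition $n^{-1}\sum_i\EE[\psi_\tau(r_i^*)]x_i=0$ presupposes differentiability of $\cL_\tau$ at $\beta_\tau^*$, which the stated assumptions do not guarantee (the kink of $\ell_\tau$ at $\pm\tau$ is smoothed out only if the laws of the $\epsilon_i$ put no mass there). The paper does not use optimality of $\beta_\tau^*$ at all: it splits $r_i^*=\epsilon_i+x_i^\T(\beta^*-\beta_\tau^*)$, kills the $\epsilon_i$ part of the linear term with $\EE\epsilon_i=0$ plus the tail bound $|\EE\epsilon_i 1(|\epsilon_i|>\tau)|\le m_{4+\eta}/\tau^{3+\eta}$, and controls the bias part using the hypothesis $\|\beta^*-\beta_\tau^*\|_2\lesssim\|\beta-\beta_\tau^*\|_2$ — which is precisely where that hypothesis and the exponent $3+\eta$ in the first scaling condition get consumed, a step your sketch leaves implicit. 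Neither issue is fatal, and your accounting of where Assumption \ref{ass:anti} and the scaling conditions enter is accurate in spirit, but repairing both steps essentially collapses your route into the paper's.
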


 Heuristically, Assumption \ref{ass:mineigen} can be easily satisfied as long as $n$ and $\tau$ are large enough. This is because $\rho_\tau\rightarrow \lambda_{\min}\left(n^{-1}\sum_{i=1}^nx_ix_i^\T\right)$ as $\tau\rightarrow \infty$.  Lemma \ref{lemma:sc} above indicates  that the expected ACLS loss is strongly convex in the growing region of $\BB_2(\beta_\tau^*, \tau/(4\sqrt{d}))$, and thus Assumption \ref{ass:inv} holds. The lemma also indicates, that when $\tau=\tau(n)$ diverges to infinity, the problem becomes more and more convex-like.  This justifies our intuition on the landscape of the loss function in the introduction and serves as an inspiration for a randomly initialized gradient descent algorithm for fast computation especially when the scale of the problem gets larger.

\section{Algorithm}\label{sec:5}

This section  addresses the computational aspect of the proposed capped least squares regression method.
Notably, the optimization problem in \eqref{eq:flathuber}  can be formulated as, for some sufficiently large $M>0$, 
\begin{align} \label{eq:QMIP}
\min \quad	& \frac{1}{2}\sum_{i=1}^n t_i^2 + \frac{1}{2}\tau^2 \sum_{i=1}^n  z_i \\ 
\textnormal{s.t.} \quad	& |y_i-{x}_i^\T \beta| \leq t_i + M z_i,  \nonumber \\
& z_i \in \{0,1\}, \qquad \qquad \qquad \qquad i = 1, \ldots, n, \nonumber
\end{align}
where $z_i$ is a  binary decision variable indicating whether observation $i$ is an outlier, $t_i$ can be viewed as the absolute residual of a non-contaminated sample,   and  $M > 0$ is an upper bound for all residuals.   To see the equivalence, first note that the objective function is decomposable. If $z_i = 1$, the inequality constraint holds for $t_i = 0$, and the contribution from observation $i$ to the cost function is $\tau^2/2$; otherwise if $z_i = 0$, the smallest $t_i$ is attained at $|{x}_i^\T \beta - y_i|$, thereby contributing $({x}_i^\T \beta - y_i)^2$ to the cost function. Formulation \eqref{eq:flathuber} is often referred to as the big-$M$ formulation  \citep{Griva2008LinearAN}. It  is a quadratic mixed integer program (QMIP), and can be readily solved  using  \texttt{IBM ILOG CPLEX Optimization Studio}, or  \texttt{CPLEX} for short.

The QMIP problem \eqref{eq:QMIP} can be solved efficiently for small-scale problems. The computational complexity grows exponentially with the sample size. Nevertheless, due to the increasing resistance parameter, the quadratic component also with the sample size and thus the loss function becomes more convex-like. Intuitively, a randomly initialized first-order algorithm will more likely be able to find the global optima because a random initialization has a growing chance to fall in the strongly convex region as $n$ increases.  Another option is to run \texttt{CPLEX} on a smaller sub-sampled dataset to provide a coarse initialization, followed by a first-order algorithm. 

In the following, we describe a randomized gradient descent algorithm starting at iteration 0 a random initialization $\beta^0\sim \textnormal{Unif}(\BB_2(\tau))$, where $\textnormal{Unif}(\BB_2(\tau))$ is a uniform distribution on the $\ell_2$-ball $\BB_2(\tau)=\{x: \|x\|_2\leq \tau\}$. At iteration $k=0,1,2,\ldots$, we define the update
\begin{equation} \label{eq:GD}
\beta^{k+1}=g(\beta^{k},\eta_k):=\beta^{k}-\eta_k\nabla \cL_{n,\tau}(\beta^{k}),
\end{equation}
where $\eta_k>0$ is the step size or learning rate. We adopt  an inexact line search method to search for the best possible $\eta_k$.  This method starts from a  small step size $\eta_0$, say $0.001$, successively inflates it by a factor of $\gamma_u>1$, say $2$, and computes the corresponding gradient descent update until the loss function is no longer decreasing. Once stopped, we record the step size as $\eta_k$ and compute the ensuing gradient descent update $\beta^{k+1}$. We then repeat the gradient descent update \eqref{eq:GD} until convergence, that is, until $\|\beta^{k+1}-\beta^{k}\|_2\leq \varepsilon_\opt$ for some pre-specified optimization error $\varepsilon_\opt$. We summarize the details in Algorithm \ref{alg:RGD}.

\begin{algorithm}[t]
	\caption{A randomized gradient descent (RGD) algorithm for solving problem \eqref{eq:flathuber}.}\label{alg:RGD}
	\begin{algorithmic}[1]
		\STATE{\textbf{Algorithm}: $\widehat\beta_\tau \leftarrow \text{RGD}\big(\{(x_i, y_i) \}_{i=1}^n,  \beta^0,\tau, \eta_0, \gamma_u,\varepsilon_{\opt}\big) $}\\
		\STATE{\textbf{Initialize}: $\beta^0\sim \textnormal{Unif}(\BB_2(\tau))$}
		\STATE{\textbf{Input}: $\tau, \eta_0, \gamma_u>0$}
		\STATE{\bf for $k=0,1, \ldots$}  $\textbf{until} \; \| \beta^{k+1}-\beta^{k}\|_2\leq \varepsilon_\opt\;{\bf do}$
		\begin{flalign*}
		&m\leftarrow 0\\
		&\;\;\;\;\;\;\;\;\textbf{while}\;\cL_{n,\tau}(g(\beta^k,\gamma_u^{m}\eta_0))>\cL_{n,\tau}(g(\beta^k,\gamma_u^{m+1}\eta_0))\; {\bf do }\\
		&\;\;\;\;\;\;\;\;m \leftarrow m+1\\
		&\;\;\;\;\;\;\;\;\textbf{end while}		\\
		&\eta_k=\gamma_u^m   \\
		&\beta^{k+1}\leftarrow g(\beta^{k},\eta_k)
		\end{flalign*}
		\hspace{10pt}{\bf end for}
		\STATE{{\bf Output}: $\widehat\beta_\tau=\beta^{k+1}$}
	\end{algorithmic}
\end{algorithm}

\section{Numerical Studies}\label{sec:6}

This section assesses numerically the finite sample performance of the proposed method  under various settings. 
The landscape of the empirical loss function under different contamination models is also examined.

\subsection{Finite Sample Performance}

In the following numerical studies, we set the sample size  $n=50$, dimension $d=6$, and generate uncontaminated data from
\#\label{eq:truemodel}
y_i=\alpha^*+x_i^\T \theta^* +\epsilon_i, \ \ i=1,\ldots, n,
\#
where $\alpha^*$ is the intercept, $x_i$'s are independently and identically distributed (i.i.d.) as $\cN(0,\Sigma)$ with $\Sigma=(0.5^{|j-k|})\in\RR^{5\times 5}$,  and $\epsilon_i$'s are i.i.d. $\cN(0,1)$ random errors. We take $\beta^*=(\alpha^*, (\theta^*)^\T)^\T =(0,3,4,1,2,0)^\T $. In the simulations, we only have accesss to a contaminated dataset  
under one of the following three scenarios:
\begin{enumerate}
	\item Scenario 1: The data are clean data $\{(y_i, x_i) \}_{i=1}^n$ with no outliers.
	
	\item Scenario 2: There are outliers only in the response space ($y$-outliers). Specifically, we generate contaminated random errors from a mixture of normal distribution
	\$
	(1-\lambda) \cN(0,1)+\lambda \cN (a,1),
	\$
	where $\lambda$ indicates the the contamination proportion and $\cN (a,1)$ is the distribution of outliers. We take $\lambda=10\%$ and $a\in \{ 10, 20, \ldots, 100\}$ in the contaminated samples.
	

	\item Scenario 3: There are outliers in both the response and predictors ($y$-outliers and $x$-outliers).  
	In the linear model \eqref{eq:truemodel}, we first generate contaminated random errors from a mixture of normal distribution 
	\$
	(1-\lambda) \cN (0,1)+\lambda \cN (a,1),
	\$ 
	where $\lambda=10\%$ and $a\in \{ 10, 20, \ldots, 100\}$.  
	We then add a random perturbation vector  $z_i \sim \cN(a1_{d-1},I_{d-1} )$
	to each covariate $x_i$ in the contaminated samples. 
\end{enumerate}

In all three scenarios, we compute the adaptive capped least squares estimator via three different algorithms.  The first one uses gradient descent with random initialization as described in Algorithm \ref{alg:RGD}. In each run, we randomly initialize and then run the algorithm $200$  times, and pick the estimator with the smallest capped least squares loss. We name the first method as ACLS.
The second one uses  a more carefully-designed initialization: it first runs \texttt{CPLEX} on subsamples of size $\lceil 0.3 n \rceil$ $10$ times to compute a coarse estimate that has the smallest capped least squares loss, and then runs gradient descent initialized from this estimate. 
We call the second method as ACLS-hybrid or ACLS-h for short. 
The third one runs \texttt{CPLEX} on the full dataset. This method, denoted by ACLS-C, serves as a benchmark.  For all implementations, we set $\tau=\sqrt{n}/\log\log n\gtrsim n^{1/(2+2\eta)}$ as implied by the theory. We take $M=10^4$ in \eqref{eq:QMIP}, and $\alpha=2, \eta_0=0.001$ in Algorithm \ref{alg:RGD}.



We compare proposed estimators, ACLS, ACLS-h and ACLS-C, with three existing methods: the ordinary least squares (OLS) estimator, the least trimmed squares (LTS) estimator,  and adaptive Huber regression (AHR) estimator. These three estimators are given, respectively, by
\begin{gather*}
\widehat \beta_{\text{OLS}}=\argmin_{\beta\in \RR^d}  \sum_{i=1}^n  (y_i-x_i^{\T}\beta)^2/2,~  \widehat \beta_{\text{LTS}}
=\argmin_{\beta\in \RR^d} \ \sum_{i=1}^h  r^2(\beta; x_i,y_i )_{(i)},\\
\widehat \beta_{\text{AHR}}=\argmin_{\beta\in \RR^d} \sum_{i=1}^n  \ell_{\tau}^{{\rm H}}(\beta; x_i,y_i ),
\end{gather*}
where $r^2(\beta; x_i,y_i )_{(i)}$ is the $i$-th order statistic of the squared residuals $r^2(\beta; x_i,y_i )_i=(y_i-x_i^{\T}\beta)^2$, $h$ is the number of residuals used, and $ \ell_{\tau}^{{\rm H}}(\beta; x_i,y_i )$ is the Huber loss \citep{huber1973robust}.    

The LTS estimator is computed by the {FAST-LTS} algorithm \citep{rousseeuw1999fast}, implemented in the R package \texttt{robustbase}, with $h={(n+d+1)}/{2}$.  The AHR estimator is calculated using the iteratively reweighted least square algorithm, with the same robustification parameter $\tau=\sqrt{n}/\log{\log{n}}$ as that for ACLS.


All the above estimation procedures are repeated $100$ times for $100$ randomly generated datasets. We record  the median of mean square errors (MSEs), defined as $\|\hat\beta -\beta^{*} \|_2^2$, and average CPU time (in seconds)  under all three scenarios, as well as the median of the standard deviations under the first scenario. The residual variance $\sigma^2$ is estimated by $(1/n)\sum_{i=1}^n \widehat{\sigma_i}^2 $, where $\widehat{\sigma}_i^2=(y_i-x_i^{\T}\hat \beta_{{\rm OLS}})^2$. Figures \ref{fig:1} and \ref{fig:2} show how MSEs change as the outlier mean $a$ increases in Scenario 2 and Scenario 3, respectively. Table \ref{table:Scenario1} collects the results for Scenarios $1$ -- $3$, respectively, where  the outlier mean is taken to be $a=50$ in Scenario 2, while $a=100$ in Scenario 3.

In Scenario 1, all estimators other than LTS achieve competitive mean square errors, while LTS has a slightly higher median MSE possibly due to fact that the LTS estimator has a relatively low ($7\%$) efficiency. In Scenario 2, both OLS's  and AHR's performances get worse rapidly as the outlier mean $a$ increases, while all other estimators achieve steady and satisfactory performances.  In both Scenario 2 and Scenario 3, all three of our proposed  estimators outperform LTS uniformly for every $a$. This  demonstrates the high efficiency and resistance of our proposed estimators.  

{We also compare the average CPU time among ACLS, ACLS-h and ACLS-C. The results are summarized in Table \ref{table:Time}. On a laptop with a 2.9 GHz Core6 Duo processor and 32 GB RAM,  the average CPU time for ACLS and ACLS-h implemented by our R-package is computed for one random initialization per iteration.} 

\begin{figure}[t]	
	\includegraphics[width=\textwidth]{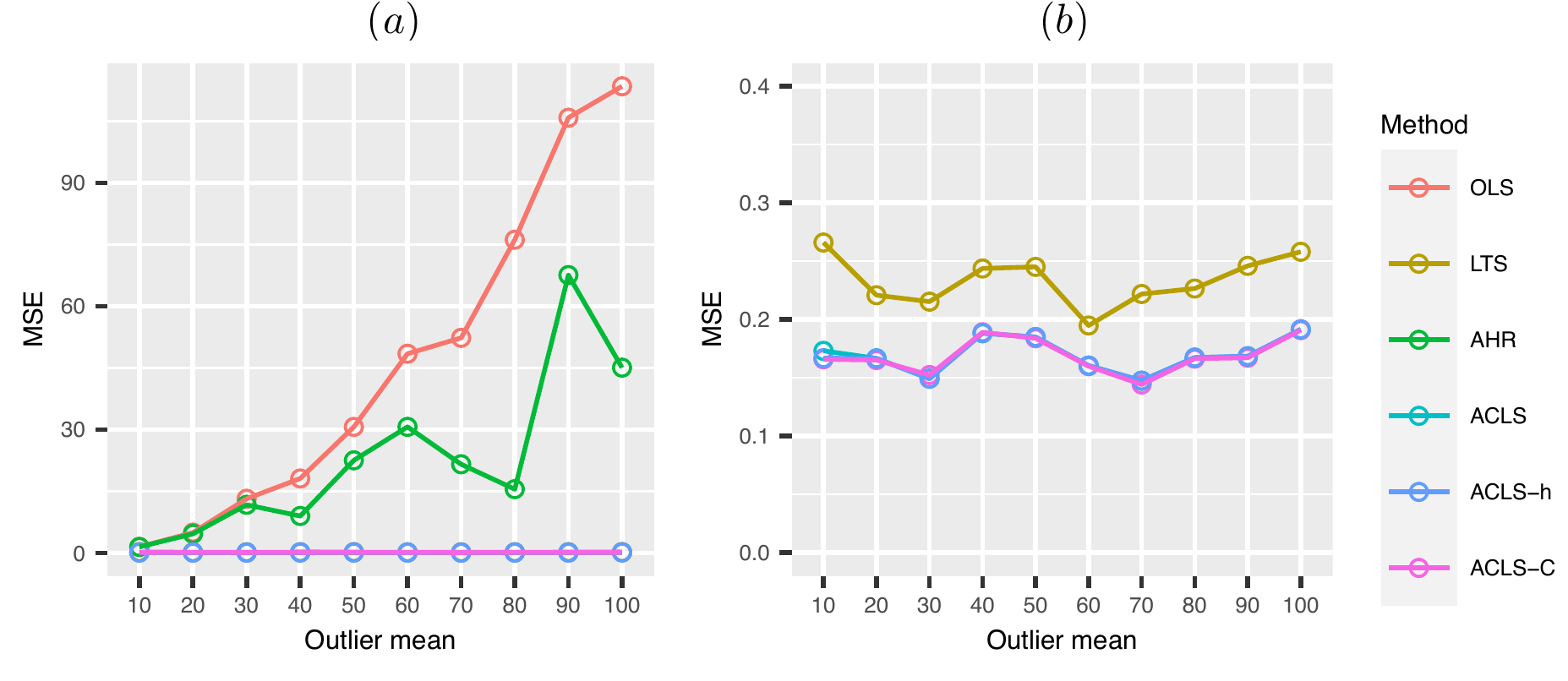}
	\caption{(a) Scenario 2:  median of mean square errors vs the outlier mean $a$ for all estimators. (b) Scenario 2: median of mean square errors vs the outlier mean $a$ for  LTS, ACLS, ACLS-h and ACLS-C.}\label{fig:1}
\end{figure}

\begin{figure}[h]	
	\includegraphics[width=\textwidth]{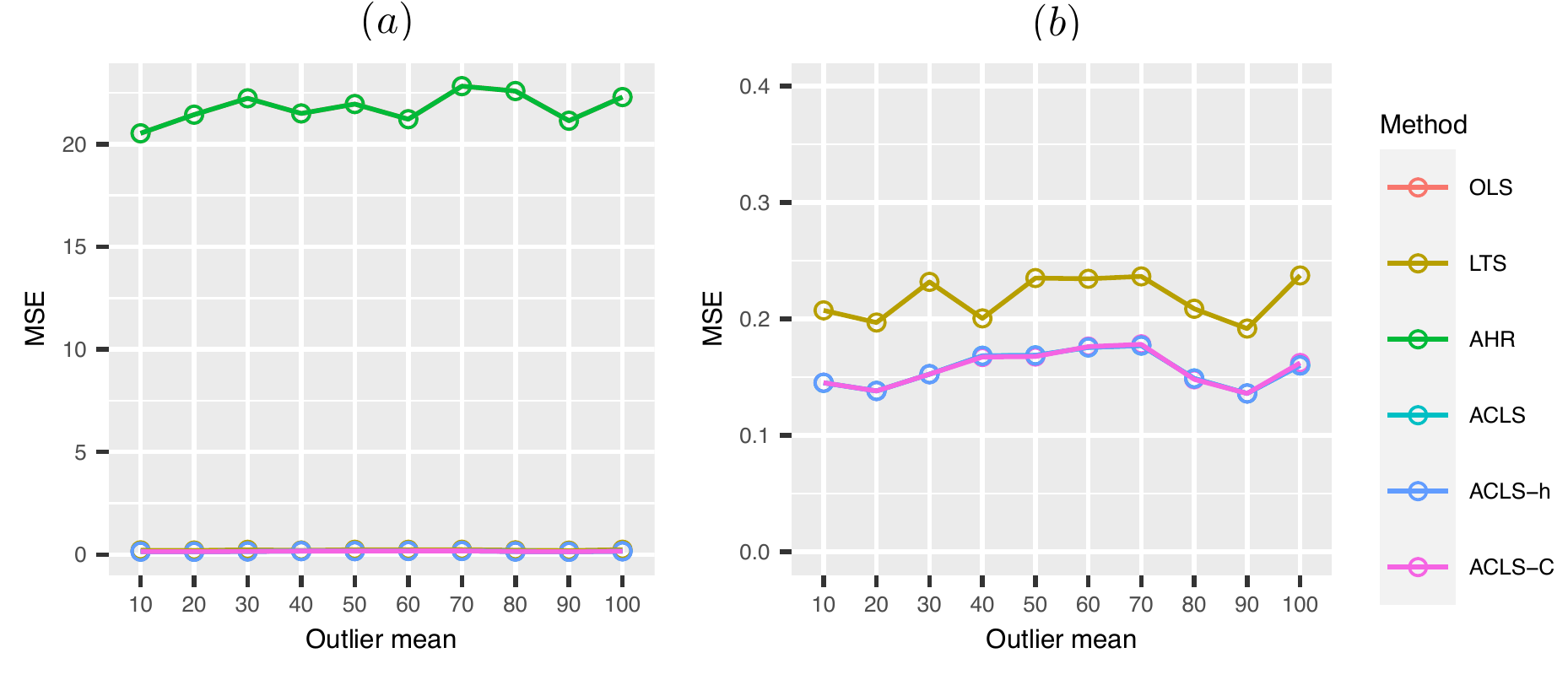}
	\caption{(a) Scenario 3:  median of mean square errors vs the outlier mean $a$ for all estimators. (b) Scenario 3: median of mean square errors vs the outlier mean $a$ for  LTS, ACLS, ACLS-h and ACLS-C.}\label{fig:2}
\end{figure}

\begin{table}[H]
	\center
	\begin{threeparttable}		
		\caption{{Simulation results for Scenario 1-3.}}
		\begin{tabular}[c]{lccccccc}
			
			& &OLS & AHR & LTS  &   ACLS&ACLS-h&  ACLS-C  \\
			MMSE & S1& 0.1302&  0.1302& 0.2315 & 0.1302&0.1303&0.1306\\ 
			& S2 ($a=50$) &30.6784 &  22.5611 & 0.2451 & 0.1848&0.1849 &0.1838\\ 
			&S3 ($a=100$) & 22.2955 &  22.2955& 0.2374  & 0.1601&0.1600 &0.1622\\ 
			SD  & S1 & 0.8453 & 0.8453  & 0.9417 & 0.8453&0.8453&0.8453  \\ 
		\end{tabular}	
		\label{table:Scenario1}
		\begin{tablenotes}[flushleft]
			\footnotesize
			\item  MMSE,  median mean square error; S1, Scenario 1; S2, Scenario 2; S3, Scenario 3. 
		\end{tablenotes}
	\end{threeparttable}
\end{table}

\begin{table}[H]
	\center
	\begin{threeparttable}		
		\caption{{Average CPU time (in seconds) for Scenario 1-3.}}
		\begin{tabular}[c]{lccc}
			
			   & ACLS&ACLS-h&  ACLS-C  \\
		     S1&  0.1472&0.1480&2.0134\\ 
			 S2 ($a=50$) &0.4139  &0.4322&7.9087\\
			 S3 ($a=100$) &0.4811 &0.5590 &4.7098\\ 
		\end{tabular}	
	\label{table:Time}
		\begin{tablenotes}[flushleft]
			\footnotesize
			\item S1, Scenario 1; S2, Scenario 2; S3, Scenario 3.
		\end{tablenotes}
	\end{threeparttable}
\end{table}




\subsection{The Landscape of Loss Functions}\label{sec:6.2}

In this section, we visualize the adaptive capped squares loss in the univariate case with one covariate. As expected, the empirical loss is more convex-like as the sample size increases.
To be specific, we set the true coefficient $\beta^*=5$, and plot the empirical loss $\beta \mapsto  \sum_{i=1}^n \ell_{\tau}(y_i-x_i^{\T}\beta)/n$
with  $\tau=\sqrt{n}/\log\log n$ in the following four cases: 
\begin{enumerate}
	\item Case 1: There are no outliers as in Scenario 1;
	\item Case 2: There are only $y$-outliers as in Scenario 2 with $a=10$;
	\item Case 3: There are both $y$- and $x$-outliers occur as in Scenario 3 with $a=10$, and the contaminated sample size is fixed at $n_1=5$;
	\item Case 4: There are both $y$- and $x$-outliers occur as in Scenario 3 with $a=10$.
\end{enumerate}


As shown by Figures~\ref{fig:loss1} -- \ref{fig:loss4}, the empirical capped squares loss function is locally strongly convex around the global optimum. In Case 1, as $n$ increases the convex region grows and the global optimum is getting closer to $\beta^*=5$. Figure~\ref{fig:loss3} shows that although there exists a local minimum when $n$ is small, the loss becomes more convex-like as $n$ increases, and finally this local minimum diminishes. An interesting phenomenon shown in Figure~\ref{fig:landscape} is that this local minimum corresponds to the OLS and the Huber estimator. In other words, both the OLS and the Huber estimators are sensitive to $x$-outliers. 
In Cases 2 and 4 where the number of outliers also grows, there exists a local minimum around $\beta=1$ when $n$ is small. When $n$ is large,  the influence of outliers starts to prevail, and the global optimum is shifted to somewhere around $\beta=1$.
In all four cases, the loss function becomes more convex so that any  first-order algorithm can identify the global optimum unless initialized very far away, namely, outside  the  ball  $\BB_2(\tau)$.  

\begin{figure}[H]
	
	\includegraphics[width=\textwidth]{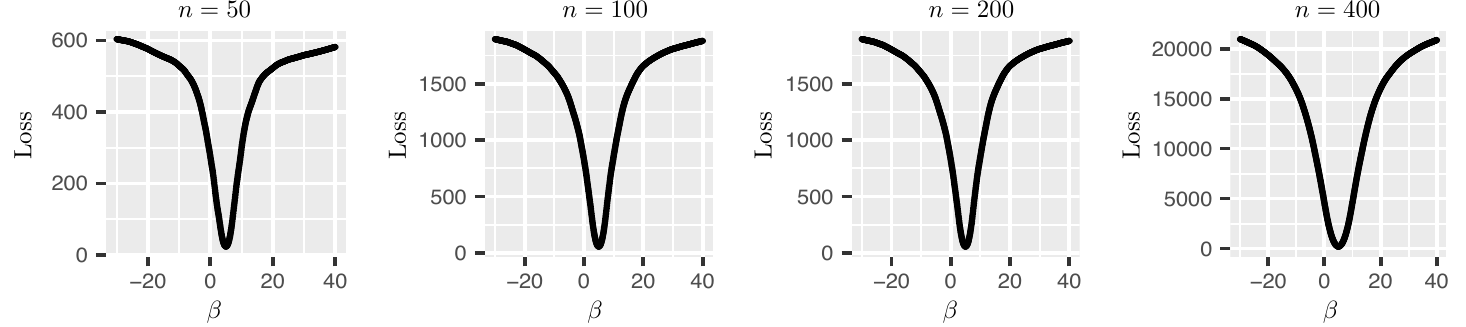}
	\caption{Case 1: Landscape for the adaptive capped least squares loss with (a) $n=50$, (b) $n=100$, (c) $n=200$, (d) $n=400$.}\label{fig:loss1}
\end{figure}

\begin{figure}[H]
	
	\includegraphics[width=\textwidth]{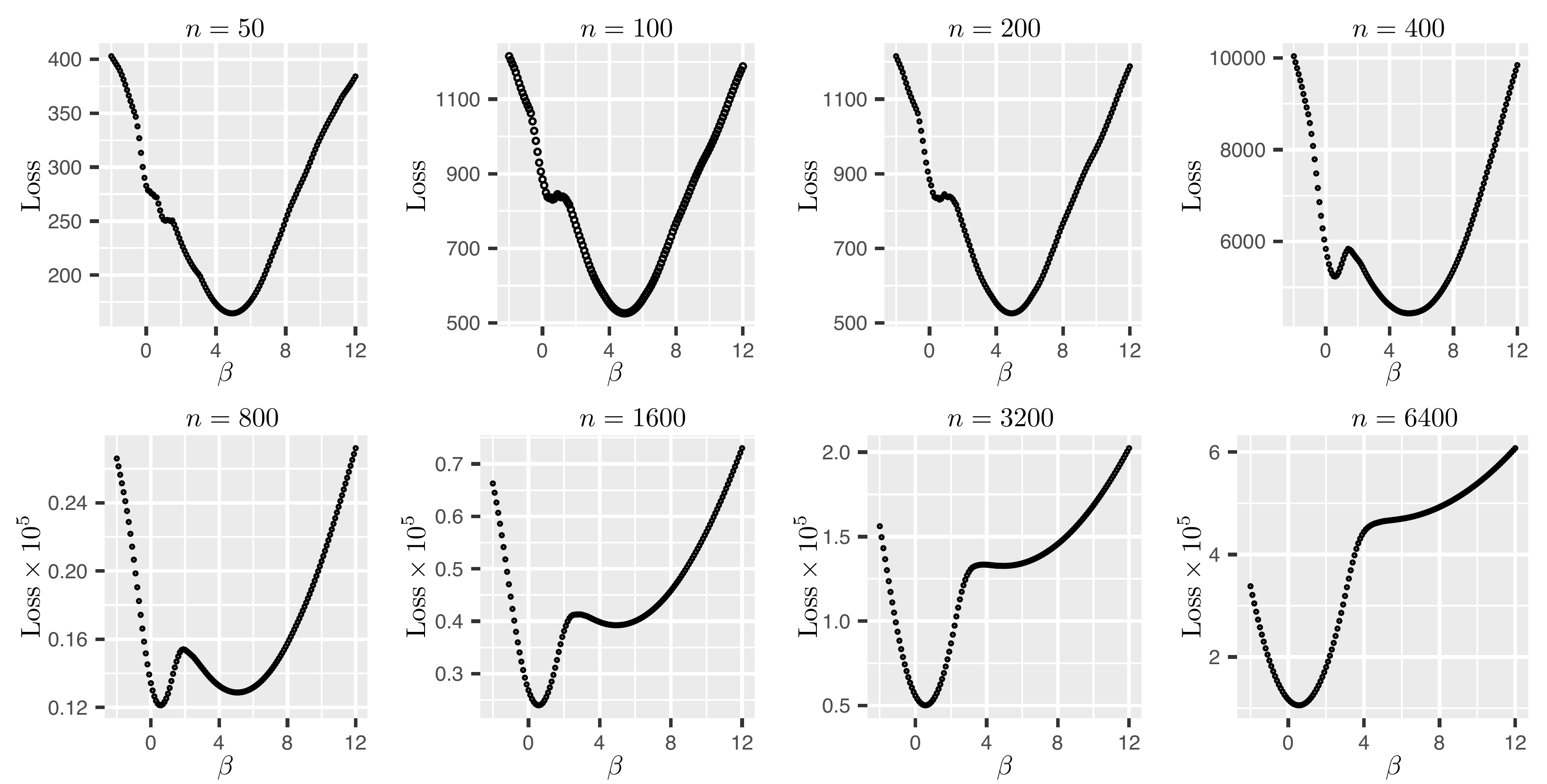}
	\caption{Case 2: Landscape for the adaptive capped least squares loss in Scenario 2. }\label{fig:loss2}
	
\end{figure}

\begin{figure}[H]
	
	\includegraphics[width=\textwidth]{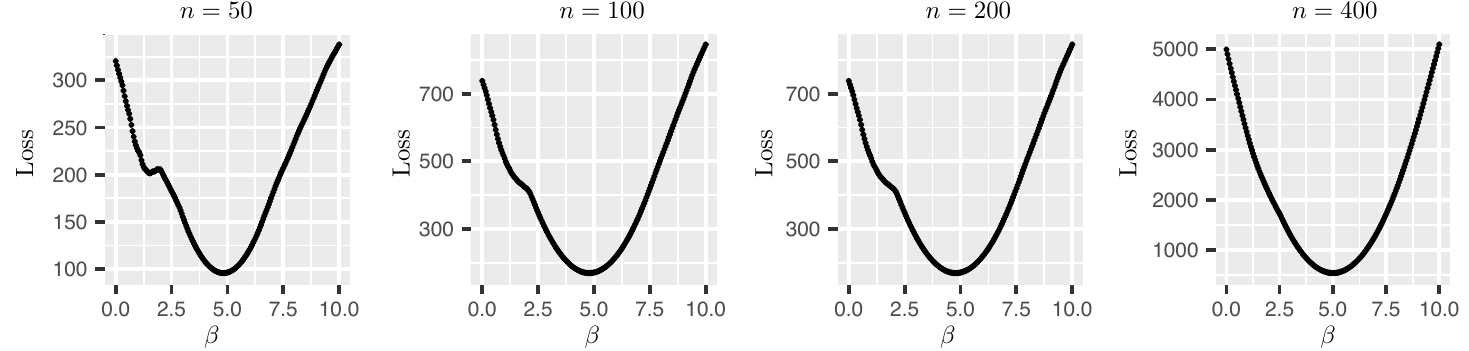}
	\caption{Case 3: Landscape for the adaptive capped least squares loss with (a) $n=50$, (b) $n=100$, (c) $n=200$, (d) $n=400$. }\label{fig:loss3}
	
\end{figure}
\begin{figure}[h]	
	\centering
	\includegraphics[width=0.5\textwidth]{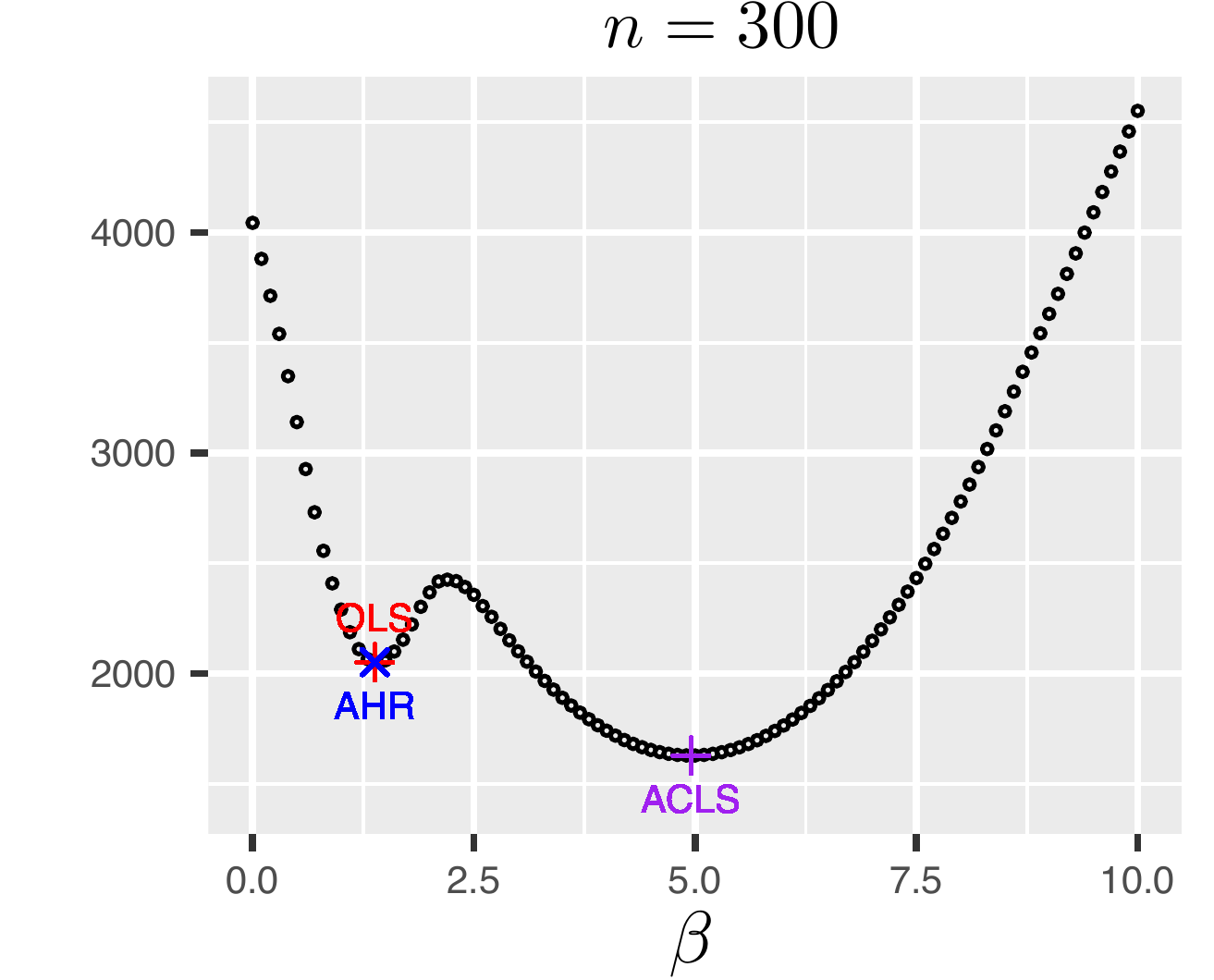}
	\caption{Case 4: OLS and AHR estimators fall in the local local minimum of the adaptive capped least squares loss with $n=300$.}\label{fig:landscape}
\end{figure}

\begin{figure}[H]
	
	\includegraphics[width=\textwidth]{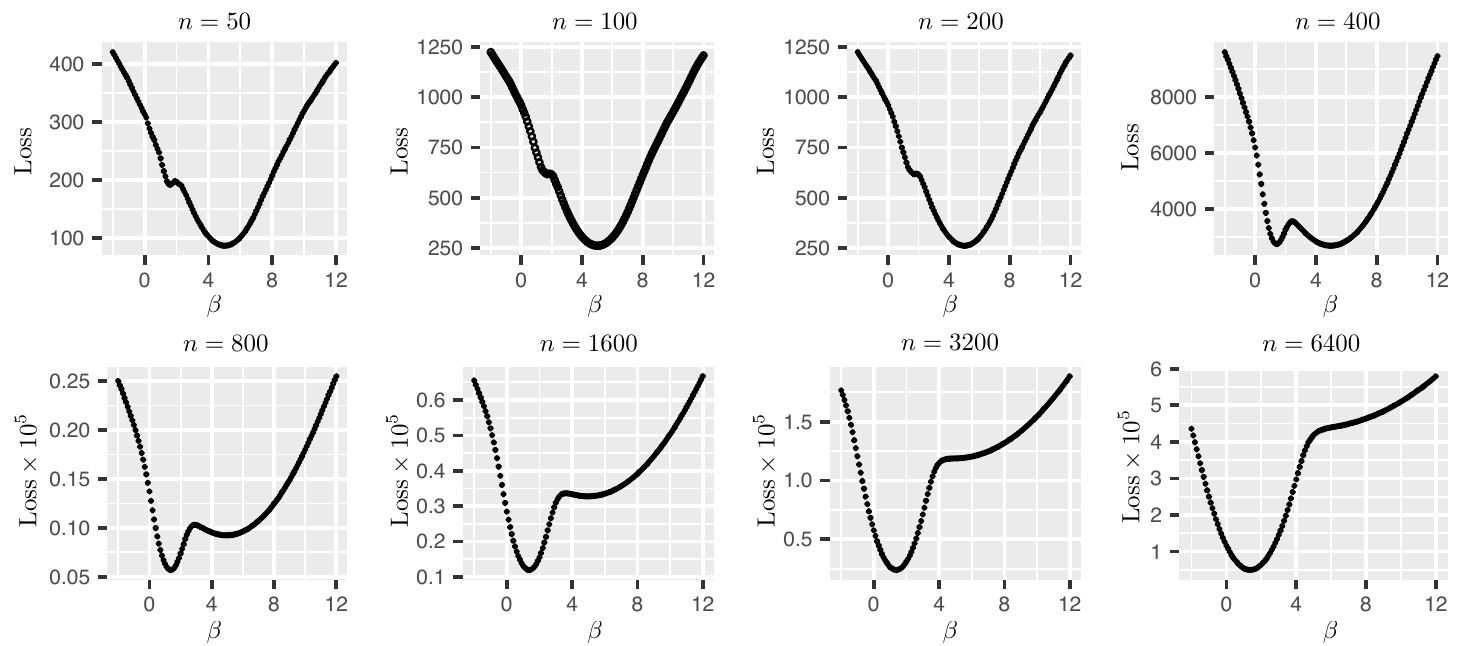}
	\caption{Case 4: Landscape for the adaptive capped least squares loss.}\label{fig:loss4}
	
\end{figure}

\section{Real Data Applications}\label{sec:7}
\subsection{NCI-60 Cancer Cell Lines}\label{sec:7.1}
We apply the proposed method to the NCI-60, a panel of 60 diverse human cancer cell lines. We use two  NCI-60 transcript profile datasets, the gene expression dataset and the protein profile dataset. Both datasets can be downloaded from http://discover.nci.nih.gov/cellminer/. The gene expression data were obtained on Affymetrix HG-U133(A-B) chips, 
and normalized using the guanine
cytosine robust multi-array analysis \citep{wu2004A}. The protein profile data based on $162$ antibodies were obtained on reverse-phase protein lysate arrays. One observation had to be removed since all values were missing in the gene expression data, reducing the number of observations to $n=59$. We center all the protein and the gene expression variables to have mean zero. 

We pick the KRT19 antibody, which has the largest standard deviation among $162$ antibodies, as the dependent variable.  The KRT19 antibody, a type I keratin, also known as Cyfra 21-1, is encoded by the KRT19 gene. Due to its high sensitivity, the KRT19 antibody is the most used biomarker for the tumor cells disseminated in lymph nodes, peripheral blood, and bone marrow of breast cancer patients \citep{nakata2004serum}. \cite{sun2019adaptive} has identified seven genes, i.e., \textit{MT1E}, \textit{ARHGAP29}, \textit{MALL}, \textit{ANXA3}, \textit{MAL2}, \textit{BAMBI} and \textit{KRT19}, that were possibly associated with the KRT19 antibody. We use these seven genes as predictors.

\begin{figure}[H]
	\centering
	\includegraphics[width=2.3in]{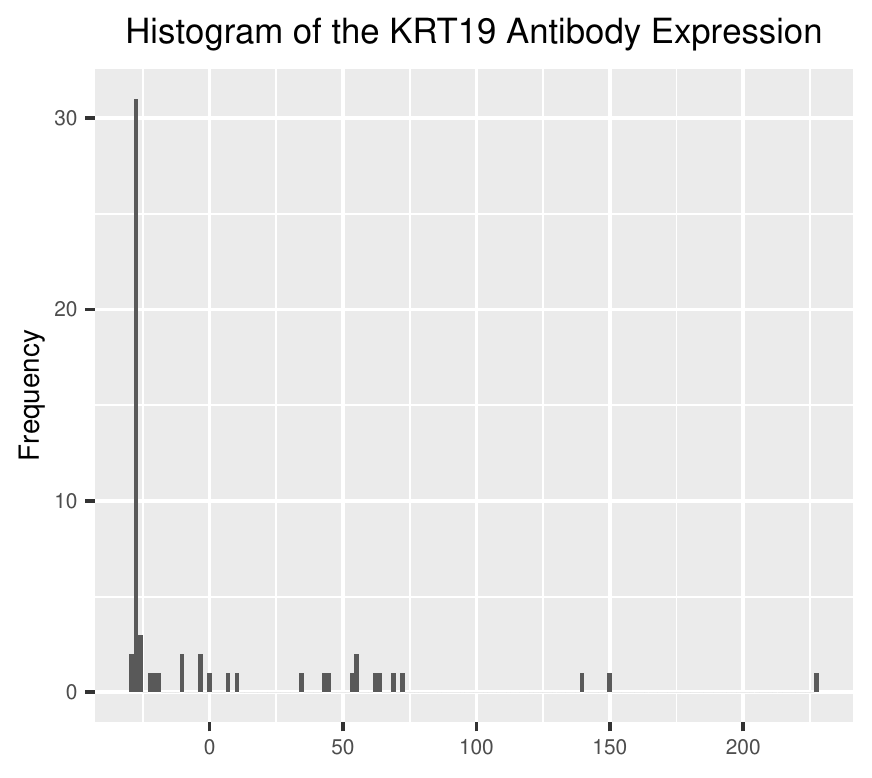}	
    \includegraphics[width=3.1in]{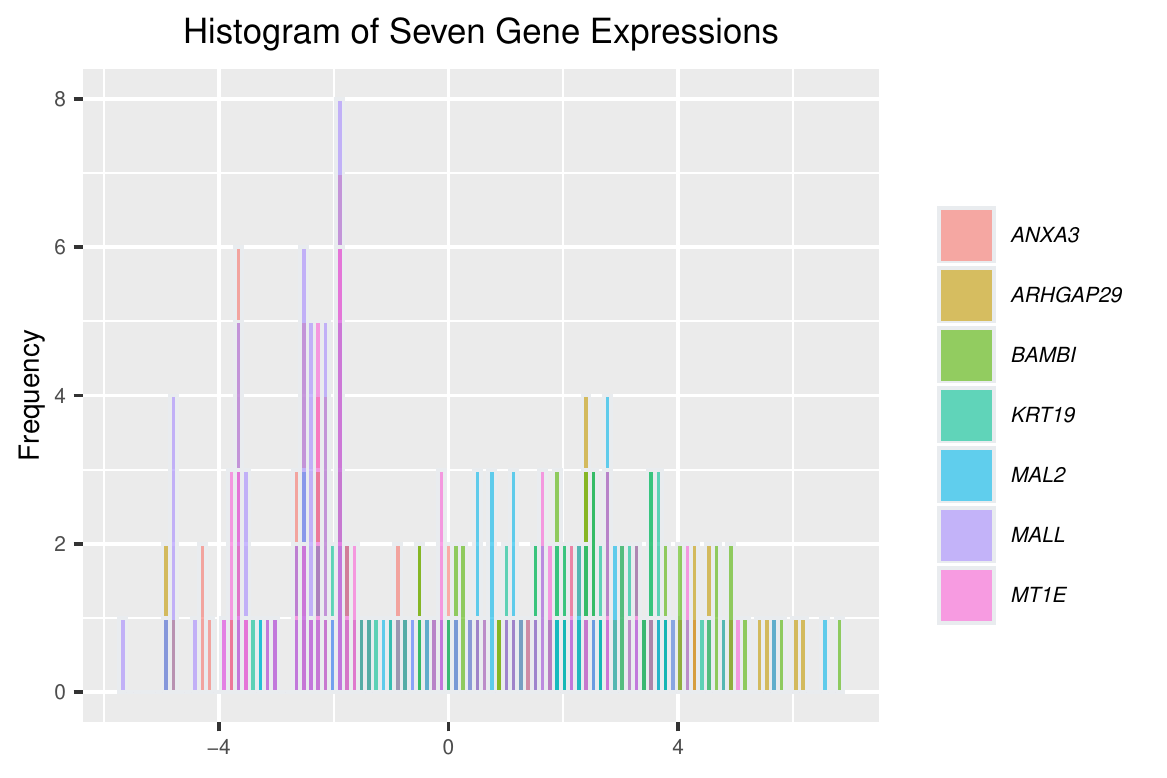}
	\caption{Histograms of the KRT19 antibody expression levels and gene expression levels.}\label{fig:hist}
\end{figure}

We first plot the histograms of the KRT19 antibody expression levels and seven gene expression levels in Figure \ref{fig:hist}. The histograms show that the distributions are asymmetric and there are possible outliers in the protein expression data. This could make results based on non-robust methods invalid. Therefore, we apply our methods to examine the predictive performance and statistical significance of the seven  genes on predicting the KRT19 antibody. We compare our methods with OLS, AHR and LTS, and report mean absolute prediction errors (MAPEs), the coefficient estimates and the corresponding $p$-values. The MAPE for  $\widehat\beta$ is defined as ${n}^{-1}\sum_{i=1}^n|y_i-x_i^{\T}\hat{\beta}|$.    For both our methods and AHR, we set the resistant and robustification parameter as $\tau=\sqrt{59}/\log\log 59$. For ACLS, we run Algorithm  \ref{alg:RGD} 100 times 
and pick the estimator with the smallest adaptive resistant loss. To initialize ACLS-h, we run \texttt{CPLEX} on a subsample of size $\lceil0.5 \times 59 \rceil$, followed by gradient descent on the whole data. We then repeat this process $50$ times and pick the estimator with the smallest adaptive resistant loss.  We report the average performance of ACLS and ACLS-h from $20$ experiments.

Compared to  ACLS-C who achieved  loss $5.77$, ACLS and ACLS-h achieved losses of $5.77$ and  $5.86$, which are close to the global optimal loss value. This demonstrates the effectiveness of random initialization in ACLS and  subsampled initialization in ACLS-h.  
The MAPEs on the whole dataset are not representative due to the existence of possible outliers. {Thus, we calculate MAPEs from a benign subsample of data points, obtained by removing those data with $y$-outliers outside the $25$th and $75$th quantiles.} The results are collected  in Table \ref{table:7genespartest}, and  possibly imply that our methods and  LTS have favorable  predictive performance compared with OLS and AHR. 


\begin{table}[h]
	\center
	\begin{threeparttable}[flushleft]
		\caption{{MAPEs of six methods on a subsample of good data points obtained by removing $y$-outliers outside the $25$th and $75$th quantiles.}}
		\begin{tabular}[c]{ccccccccc}
			& OLS & LTS & AHR &  ACLS&ACLS-h &ACLS-C\\
			MAPE & 18.29 &	8.80 &	18.09	 &	8.78 &8.98&8.78\\
		\end{tabular}	
		\label{table:7genespartest}
	\end{threeparttable}
\end{table}


Table \ref{table:pvalue} collects the estimates and the corresponding $p$-values. For ACLS and ACLS-h, the median estimates and the corresponding $p$-values, out of $20$ experiments, are reported.   
The $p$-values are computed using the asymptotic normal distributions, according to Theorem \ref{thm:asyn}. To compute the $p$-values, 
we use the finite sample estimators of $\sigma^2$ and $\Sigma_\tau$,
\$
\widehat{\sigma}^2=\frac{1}{n_{e}}\sum_i (y_i-x_i^{\T}\hat{\beta}_{\tau})^21(|y_i-x_i^{\T}\hat{\beta}_{\tau}|\leq \tau), ~
\widehat{\Sigma}_{\tau}=\frac{1}{n_{e}}\sum_i 1(|y_i-x_i^{\T}\hat{\beta}_{\tau}|\leq \tau)x_i x_i^{\T},
\$
 where $n_{e}$ is the number of samples whose absolute residuals are smaller than $\tau$.

The results indicate that \textit{MALL}, \textit{ANXA3}, \textit{MAL2}, \textit{BAMBI} and \textit{KRT19} are statistically significant in predicting the KRT19 antibody expression based on the $p$-values of ACLS, ACLS-h and ACLS-C. 
\textit{ANXA3} is shown to be upregulated in breast cancer tissues and correlated with poor overall survival \citep{du2018downregulation}. It has been reported that \textit{MAL2} promotes proliferation, migration, and invasion through regulating epithelial-mesenchymal transition in breast cancer cell lines \citep{2018MAL2}. 
\cite{shangguan2012inhibition} has reported that \textit{BAMBI} transduction disrupted the cytokine network mediating the interaction between mesenchymal stem cells and breast cancer cells. Consequently, \textit{BAMBI} transduction abolished protumor effects of bone marrow mesenchymal stem cells in vitro and in an orthotopic breast cancer xenograft model, and instead significantly inhibited growth and metastasis of {coinoculated cancer}.
All three implementations of our method have identified \textit{MALL} as a statistically significant gene in predicting KRT19 antibody expression and thus is possibly associated with breast cancer. This has not been reported by previous studies and is worth further investigation. 


\begin{table}[H]
	\center
	\begin{threeparttable}[flushleft]
		\caption{{Estimates and the corresponding $p$-values calculated by six methods, where $p$-values are computed using the asymptotic normal distributions.}}
		\begin{tabular}[c]{lccccccc}
			Genes&\textit{MT1E}&\textit{ARHGAP29}&\textit{MALL}&\textit{ANXA3}&\textit{MAL2}&\textit{BAMBI}&\textit{KRT19}\\
			$\widehat{\beta}_{\text{OLS}}$ & 4.22 & -0.32 & -0.13 &  1.48&5.36 &-1.88&4.89\\
			$p\text{ value}$& 0.17 &	0.89 &	0.96 &	0.61 &0.06&0.42&0.32\\
			$\widehat{\beta}_{\text{AHR}}$ & 4.25 & -0.42 & 0.05 &  1.37&5.19 &-1.96&4.96\\
			$p\text{ value}$& $0.00^*$ &	0.12 &	0.86 &	$0.00^*$ &$0.00^*$&$0.00^*$&$0.00^*$\\
			$\widehat{\beta}_{\text{LTS}}$ & -0.12 & 0.01 & -0.01 &  0.03&-0.14 &-0.11&0.03\\
			$p\text{ value}$& 0.12 &	0.91 &	0.91 &	0.59 &0.25&$0.03^*$&0.84\\
			$\widehat{\beta}_{\text{ACLS}}$& -0.09& 0.05 & -0.51 &  0.21&0.47 &-0.16&-0.32\\
			$p$ \text{value}& 0.32 &	0.42 &	$0.00^*$ &	$0.01^*$ &$0.00^*$&$0.02^*$&$0.03^*$\\
			$\widehat{\beta}_{\text{ACLS-h}}$& -0.09& 0.05 & -0.51 &  0.21&0.47 &-0.16&-0.32\\
			$p$ \text{value}& 0.32 &	0.42 &	$0.00^*$ &	$0.01^*$ &$0.00^*$&$0.02^*$&$0.03^*$\\
			$\widehat{\beta}_{\text{ACLS-C}}$& -0.09& 0.05 & -0.51 &  0.21&0.47 &-0.16&-0.32\\
			$p\text{ value}$& 0.32 &	0.42 &	$0.00^*$ &	$0.01^*$ &$0.00^*$&$0.02^*$&$0.03^*$\\
			Reference &&&&$1$&$2$&$3$
		\end{tabular}	
		\label{table:pvalue}
		\begin{tablenotes}[flushleft]
			\footnotesize
			\item $1$, \cite{du2018downregulation}; $2$, \cite{2018MAL2}; $3$, \cite{shangguan2012inhibition}.
		\end{tablenotes}
	\end{threeparttable}
\end{table}

\subsection{Background Recovery in Video Surveillance}\label{sec:7.2}

We examine the proposed method on the video surveillance dataset from \cite{li2004statistical}. This dataset consists of $n=1546$ frames  from a lobby in an office building with illumination changes (switching on/off lights). All frames have resolution $p=128 \times 160=20,480$. {We first convert all frames to gray scale and then  stack each frame as a column of the matrix $(y_1,\ldots, y_n) \in \mathbb{R}^{20,480 \times 1546}$.}  The stationary background can often be modeled by a low rank matrix while the moving foreground items are often treated as outliers. We apply our methods to recovery the stationary background. 

To utilize the low rank structure of the stationary background, we model each background frame, denoted by $z_i$, as
\$
z_i=m+ Us_i +\epsilon_i, ~1\leq i \leq n,
\$
where $m$ is the mean vector, $U\in\RR^{p\times q}$ is the orthonormal basis matrix spanning a $q$-dimensional space, $s_i$ is the coefficient vector, and $\epsilon_i$ is the noise. The observed video frames  $y_i$ can be seen as contaminated versions of $z_i$. To introduce resistance to outliers, we use the adaptive capped least squares regression to estimate the unknowns by solving the following optimization problem
\#\label{eq:ARR_original}
{ \big\{\hat m, \hat U, \{\hat s_i\}_{i=1}^n\big\} =\argmin_{m,U,\{s_i\}_{i=1}^n} \frac{1}{n}\sum_{i=1}^n\ell_\tau (\|y_i-m-Us_i\|_2),\quad\textnormal{s.t. } U^{\T} U=I_q},
\#
which  is then solved by an alternating minimization algorithm. We collect the details of the algorithm in the supplementary material.

{In this data example, we choose $q=10$, $\tau=\hat\sigma  \cdot \sqrt n (\log\log n)^{-1}$ with 
\$
\widehat\sigma= 1.4826\times \median\left\{ y_{ij}-\median(y_{ij}: 1\leq j\leq p): 1\leq i \leq n, 1\leq j\leq p\right\},
\$ 
where $y_i=(y_{i1},\cdots,y_{ip})^{\T}, i=1,\cdots n$.}
We have $\hat\sigma=0.1512$ and $\tau=2.9811$. 
The stationary background and moving foreground items of each frame $i$ are then constructed as $\hat{m}+\hat{U}\hat{s}_i$ and $(y_i-\hat{m}-\hat{U}\hat{s}_i)\cdot 1(\|y_i-\hat{m}-\hat{U}\hat{s}_i\|>\tau)$, respectively.




We pick three frames to recover, whose results are collected  in Figure \ref{fig:video}.  The first two rows show the results when the observed frames have a moving person and two standing still people respectively. The third row collects the results for a static scene when some lights are off. For comparison purpose, we also collect the results from ordinary least squares regression,  where we replace the capped least squares loss in \eqref{eq:ARR_original} by the quadratic loss.  For all cases, our proposed method is able to  
recover the stationary background without being affected by the moving person, the static people or the illumination change,  while the ordinary least squares regression fails in every case.  

\begin{figure}[H]
	\centering
	\includegraphics[width=5.5in]{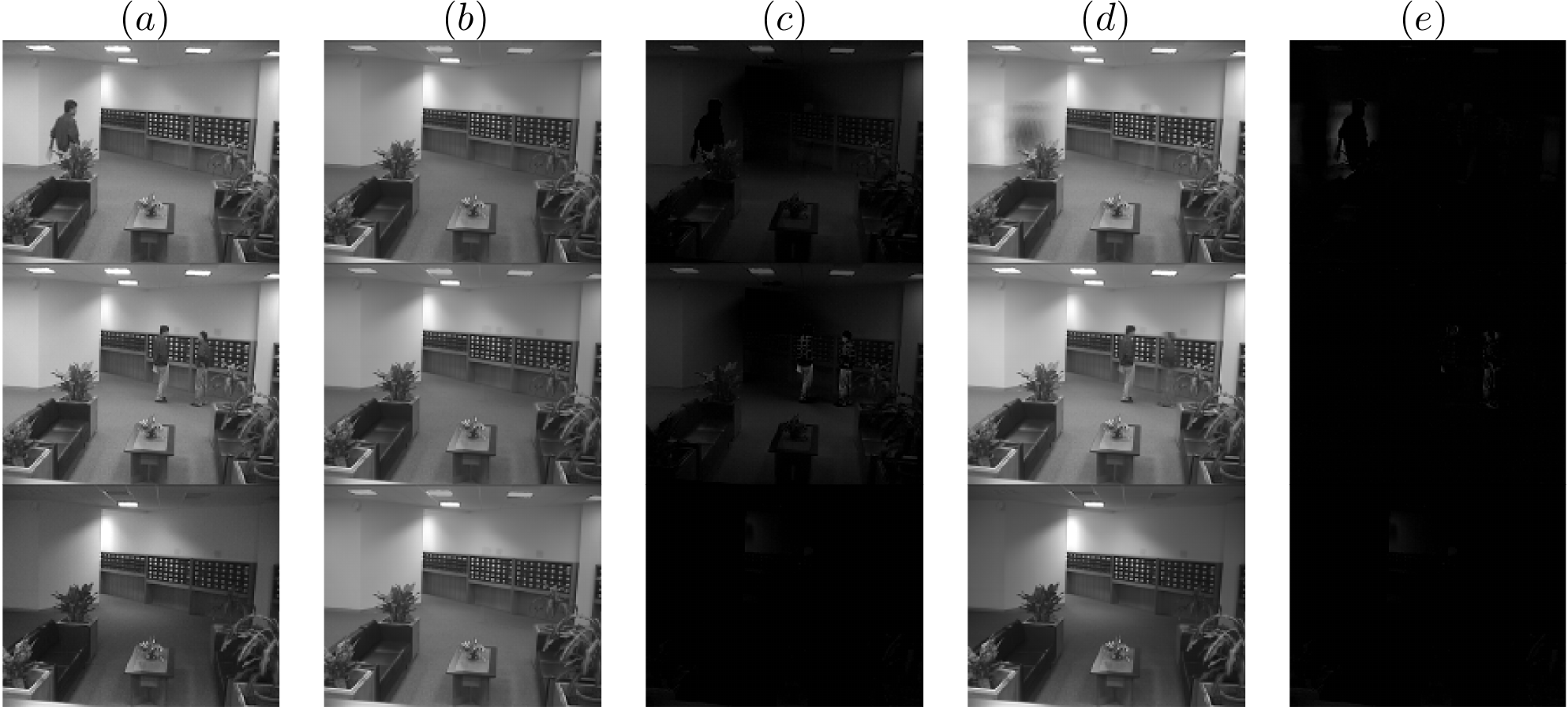} 
	\caption{{Video surveillance: (a) Original Frames. (b)-(c) Background extraction and outliers from robust regression conducted on adaptive capped least squares regression model. (d)-(e) Background extraction and outliers from the ordinary least squares method.}}\label{fig:video}
\end{figure}

\subsection{Blind Image Inpainting}

This section applies the proposed method to blind image inpainting, whose goal is to repair damaged pixels of a given image without knowing the damaged positions as a priori.  
 We first divide the  damaged image, possibly  after normalization, into $p$ small squared patches, consisting of $\sqrt{n} \times \sqrt{n}$ pixels. We then stack each patch as a column of the signal matrix $Y=(y_1,\ldots,y_p) \in \mathbb{R}^{n \times p}$. The damaged pixels of the given image are often treated as outliers in the signal matrix, that is
 \begin{eqnarray*}
 	y_{ij}=\begin{cases}
 		u_{ij}, & \textnormal{if}\  y_{ij} \text{ is clean}; \\     
 		u_{ij}+o_{ij}, & \textnormal{if}\  y_{ij} \text{ is a damaged pixel.}
 	\end{cases}   
 \end{eqnarray*}
 where $u_{ij}$ is an undamaged pixel and $o_{ij}$ is an outlier. 

 Our aim is to apply the proposed method to filtrate  corrupted pixels and recover the image using rest undamaged ones.  We say an undamaged pixel $u_i$ has a sparse representation over a dictionary $D \in \mathbb{R}^{n \times m}$, if we could find a sparse vector $\alpha_i \in \mathbb{R}^{m \times 1}$ such than $u_i \approx D\alpha_i$.  The dictionary $D$ is pre-learned from an undamaged picture, 
 and it consists of $m$ basis vectors, referred to as atoms. 
 After learning $D$, we solve $\alpha=(\alpha_1,\ldots, \alpha_p)$ by optimize the empirical ACLS loss with Lasso  penalties 
 \begin{equation}\label{eq:ARR_inpainting}
 \hat{\alpha}=\argmin_{\alpha \in \mathbb{R}^{m \times p}} \sum_{i=1}^p\ell_{\tau}(y_i-D\alpha_i)+\lambda\sum_{i=1}^p \|\alpha_i\|_1,
 \end{equation}
 where $\lambda$ is a regularization parameter.

We test the performance of \eqref{eq:ARR_inpainting} on the gray scale Lena image with resolution $d=256 \times 256=58,564$. We first normalize all pixel values,  8-bit integers $\in [0,255]$,  in the image matrix to real numbers $\in [0,1]$. The columns of the signal matrix is then formed by taking squared patches of size $\sqrt{n} \times \sqrt{n} =15 \times 15$  in a sliding manner. We learn the dictionary $D \in \mathbb{R}^{225 \times 256}$ from the undamaged image signal matrix as in  \cite{2009Online}. 
Figure \ref{fig:dictionary} shows the learned dictionary with 256 atoms.

\begin{figure}[t]
	\centering
	\includegraphics[width=3.5in]{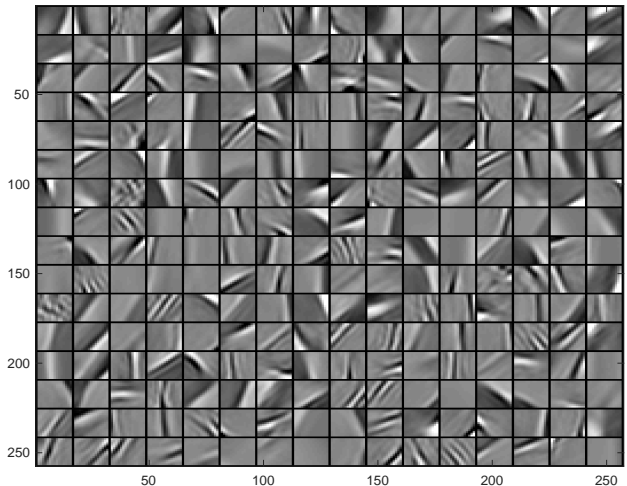} 
	\caption{A dictionary with 256 atoms learned on a natural image.}\label{fig:dictionary}
\end{figure}

		Similar to the video surveillance study, we use alternating minimization algorithm to solve \eqref{eq:ARR_inpainting} with details collected in the supplementary material. The top row of Figure \ref{fig:img} presents  two instances for blind image inpainting.  In the first instance, we contaminate the original Lena image by changing $300$ pixels value to $0$s where $300$ damaged pixels are uniformly distributed across the image shown in the top-left panel of Figure \ref{fig:img}. In the second instance, we contaminate the original image with a manually added curve shown in the top-right panel of Figure \ref{fig:img}. 
		

%
	We then apply our method to recover. We use $\tau=c\cdot\hat\sigma  \cdot \sqrt n (\log\log n)^{-1}$, where $\widehat\sigma= 1.4826\times \text{median}\{ |y_{ij}-\text{median}(y_{i}:1\leq i\leq p)|: 1\leq i \leq p, 1\leq j \leq n\}$ and $c$ is some constant. In the first instance, $\widehat\sigma=0.0349,$ resulting roughly $\tau=0.4$ with $c=0.77$. In the second instance, $\widehat\sigma=0.0349,$
	resulting  $\tau=0.3$ with $c=0.97$. We recover the signal matrix using $Y\odot(1-\hat{\Delta})+D\hat{\alpha}\odot\hat{\Delta}$, where $\odot$ denotes the Hadamard product, $\hat{\Delta}=[\hat{\delta}_{ij}]$ and $\hat{\delta}_{ij}=1(|y_{ij}-[D\hat{\alpha}_i]_j|>\tau), i=1,\cdots,p,j=1,\cdots,n$.
	
	The restoration results are presented in the bottom row of
    Figure \ref{fig:img}. For both instances, our proposed method is able to repair the damaged image. We calculate the peak signal to noise ratio (PSNR) of estimates for two cases, which can be used to quantitatively evaluate the quality of the restoration results. The PSNR is defined as,
    \$
    \text{PSNR}(\hat{x},x)=10\log_{10}\frac{255^2}{\frac{1}{d}\sum_{i=1}^{\sqrt{d}}\sum_{j=1}^{\sqrt{d}}(\hat{x}_{ij}-x_{ij})^2},
    \$
    where $x_{ij}$ is the intensity value that from 0 (black) to 255 (white) at $\{i,j\}$ pixel of the damaged image, $\hat{x}_{ij}$ is the intensity value at $\{i,j\}$ pixel of the recovered image. The PSNR of estimates for two case 1 and case 2 are 47.4590 and 43.7476, respectively.

		\begin{figure}[t]
		\centering
		\includegraphics[width=3.5in]{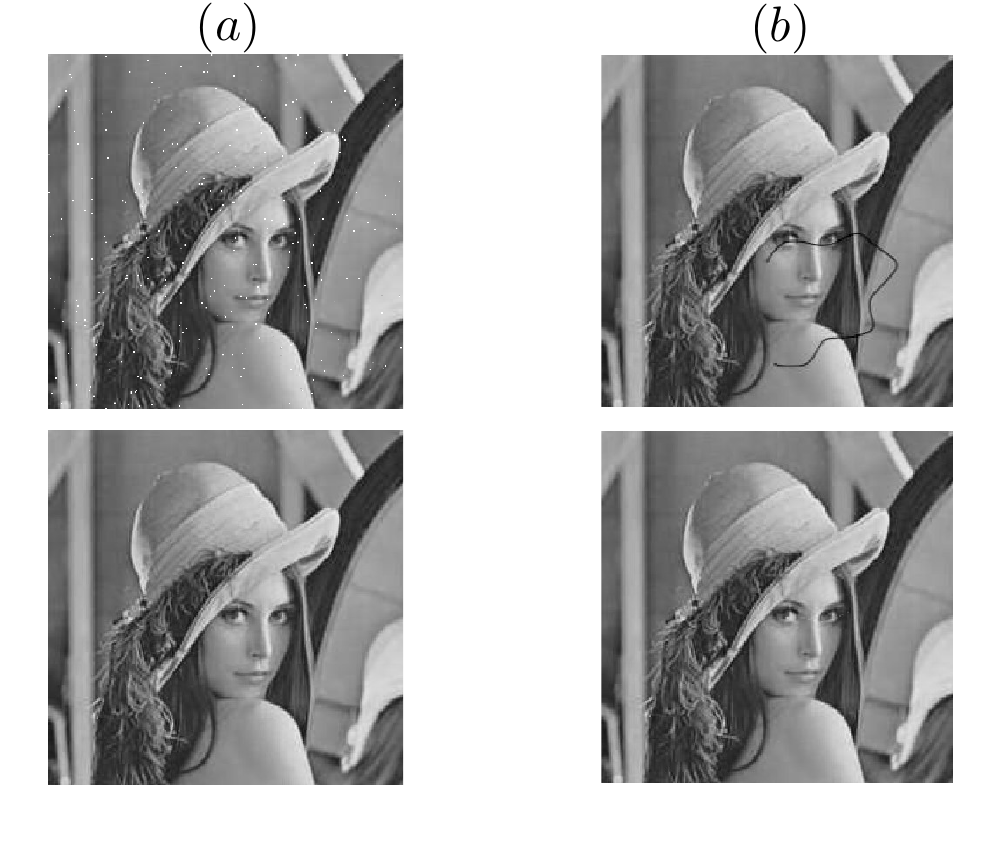} 
		\caption{ \textit{Blind image inpainting: (a) Lena image with random noise and its restoration. (b) Lena image with manually added curve and its restoration.}}\label{fig:img}
	\end{figure}

\section{Discussion}\label{sec:8}
This paper proposes the capped least squares regression with an adaptive resistance parameter, hence the name, adaptive capped least squares regression. The key observation is, by taking the resistant parameter to be data-dependent and at a proper order, the proposed estimator automatically achieves high accuracy and high efficiency. Surprisingly, at the same time, it does not lose resistance: the proposed estimator achieves the maximum breakdown point of $1/2$ asymptotically. Computationally, we formulate the problem as a quadratic mixed integer programming problem which can readily solved by \texttt{CPLEX}. To speed up the computation, we propose a randomized gradient descent algorithm.   Numerical examples lend strong support to our methodology and theory.

\bibliographystyle{ims}
\bibliography{acls}


\newpage
\appendix 
	\renewcommand{\theequation}{S.\arabic{equation}}
	\renewcommand{\thetable}{S.\arabic{table}}
	\renewcommand{\thefigure}{S.\arabic{figure}}
	\renewcommand{\thesection}{S.\arabic{section}}
	\renewcommand{\thelemma}{S.\arabic{lemma}}
	\numberwithin{equation}{section}  
	
	\vspace{30pt}
	\noindent{\bf \LARGE Appendix}
	\vspace{10pt}
	\setcounter{page}{1}

    Throughout the appendix, we assume all suprema of functions are measurable; otherwise we shall use the essential supremum instead.

	\section{Proofs for Breakdown Points}
	
	This section collects the proofs for Proposition \ref{pro:huber} and Theorem \ref{thm:2}. 
	
\begin{proof}[Proof of Proposition \ref{pro:huber}]
	The proof of this proposition is taken  from Section 5.13.1 of \cite{maronna2019robust}. For  completeness, we collect it here. Let $g(u)=\sign (x) \left(\tau\wedge |x| \right)$ be the gradient function of the Huber loss function. Then the Huber estimator verifies 
	\#\label{prop:huber.eq1}
g(y_1-x_1^\T \widehat\beta)x_1 + \sum_{i=2}^n g(y_i -x_i^\T\widehat\beta)x_i=0. 
	\#
	Let $y_1$ and $x_1$ tend to infinity in such a way that $y_1/\|x_1\|_2\rightarrow \infty$. If $\widehat\beta$ remained bounded, we would have
	\$
	y_1-x_1^\T \widehat\beta\geq y_1-\|x_1\|_2\|\widehat\beta\|_2=\|x_1\|_2\left (\frac{y_1}{\|x_1\|_2}-\|\widehat\beta\|_2\right)\rightarrow \infty. 
	\$
	Since $g$ is nondecreasing, $g(y_1-x_1^\T \widehat\beta)$ would tend to $\sup g>0$, and hence the first term in \eqref{prop:huber.eq1} would tend to infinity, while the sum would remain bounded. This is a contraction. Thus $\widehat\beta$ has to be unbounded. This finishes the proof. 
	\end{proof}

The proof of Proposition \ref{pro:huber} relies on the estimating equations \eqref{prop:huber.eq1}. To prove the breakdown point for ACLS estimator, we take a more general  route by directly looking at the losses since the capped least squares loss is  not differential. 
	
	\begin{proof}[Proof of Theorem \ref{thm:2}]
		

Let $m=n\varepsilon^*(\widehat\beta , \cZ)$. For every $k\in\NN$, there exists a $\cZ^k\in \cP_m(\cZ)$ such that $ \|\widehat\beta (\cZ^k) \|_2>k$. For simplicity, we write $\beta ^k=\widehat\beta (\cZ^k).$ Without loss of generality, we may assume the first $n-m$ samples in $\cZ^k$ are uncontaminated. And we have $\|\beta ^k\|_2\rightarrow \infty$. Per the compactness of the unit sphere in $\RR^d$, we may also suppose that $\theta^k=\beta ^k/\|\beta ^k\|_2$ converges to some point $\theta^\infty$, passing to a subsequence otherwise. Then we  have 
		\$
		\cL_n\big(\beta ^k,\cZ^k\big)\leq \cL_n\big(\beta^*, \cZ^k\big),
		\$
		or equivalently 
		\$
		\sum_{i=1}^{n-m}\ell_\tau\big(y_i-x_i ^\T\beta ^k\big)+\sum_{i=n-m+1}^n\ell_\tau\big(y_i^k-(x_i^k)^\T\beta ^k\big)\leq \sum_{i=1}^{n-m}\ell_\tau(\epsilon_i)+\sum_{i=n-m+1}^n\ell_\tau(\epsilon_i^k),
		\$
where $\epsilon_i^k=y_i^k - (x_i^k)^\T \beta^*.$
		Noting that $0 \leq \ell_\tau(u)\leq \tau^2/2$ for all $u$, the above inequality further reduces to 
		\$
		\sum_{i=1}^{n-m}\ell_\tau\big(y_i-\|\beta ^k\|_2x_i ^\T\theta^k\big)\leq (n-m) \times\frac{1}{n-m}\sum_{i=1}^{n-m}\ell_\tau(\epsilon_i)+m\tau^2/2. 
		\$
		Under the general position assumption, there are at most $d-1$ samples such that $x _i^\T\theta^\infty=0$. Denote the collection of such observations by  $\cE$ with $|\cE|\leq d-1$. Hence, by letting  $k\rightarrow\infty$, we obtain 
		\$
		(n-m-|\cE|)\tau^2/2+\sum_{i\in\cE}\ell_\tau(y_i)\leq (n-m)c_r+m\tau^2/2,
		\$
		which in turn implies 
		\$
		m\geq \frac{n(1-2c_r/\tau^2)-d+1}{2(1-c_r/\tau^2)}.
		\$
		Taking $\tau\rightarrow \infty$ acquires
		\$
		m\geq \left\lceil\frac{n-d+1}{2}\right\rceil=\left\lfloor\frac{n-d+2}{2}\right\rfloor,
		\$
		as desired. 
	\end{proof}

\section{Proof of Lemma \ref{lemma:sc}}
   
\begin{proof}[Proof of Lemma \ref{lemma:sc}]   
Suppose $y=\mu+\epsilon$. Suppose $|\mu-\mu_\tau^*|\leq \tau/4$, which holds for sufficiency large $\tau$ because $\mu_\tau^*\rightarrow \mu$.
For the loss function $\ell_\tau(x)$, we have 
\$
&\ell_\tau(x+u)-\ell_\tau(x)-\frac{1}{2}u^21(|x|\leq \tau)\\
&=\frac{1}{2}x^2\left(1(|x+u|\leq \tau)-1(|x|\leq \tau) \right) + ux1(|x+u|\leq \tau)\\
& \qquad + {\frac{1}{2}\tau^2\left(1(|x+u|> \tau)-1(|x|> \tau) \right)}  + \frac{1}{2}u^2\left(1(|x+u|> \tau)-1(|x|> \tau) \right)\\
&\geq -\frac{1}{2}x^21(\tau-|u|\leq |x|\leq \tau) -\frac{1}{2}\tau^2 {1(\tau\leq  |x|\leq \tau+|u|)}-\frac{1}{2}u^21(\tau-|u|\leq  |x|\leq \tau)\\
& \qquad + ux1(|x+u|\leq \tau),
\$
which, by taking $x=y-\mu_\tau^*$, $u=\Delta$, and $b=\mu-\mu_\tau^*$, implies 
\#\label{proof:sc:1}
&\ell_\tau(y-\mu_\tau^*+\Delta)-\ell_\tau(y-\mu_\tau^*)-\frac{1}{2}\Delta^21(|y-\mu_\tau^*|\leq \tau) \notag\\
&\geq -\frac{1}{2}(y-\mu_\tau^*)^21(\tau-|\Delta|\leq |y-\mu_\tau^*|\leq \tau) -\frac{1}{2}\tau^2 1(\tau\leq |y-\mu_\tau^*|\leq \tau+|\Delta|) \notag \\
& \qquad -\frac{1}{2}\Delta^21(\tau-|\Delta|\leq |y-\mu_\tau^*|\leq \tau) + \Delta (y-\mu_\tau^*)1(|y-\mu_\tau^*+u|\leq \tau) \notag\\
&= -\frac{1}{2}(y-\mu_\tau^*)^21(\tau-|\Delta|\leq |y-\mu_\tau^*|\leq \tau) -\frac{1}{2}\tau^2 1(\tau\leq |y-\mu_\tau^*|\leq \tau+|\Delta|) \notag\\
& \qquad -\frac{1}{2}\Delta^21(\tau-|\Delta|\leq |y-\mu_\tau^*|\leq \tau) + \Delta b(|y-\mu|\leq \tau) \notag\\
&\qquad + \epsilon\Delta 1(|y-\mu|\leq \tau).
\#
{Suppose $|\Delta|,|b|\leq \tau/4$ such that $|\Delta|+|b|\leq \delta_0 \leq \tau/2$}. Now because $\{\tau-|\Delta|\leq |y-\mu_\tau^*|\leq \tau\}\subseteq \{\tau-|\Delta|-|b|\leq |\epsilon|\leq \tau+|b|\}$, we have 
\$
&\EE (y-\mu_\tau^*)^21(\tau-|\Delta|\leq |y-\mu_\tau^*|\leq \tau)\\
&\leq 2\EE\left\{(\epsilon^2+b^2)1(\tau-|\Delta|-|b|\leq |\epsilon|\leq \tau+|b|)\right\}\\
&\leq \frac{2^{3+\eta}L_0(|\Delta|+|b|)}{\tau^{3+\eta}}+\frac{2^{5+\eta}L_0b^2(|\Delta|+|b|)}{\tau^{5+\eta}}.
\$
Similarly, we have 
\begin{gather*}
\EE \tau^2 1(\tau\leq |y-\mu_\tau^*|\leq \tau +|\Delta|) \leq \frac{L_0(|\Delta|+|b|)}{\tau^{3+\eta}},\\
\EE \Delta^2 1(\tau-|\Delta|\leq |y-\mu_\tau^*|\leq \tau) \leq \frac{2^{5+\eta}L_0\Delta^2(|\Delta|+|b|)}{\tau^{5+\eta}},\\
\EE \Delta b 1(\tau-|\Delta|\leq |y-\mu|\leq \tau) \leq \frac{2^{5+\eta}L_0|\Delta b| (|\Delta|+|b|)}{\tau^{5+\eta}}.
\end{gather*}
For the last term in the right hand side of \eqref{proof:sc:1},  we have 
\$
\EE \epsilon\Delta 1(|y-\mu|\leq \tau)=-\EE \epsilon\Delta 1(|y-\mu|> \tau)\leq \frac{|\Delta|m_{4+\eta}}{\tau^{3+\eta}}.
\$
Let $\PP_n f= n^{-1}\sum_{i=1}^n f_i$. Therefore, taking expectation on both sides of the above inequality and summing it over $i$ yield
\$
&\frac{1}{n}\sum_{i=1}^n\left(\EE\ell_\tau(y_i-x_i^\T\beta_\tau^*-x_i^\T(\beta-\beta_\tau^*))-\EE\ell_\tau(y_i-x_i^\T\beta_\tau^*)\right)\\
&\qquad -\frac{1}{2}(\beta-\beta^*)^\T\frac{1}{n}\sum_{i=1}^n x_ix_i^\T \EE 1(|y_i-x_i^\T\beta_\tau^*|\leq \tau) (\beta-\beta^*) \\
&\geq -C\bigg\{\frac{\PP_n \|x\|_2 (\|\beta-\beta_\tau^*\|_2+\|\beta^*-\beta_\tau^*\|_2)}{\tau^{3+\eta}}+\frac{\PP_n\|x\|_2\|\beta-\beta_\tau^*\|_2}{\tau^{3+\eta}}\\
&\qquad \qquad + \frac{\PP_n \|x\|_2^3\|\beta^*-\beta^*_\tau\|_2^2 (\|\beta-\beta_\tau^*\|_2+\|\beta^*-\beta_\tau^*\|_2)}{\tau^{5+\eta}} \\
&\qquad \qquad + \frac{\PP_n\|x\|_2^3\|\beta-\beta_\tau^*\|_2^2(\|\beta-\beta_\tau^*\|_2+\|\beta^*-\beta_\tau^*\|_2)}{\tau^{5+\eta}}\\
&\qquad \qquad + \frac{\PP_n\|x\|_2^3\|\beta-\beta_\tau^*\|_2\|\beta^*-\beta_\tau^*\|_2(\|\beta-\beta_\tau^*\|_2+\|\beta^*-\beta_\tau^*\|_2)}{\tau^{5+\eta}}\bigg\},
\$
where $C$ only depends on $\eta, \, L_0$, and $m_{4+\eta}$. Now since $|x_i^\T(\beta-\beta_\tau^*)|\leq\tau/4$, we have 
\$
\EE 1(|y_i-x_i^\T\beta_\tau^*|\leq \tau)\geq \EE 1(|\epsilon_i|\leq 3\tau/4),
\$
and thus $\rho_\tau=\lambda_{\min}\left(n^{-1}\sum_{i=1}^nx_ix_i^\T 1(|\epsilon_i|\leq 3\tau/4)\right)>0.$
Now if $\sqrt{d}/\tau^{3+\eta}\leq c \|\beta-\beta_\tau^*\|_2$ and $\sqrt{d^3}/\tau^{5+\eta}\leq c \|\beta-\beta_\tau^*\|_2^{-1}$ for $c$, depending only on $C$ and $\rho_\tau$, sufficiently small, we have 
\$
\frac{1}{n}\sum_{i=1}^n\left(\EE\ell_\tau(y_i-x_i^\T\beta_\tau^*-x_i^\T(\beta-\beta_\tau^*))-\EE\ell_\tau(y_i-x_i^\T\beta_\tau^*)\right)\geq \frac{1}{4}\rho_\tau \|\beta-\beta^*\|_2,
\$
as desired. 

\end{proof}

	\section{Proof for Consistency}
	To prove consistency, we first establish the uniform law of large numbers for the empirical loss, namely, $\sup_{\beta\in \RR^d} \big| (1/n)\sum_{i=1}^n \ell_\tau (y_i -x_i^\T\beta) -  (1/n)\sum_{i=1}^n\EE\ell_\tau(y_i- x_i^\T \beta)\big|$.  
	\begin{theorem}\label{thm:unif}
		With probability at least $1-\delta$, we have  
		\$
		\sup_{\beta\in \RR^d} \left|\frac{1}{n}\sum_{i=1}^n\ell_\tau (y_i -x_i^\T\beta) -\frac{1}{n}\sum_{i=1}^n\EE\ell_\tau(y_i- x_i^\T \beta)\right|\leq  C\tau^2\left(\sqrt{\frac{d}{n}}+ \ \sqrt{\frac{\log {1}/{\delta}}{n}}\right),
		\$
		where $C>0$ is a universal constant. 
	\end{theorem}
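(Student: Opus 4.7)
Plan. The plan is to decompose the tail bound into a concentration inequality around the expectation of the supremum, and a bound on that expected supremum, attacking each piece with standard empirical-process tools.

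For the concentration piece, since $\ell_\tau(\cdot) \in [0, \tau^2/2]$, the functional $Z := \sup_{\beta \in \RR^d} | n^{-1}\sum_{i=1}^n \ell_\tau(y_i - x_i^\T\beta) - n^{-1}\sum_{i=1}^n \EE \ell_\tau(y_i - x_i^\T\beta)|$ satisfies the bounded differences condition with constant $\tau^2/n$: replacing any one sample $(x_j, y_j)$ changes each summand by at most $\tau^2/2$. McDiarmid's inequality then yields $Z \leq \EE Z + C\tau^2 \sqrt{\log(1/\delta)/n}$ with probability at least $1-\delta$, which produces the $\sqrt{\log(1/\delta)/n}$ summand in the claimed bound; it remains to show $\EE Z \lesssim \tau^2 \sqrt{d/n}$.

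For the expected-supremum piece, I would apply the standard symmetrization inequality to reduce to bounding twice the Rademacher complexity $\EE \sup_\beta | n^{-1}\sum_i \sigma_i \ell_\tau(y_i - x_i^\T\beta)|$ of the class $\mathcal{F}_\tau := \{(x,y) \mapsto \ell_\tau(y - x^\T\beta) : \beta \in \RR^d\}$. The key step is to identify $\mathcal{F}_\tau$ as VC-subgraph of dimension $O(d)$: its subgraph $\{s < \ell_\tau(y - x^\T\beta)\}$ equals $\{0 \leq s < \tau^2/2\} \cap \{(y - x^\T\beta)^2 > 2s\}$, a semialgebraic set of bounded polynomial degree in $(\beta, x, y, s)$, so by a Warren-type count the VC-subgraph dimension is $O(d)$. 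Equivalently, the layer-cake identity $\ell_\tau(z) = \int_0^\tau \mathbf{1}(|z| > t)\, t\, dt$ expresses each element of $\mathcal{F}_\tau$ as a non-negative mixture of indicators drawn from the half-space class $\{\mathbf{1}(|y - x^\T\beta| > t): \beta \in \RR^d,\, t \geq 0\}$, which has VC dimension $O(d)$. With uniform envelope $\tau^2/2$, the standard VC-type Rademacher complexity bound---via Haussler's covering inequality and Dudley's entropy integral, or a generic chaining refinement---would then give $\EE Z \lesssim \tau^2 \sqrt{d/n}$, which combined with the McDiarmid step would complete the proof.

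The main technical obstacle is controlling the Rademacher complexity over the \emph{unbounded} parameter space $\RR^d$: a direct Ledoux-Talagrand contraction of the $\tau$-Lipschitz function $\ell_\tau$ down to the linear class $\{x \mapsto x^\T\beta\}$ fails, since the latter has infinite Rademacher complexity. The layer-cake / VC-subgraph approach is what circumvents this, by trading the Lipschitz structure for the uniform boundedness $\ell_\tau \leq \tau^2/2$ and thereby making the class tractable uniformly in $\beta$. A secondary concern is removing any spurious $\sqrt{\log n}$ factor from the VC bound in order to match the log-free rate claimed in the theorem, which may require a Talagrand-style generic-chaining argument or an appeal to a log-free VC-subgraph Rademacher bound for classes with bounded envelope (as in Giné-Koltchinskii-type results).
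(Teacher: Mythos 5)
Your proposal is correct and follows the same architecture as the paper's proof: bounded differences (McDiarmid) for the deviation of the supremum from its mean, then symmetrization plus Dudley's entropy integral with a VC-subgraph covering-number bound to show the mean is $\lesssim \tau^2\sqrt{d/n}$. The only substantive difference is how the VC-subgraph property of $\{\ell_\tau(y-x^\T\beta):\beta\in\RR^d\}$ is certified. The paper writes $\ell_\tau=\ell_\tau^++\ell_\tau^-$ with $\ell_\tau^{\pm}$ monotone on their supports, notes that the residual class $\{\epsilon - x^\T(\beta-\beta^*)\}$ is a $d$-dimensional vector space hence VC-subgraph of index at most $d+2$, and invokes the preservation of the VC-subgraph property under monotone composition (Lemma 2.6.15 of van der Vaart and Wellner) together with Haussler's covering bound (their Lemma 2.6.7) applied to each half separately; you instead propose a Warren/semialgebraic sign-pattern count on the subgraph, or equivalently the layer-cake representation $\ell_\tau(z)=\int_0^\tau 1(|z|>t)\,t\,\rd t$. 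Both routes yield VC-subgraph dimension $O(d)$ and hence uniform covering numbers of order $(1/s)^{O(d)}$; the paper's is more elementary in that it cites off-the-shelf permanence lemmas, while yours is more self-contained if one is willing to carry out the polynomial-arrangement count. Your correct observation that Ledoux--Talagrand contraction to the unbounded linear class fails is exactly why the paper also goes through the VC route. Your final worry about a spurious $\sqrt{\log n}$ factor is moot: with Haussler's measure-uniform covering bound the Dudley integral is $\int_0^1\sqrt{d\log(e/s)}\,\rd s\lesssim\sqrt{d}$, a dimension-free constant times $\sqrt{d}$ with no $\log n$, so no generic-chaining refinement is needed.
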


	\begin{proof}[Proof of Theorem \ref{thm:unif}]
		We first fix $\tau$ and thus the loss function $\ell_\tau(\cdot)$. Let 
		\$
		\cV=\left\{f_\beta(x,\epsilon)= \epsilon -x^\T( \beta - \beta^*) : \beta\in \RR^d\right\}
		\$
		be a class of functions $\RR^{d} \times \RR \to \RR$.
		Given a function $f \in \cV$, we write  $\PP_n f=n^{-1}\sum_{i=1}^n f(z_i)$ and $\PP f= {n}^{-1}\sum_{i=1}^n\EE f(z_i)$, where $z_i=(x_i, \epsilon_i)$. 
		Moreover, write $\left\|\PP_n -\PP\right\|_\cF=\sup_{f\in \cF}|\PP_nf-\PP f|.$ Under this notation, we have 
		\$
		\sup_{\beta\in \RR^d} \left|\frac{1}{n}\sum_{i=1}^n\ell_\tau (y_i -x_i^\T\beta) -\frac{1}{n}\sum_{i=1}^n\EE\ell_\tau(y_i- x_i^\T \beta)\right|=\sup_{g\in \ell_{\tau}\circ \cV}\left|\PP_n g-\PP g\right|. 
		\$

         Applying Lemma \ref{lemma:1} to the function class $\ell_\tau\circ\cV$, which is $\ell_\tau \circ\cV$ is $\tau^2/2$-bounded, yields that 
		\#\label{eq:unif}
		\left\|\PP_n -\PP\right\|_\cF\leq \EE\|\PP_n-\PP\|_\cF+ \frac{\tau^2}{2 } \sqrt{\frac{2\log(1/\delta)}{n}}, 
		\#
		with probability at least $1-\delta$. We then upper bound the mean $\EE\|\PP_n-\PP\|_\cF$  using Dudley's entropy integral bound as summarized in Lemma \ref{lemma:3}.

		To proceed, we need to control the metric entropy of $\ell_\tau\circ\cV$.  Note first that the functional class $\cV$ is a $d$-dimensional vector space. By Lemma 2.6.15 of \cite{wellner1996weak}, $\cV$ has a VC-subgraph dimension at most $\nu=\vc(\cV)\leq d+2.$  Here the notion of VC-subgraph dimension is defined in \cite{wellner1996weak}. We rewrite the loss function $\ell_\tau(u)$ as
		\$
		\ell_\tau(u)=\ell_\tau^+(u)+\ell_\tau^-(u), ~\text{where}~\ell_\tau^+(u)=\ell_\tau(u)1(u\geq 0) \text{ and } \ell_\tau^-(u)=\ell_\tau(u)1(u<0). 
		\$
		Then,  $\ell_\tau(y-x^\T\beta)$ can be written as the sum of two composite loss functions
		\$
		\ell_\tau(y-x^\T\beta)=\ell_\tau^+\circ f+\ell_\tau^-\circ f, ~\text{where}~f = f_\beta \in \cV . 
		\$
		Since $\ell_\tau^+$ is an increasing function, by Lemma 2.6.15 of \cite{wellner1996weak}, $\ell_\tau^+\circ \cV$  has  VC-subgraph dimension of at most $d+2$. Similarly, $\ell_\tau^-\circ \cV$ has VC-subgraph dimension of most $d+2$. 
	   Applying Lemma 2.6.7 in \cite{wellner1996weak} yields that, for any probability measure $Q$ and $0<s<1$,
		\$
		N(s\tau^2/2, \ell_\tau^+\circ\cV, L_2(Q))&\leq K (d+2) (16e)^{d+2}\left(\frac{1}{s}\right)^{2(d+1)},\\
		N(s\tau^2/2, \ell_\tau^-\circ\cV, L_2(Q))&\leq K (d+2) (16e)^{d+2}\left(\frac{1}{s}\right)^{2(d+1)} ,
		\$
		where $K>0$ is a universal constant. By elementary calculations, 
		\$
		N(s\tau^2, \ell_\tau^+\circ\cV+\ell_\tau^-\circ\cV, L_2(Q))&\leq N(s\tau^2/2, \ell_\tau^+\circ\cV, L_2(Q))\times N(s\tau^2/2, \ell_\tau^+\circ\cV, L_2(Q))\\
		&\leq \left\{K (d+2) (16e)^{d+2}\left(\frac{1}{s}\right)^{2(d+1)}\right\}^2=: M(d,s).
		\$
		Since $\ell_\tau\circ\cV$ is a subclass of $\ell_\tau^+\circ\cV+\ell_\tau^-\circ\cV$, the covering number of $\ell_\tau\circ\cV$ is smaller than that of $\ell_\tau^+\circ\cV+\ell_\tau^-\circ\cV$. 
		If $d\geq 1$, there exists some universal constant $C_1$ such that $1+\log M(d,s)\leq C_1 d \log(e/s)$, for any $0<s<1$. Applying Lemma \ref{lemma:3} with $F=\tau^2/2$ acquires
		\$
		\EE\bigl\{ \sqrt{n}\|\PP_n-\PP\|_\cF\bigr\}
		&\leq C\tau^2\int_0^{1} \sqrt{1+\log N (s\tau^2, \ell_\tau\circ\cV, L_2(\PP_n))}\d s\\
		&\leq C \tau^2\sqrt{C_1d}\int_0^{1} \sqrt{\log(e/s)} \, \d s\leq C_2 \tau^2\sqrt{d},
		\$
		where $C_2$ is a universal constant. 
		Together with \eqref{eq:unif}, we obtain that, for any $n\geq 1$ and $0\leq \delta\leq 1$, 
		\$
		\left\|\PP_n -\PP\right\|_\cF\leq 2C_2\tau^2\sqrt{\frac{d}{n}}+ \frac{\tau^2}{2 } \sqrt{ \frac{2\log(1/\delta )}{n}},
		\$
		with probability at least $1-\delta$. This completes the proof. 
	\end{proof}

    Now we are ready to prove Theorem~\ref{thm:consistency}.

		\begin{proof}[Proof of Theorem \ref{thm:consistency}]
     To begin with, we have the following basic inequality 
		\$
		0\leq \cL(\widehat \beta_\tau) -\cL(\beta^*_\tau)& = \cL(\widehat\beta_\tau) - \cL_n(\widehat\beta_\tau) +\cL_n(\widehat\beta_\tau) - \cL_n(\beta^*_\tau) +\cL_n(\beta^*_\tau)-\cL(\beta^*_\tau)\\
		&\leq \cL(\widehat\beta_\tau) - \cL_n(\widehat\beta_\tau) +\cL_n(\beta^*_\tau)-\cL(\beta^*_\tau),
		\$
		where the last inequality is due to the fact that $\cL_n(\widehat\beta_\tau) \leq \cL_n(\beta^*_\tau).$  Rearranging the terms gives 
		\$
		0  & \leq -\PP\bigl\{ \ell_\tau(y-x^\T\beta^*_\tau)-\ell_\tau(y-x^\T\widehat\beta_\tau) \bigr\}  \\
		&\leq -(\PP_n-\PP)\bigl\{ \ell_\tau(y-x^\T\widehat\beta_\tau)-\ell_\tau(y-x^\T\beta^*_\tau) \bigr\}   \\
		&\leq 2\sup_{\beta\in \RR^d}(\PP_n-\PP)\ell_\tau(y-x^\T\beta)\\
		&\lesssim \tau^2  \sqrt{\frac{d + \log(1/\delta )}{n}}   ~\text{with probability} \geq 1-\delta\\
		&\to 0  ~\mbox{ as }~ n \to \infty,
		\$
		provided that $\tau^4 d \rightarrow o(n)$, or equivalently $\tau= \ro((n/d)^{1/4})$. 
		By Assumption \ref{ass:sep}, for any fixed $\varepsilon>0$, there exists some $\eta>0$ such that 
		\$
		\inf_{\beta: \|\beta- \beta_\tau^*\|_2\geq \varepsilon} \cL(\beta)>\cL(\beta_\tau^*) -\eta,
		\$
		which further implies
		\$
		\PP\bigl(  \big\|\widehat\beta_\tau-\beta^*_\tau\big\|_2\geq \varepsilon \bigr) \leq \PP\bigl\{ \cL(\widehat\beta_\tau)>\cL(\beta^*_\tau)-\eta \bigr\} \rightarrow 0. 
		\$
		This proves $ \big\|\widehat\beta_\tau-\beta^*_\tau\big\|_2   \overset{\mathbb{P}}{\to}  0$. Taking into account the fact that   $\beta_\tau^*\rightarrow \beta^*$ as $\tau\rightarrow\infty$, we obtain
		$
		\widehat\beta_\tau\overset{\mathbb{P}}{\to}\beta^* 
		$, as claimed.
	\end{proof}

\section{Proof for Asymptotic Normality}	

This section presents the proof of Theorem \ref{thm:asyn}. We first need two results on the bias and convergence rate of the ACLS estimator. 

\begin{lemma}[Bias]\label{thm:bias}
  Suppose that Assumptions \ref{ass:anti} -- \ref{ass:design} hold and $m_{4+\eta}<\infty$.  Then 
\$
\big\|\beta^*_\tau - \beta^*  \big\|_2\lesssim \kappa_\ell^{-1} d^{1/2} \tau^{-(3+\eta)} .
\$
Provided that  $\tau\gtrsim n^{1/(6+2\eta)}$, we have 
$
\|\beta^*_\tau - \beta^*  \|_2\lesssim \kappa_\ell^{-1}\sqrt{{d}/{n}}.
$
\end{lemma}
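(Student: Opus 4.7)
The plan is the standard $M$-estimation bias argument: exploit the first-order optimality condition at $\beta^*_\tau$, bound the population score at the true parameter $\beta^*$, and invert via the local strong convexity provided by Assumption \ref{ass:inv}.

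Since $\beta^*_\tau$ minimizes $\cL_\tau$, we have $\nabla \cL_\tau(\beta^*_\tau)=0$, where the gradient equals $\nabla \cL_\tau(\beta) = -n^{-1}\sum_{i=1}^n \EE[\psi_\tau(y_i - x_i^\T \beta)]\, x_i$. Evaluating at $\beta=\beta^*$ and using $y_i - x_i^\T\beta^* = \epsilon_i$,
\[
  \nabla \cL_\tau(\beta^*) = -\frac{1}{n}\sum_{i=1}^n \EE[\psi_\tau(\epsilon_i)]\, x_i.
\]
Since $\EE \epsilon_i = 0$, I would write $\EE \psi_\tau(\epsilon_i) = -\EE[\epsilon_i\, 1(|\epsilon_i|>\tau)]$, so that a simple Markov-type bound gives
\[
  |\EE \psi_\tau(\epsilon_i)| \le \EE[|\epsilon_i|\, 1(|\epsilon_i|>\tau)] \le \tau^{-(3+\eta)}\,\EE|\epsilon_i|^{4+\eta} \le m_{4+\eta}\,\tau^{-(3+\eta)}.
\]
Combining with Assumption \ref{ass:design} ($\|x_i\|_2 \le C_x\sqrt{d}$),
\[
  \|\nabla \cL_\tau(\beta^*)\|_2 \le \frac{1}{n}\sum_{i=1}^n |\EE\psi_\tau(\epsilon_i)|\,\|x_i\|_2 \lesssim \sqrt{d}\,\tau^{-(3+\eta)}.
\]

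To convert this score bound into the claimed parameter bound, I would combine the first-order condition $\nabla\cL_\tau(\beta^*_\tau)=0$ with Assumption \ref{ass:inv}. Restricting $\cL_\tau$ to the segment joining $\beta^*_\tau$ and $\beta^*$, strong convexity gives
\[
  \kappa_\ell \|\beta^*-\beta^*_\tau\|_2^2 \le \cL_\tau(\beta^*)-\cL_\tau(\beta^*_\tau) \le \langle \nabla\cL_\tau(\beta^*), \beta^*-\beta^*_\tau\rangle \le \|\nabla\cL_\tau(\beta^*)\|_2\,\|\beta^*-\beta^*_\tau\|_2,
\]
where the upper bound uses convexity along the segment together with monotonicity of the directional derivative (which vanishes at $t=0$ and is dominated by its value at $t=1$). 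Dividing through yields $\|\beta^*-\beta^*_\tau\|_2 \le \kappa_\ell^{-1}\|\nabla\cL_\tau(\beta^*)\|_2 \lesssim \kappa_\ell^{-1}\sqrt{d}\,\tau^{-(3+\eta)}$, proving the first claim. The refinement follows by plugging in $\tau \gtrsim n^{1/(6+2\eta)}$, which forces $\tau^{-(3+\eta)} \lesssim n^{-1/2}$.

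The main obstacle is the technical check that the segment $[\beta^*_\tau,\beta^*]$ actually lies in the ball $\BB_2(\beta^*_\tau,r)$ on which Assumption \ref{ass:inv} is asserted. Because the paper already posits $\|\beta^*_\tau-\beta^*\|_2 \to 0$ as $\tau\to\infty$, this holds for all sufficiently large $\tau$, so the inequality above is legitimate for such $\tau$. If one preferred to avoid invoking that convergence hypothesis, a self-consistency (bootstrap) argument would do it: the candidate bound itself tends to $0$ in $\tau$, confirming the ball condition a posteriori, and Lemma \ref{lemma:sc} supplies the explicit curvature $\kappa_\ell=\rho_\tau/4$ once the segment is inside a ball of radius $\tau/(4\sqrt d)$, which is comfortably satisfied in the regime $\tau \gtrsim n^{1/(6+2\eta)}$ and $n\gtrsim d^3$.
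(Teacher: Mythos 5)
Your overall strategy --- bound the population score at $\beta^*$ and invert through the curvature at $\beta^*_\tau$ --- is close in spirit to the paper's, and two of your three ingredients are sound: the Markov-type bound $|\EE\psi_\tau(\epsilon_i)|=|\EE\,\epsilon_i 1(|\epsilon_i|>\tau)|\leq m_{4+\eta}\tau^{-(3+\eta)}$ combined with Assumption \ref{ass:design} correctly gives $\|\nabla\cL_\tau(\beta^*)\|_2\lesssim\sqrt{d}\,\tau^{-(3+\eta)}$, and the lower bound $\cL_\tau(\beta^*)-\cL_\tau(\beta^*_\tau)\geq\kappa_\ell\|\beta^*-\beta^*_\tau\|_2^2$ is exactly Assumption \ref{ass:inv}, applicable for large $\tau$ since $\beta^*_\tau\to\beta^*$. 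The gap is the middle inequality $\cL_\tau(\beta^*)-\cL_\tau(\beta^*_\tau)\leq\langle\nabla\cL_\tau(\beta^*),\,\beta^*-\beta^*_\tau\rangle$. That is the gradient inequality for a \emph{convex} function, but Assumption \ref{ass:inv}, despite its name, is only a pointwise growth condition around the minimizer; it does not imply convexity of $\cL_\tau$ along the segment joining $\beta^*_\tau$ and $\beta^*$, and the population capped least squares loss is genuinely nonconvex. A one-dimensional function such as $g(t)=t^2\bigl(2+\sin(10\log t)\bigr)$ with $g(0)=0$ satisfies the growth condition $g(t)-g(0)\geq t^2$ yet violates $g(t)-g(0)\leq t\,g'(t)$ at infinitely many $t$, so the growth condition alone cannot deliver your step. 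Establishing actual local convexity of $\cL_\tau$ would require showing that the negative boundary contributions to its Hessian --- terms arising from the kinks of $\ell_\tau$ at $\pm\tau$, of the form $-\tau f_{\epsilon_i}(\pm\tau+\cdot)\,x_ix_i^{\T}$ --- are dominated by $\Sigma_\tau$, which is essentially the anti-concentration computation you were hoping to avoid, and would in addition presuppose a differentiability of $\cL_\tau$ at those kinks that itself needs justification.

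The paper sidesteps the gradient entirely: it upper-bounds the loss difference directly, decomposing $\EE\ell_\tau(\epsilon_i)-\EE\ell_\tau\bigl(\epsilon_i+x_i^{\T}(\beta^*-\beta^*_\tau)\bigr)$ into four terms (an indicator-switching term, a cross term, a nonnegative quadratic term that is moved to the left-hand side, and a cap term), and controls each by Assumption \ref{ass:anti} and the moment bound to obtain $\cL_\tau(\beta^*)-\cL_\tau(\beta^*_\tau)\leq C\sqrt{d}\,\|\beta^*-\beta^*_\tau\|_2\,\tau^{-(3+\eta)}$; pairing this with the growth lower bound and dividing by $\|\beta^*-\beta^*_\tau\|_2$ gives the claim, and the second display follows by substituting $\tau\gtrsim n^{1/(6+2\eta)}$ exactly as you do. If you wish to keep the gradient-based route, you must replace the convexity step by this kind of explicit control of the loss difference (or of an integrated Hessian), at which point you have essentially reproduced the paper's argument.
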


\begin{theorem}[Convergence Rate]\label{thm:cr}
  Suppose that Assumptions \ref{ass:anti} -- \ref{ass:design} hold with $m_{4+\eta}$ exists. Moreover, assume that  $n\gtrsim d^3$ and $\tau\geq n^{\frac{1}{2(1+\eta)}}$.  If $\widehat\beta_\tau \overset{\PP  }{\to}\beta_\tau^*$, then
\$
\|\widehat\beta_\tau-\beta^*_\tau\|_2=\rO_{\PP}\bigl( \kappa_\ell^{-1} \sqrt{d/n} \bigr).
\$
Consequently, $\|\widehat\beta_\tau-\beta^* \|_2=\rO_{\PP} ( \kappa_\ell^{-1} \sqrt{d/n}   )$.
\end{theorem}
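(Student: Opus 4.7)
The proof is a standard localized $M$-estimator argument in three moves: localize via the consistency hypothesis so that Lemma~\ref{lemma:sc} applies, apply local strong convexity to produce a quadratic lower bound on the excess population loss, and combine it with the basic inequality $\cL_{n,\tau}(\widehat\beta_\tau) \leq \cL_{n,\tau}(\beta_\tau^*)$ to reduce the problem to bounding a localized empirical process. The only nontrivial input is a sharp local bound on that empirical process; the crude uniform bound of Theorem~\ref{thm:unif} only gives a rate of order $\tau^2\sqrt{d/n}$, which is far too weak. The plan is to recover a bound of order $\|\widehat\beta_\tau - \beta_\tau^*\|_2 \sqrt{d/n}$ via a Taylor-type linearization around $\beta_\tau^*$.

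\textbf{Steps 1--2 (localization and quadratic lower bound).} Under $n \gtrsim d^3$ and $\tau \geq n^{1/(2(1+\eta))}$, the radius $r_0 = \tau/(4\sqrt d)$ in Lemma~\ref{lemma:sc} diverges, so the assumed consistency $\widehat\beta_\tau \overset{\PP}{\to}\beta_\tau^*$ places $\widehat\beta_\tau \in \BB_2(\beta_\tau^*, r_0)$ with probability tending to one. If $\|\widehat\beta_\tau - \beta_\tau^*\|_2 \lesssim \sqrt{d/n}$ we are done, so we may assume the reverse inequality, which combined with Lemma~\ref{thm:bias} yields $\|\beta^* - \beta_\tau^*\|_2 \lesssim \|\widehat\beta_\tau - \beta_\tau^*\|_2$ and makes the side conditions $\sqrt d /\tau^{3+\eta} \leq c \|\widehat\beta_\tau - \beta_\tau^*\|_2$ and $\sqrt{d^3}/\tau^{5+\eta} \leq c\|\widehat\beta_\tau - \beta_\tau^*\|_2^{-1}$ hold. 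Lemma~\ref{lemma:sc} then delivers $\cL_\tau(\widehat\beta_\tau) - \cL_\tau(\beta_\tau^*) \geq (\rho_\tau/4)\|\widehat\beta_\tau - \beta_\tau^*\|_2^2$, and the basic inequality gives
\begin{equation*}
\frac{\rho_\tau}{4}\,\|\widehat\beta_\tau - \beta_\tau^*\|_2^2 \;\leq\; (\PP_n - \PP)\bigl[\ell_\tau(y - x^\T\beta_\tau^*) - \ell_\tau(y - x^\T\widehat\beta_\tau)\bigr].
\end{equation*}

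\textbf{Step 3 (localized empirical process, the main obstacle).} For $\beta$ in the local ball, Taylor expand
\begin{equation*}
\ell_\tau(y - x^\T\beta) - \ell_\tau(y - x^\T\beta_\tau^*) \;=\; -\psi_\tau(y - x^\T\beta_\tau^*)\,x^\T(\beta - \beta_\tau^*) + R_\beta(x,y),
\end{equation*}
where $R_\beta$ is a second-order remainder. For the linear piece, the first-order optimality of $\beta_\tau^*$ gives $\tfrac{1}{n}\sum_i\EE[\psi_\tau(y_i - x_i^\T\beta_\tau^*) x_i] = 0$, so the associated empirical process equals $(\beta - \beta_\tau^*)^\T\PP_n[\psi_\tau(\cdot)x]$; a coordinatewise variance calculation using $\EE\psi_\tau^2 \leq m_2$ and Assumption~\ref{ass:design} yields $\|\PP_n[\psi_\tau(\cdot)x]\|_2 = \rO_\PP(\sqrt{d/n})$, contributing at most $\rO_\PP(\|\widehat\beta_\tau - \beta_\tau^*\|_2\sqrt{d/n})$ to the right-hand side. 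For the remainder, since $\ell_\tau''(u) = 1(|u|\leq \tau)$ away from $|u| = \tau$, one has $|R_\beta| \lesssim (x^\T(\beta - \beta_\tau^*))^2$ plus a boundary contribution supported on the anti-concentration shell $\{|\,y - x^\T\beta_\tau^*| \approx \tau\}$. Bounding the uniform fluctuation of $R_\beta$ over $\BB_2(\beta_\tau^*, r)$ by Talagrand's inequality together with a Rademacher/VC-type expectation bound (as in the proof of Theorem~\ref{thm:unif}), and controlling the boundary term by Assumption~\ref{ass:anti}, gives $\sup_{\|\beta - \beta_\tau^*\|_2 \leq r}|(\PP_n - \PP)R_\beta| = \ro_\PP(r\sqrt{d/n})$ provided $\tau$ grows fast enough, which is ensured by $\tau \geq n^{1/(2(1+\eta))}$.

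\textbf{Step 4 (solve and transfer).} Substituting the bounds of Step~3 into the basic inequality,
\begin{equation*}
\frac{\rho_\tau}{4}\,\|\widehat\beta_\tau - \beta_\tau^*\|_2^2 \;\leq\; C\,\|\widehat\beta_\tau - \beta_\tau^*\|_2\sqrt{d/n}\,(1 + \ro_\PP(1)),
\end{equation*}
so dividing through gives $\|\widehat\beta_\tau - \beta_\tau^*\|_2 = \rO_\PP(\rho_\tau^{-1}\sqrt{d/n}) = \rO_\PP(\kappa_\ell^{-1}\sqrt{d/n})$, where the last identification uses $\kappa_\ell = \rho_\tau/4$ from Lemma~\ref{lemma:sc}. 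The statement for $\|\widehat\beta_\tau - \beta^*\|_2$ follows by the triangle inequality and the bias bound of Lemma~\ref{thm:bias}. The delicate point throughout is Step~3: recovering a rate linear in $r = \|\widehat\beta_\tau - \beta_\tau^*\|_2$ rather than scaling with $\tau$ demands the Taylor linearization combined with a genuinely localized concentration argument, and the anti-concentration assumption is what prevents the non-smoothness of $\ell_\tau$ at $|u| = \tau$ from degrading the rate.
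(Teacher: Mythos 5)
Your skeleton (localize via consistency, invoke local strong convexity, apply the basic inequality, and bound a localized empirical process) is the same as the paper's, but your Step 3--4 combination has a genuine gap, and it is precisely the gap that the paper's proof is built to avoid. Your remainder bound is a statement about the supremum over the \emph{fixed} consistency ball: $\sup_{\|\beta-\beta_\tau^*\|_2\leq r}|(\PP_n-\PP)R_\beta|=\ro_{\PP}(r\sqrt{d/n})$, where $r$ is the (deterministic, slowly vanishing) radius supplied by $\widehat\beta_\tau\overset{\PP}{\to}\beta_\tau^*$. When you evaluate this at $\beta=\widehat\beta_\tau$, the bound you get is $\ro_{\PP}(r\sqrt{d/n})$, \emph{not} $\ro_{\PP}(\|\widehat\beta_\tau-\beta_\tau^*\|_2\sqrt{d/n})$, because $\widehat\beta_\tau$ may sit much closer to $\beta_\tau^*$ than $r$. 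Consequently the basic inequality only yields $\kappa_\ell\|\widehat\beta_\tau-\beta_\tau^*\|_2^2\leq \|\widehat\beta_\tau-\beta_\tau^*\|_2\,\rO_{\PP}(\sqrt{d/n})+\ro_{\PP}(r\sqrt{d/n})$, and ``dividing through'' gives $\|\widehat\beta_\tau-\beta_\tau^*\|_2\lesssim \kappa_\ell^{-1}\sqrt{d/n}+\ro_{\PP}\bigl(\sqrt{\kappa_\ell^{-1} r}\,(d/n)^{1/4}\bigr)$ --- a geometric-mean improvement of the rate, not the rate itself. One round of your argument therefore does not close; you would have to iterate it infinitely often (with accumulating $\rO_{\PP}$ constants) or, equivalently, run a uniform-in-radius argument. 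Writing $r=\|\widehat\beta_\tau-\beta_\tau^*\|_2$, as your closing paragraph suggests, does not fix this: the radius is then random, and a fixed-radius maximal inequality cannot be applied at a random radius without a union bound over radii.

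That union bound is exactly the peeling step the paper uses and your proposal omits. The paper decomposes $\{\|\widehat\beta_\tau-\beta_\tau^*\|_2>2^M\delta_n\}$ into dyadic shells $2^{j-1}\delta_n<\|\widehat\beta_\tau-\beta_\tau^*\|_2\leq 2^j\delta_n$, applies the basic inequality and Markov's inequality \emph{within each shell} using $\phi_n(\delta)=C\delta\sqrt{d/n}$, and sums the resulting geometric series; the rate $\delta_n=C\kappa_\ell^{-1}\sqrt{d/n}$ emerges as the fixed point $\phi_n(\delta_n)=\kappa_\ell\delta_n^2$. Two further differences are worth noting. First, the paper never linearizes: it bounds $\EE\sup_{\|\beta-\beta_\tau^*\|_2\leq\delta}|(\PP_n-\PP)\{\ell_\tau(y-x^\T\beta)-\ell_\tau(y-x^\T\beta_\tau^*)\}|\lesssim\delta\sqrt{d/n}$ directly, by exhibiting an envelope $F$ for the increment class with $\PP F^2\lesssim\delta^2$ (this is where Assumption~\ref{ass:anti} and the scaling $n\gtrsim d^3$, $\tau\gtrsim n^{1/(2+2\eta)}$ enter) and feeding it into the Dudley/VC bound of Lemma~\ref{lemma:3}. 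Your $\psi_\tau$-plus-remainder decomposition is essentially the device the paper reserves for the Bahadur representation in Theorem~\ref{thm:asyn}; it is workable here but adds the first-order-condition issue for the nonsmooth $\ell_\tau$ and does not remove the need for peeling. Second, the detour through Lemma~\ref{lemma:sc} is unnecessary: Assumption~\ref{ass:inv} (local strong convexity with constant $\kappa_\ell$) is already among the hypotheses ``Assumptions~\ref{ass:anti}--\ref{ass:design},'' and the stated rate is expressed in terms of $\kappa_\ell$, so the paper simply uses it directly rather than re-deriving it with extra side conditions on $\|\widehat\beta_\tau-\beta_\tau^*\|_2$.
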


    \begin{proof}[Proof of Theorem \ref{thm:asyn}]
By Theorem \ref{thm:cr}, we have $\|\widehat\beta_\tau-\beta^*\|_2=\rO_{\PP} ( (d/n)^{1/2} )$. 
For every $\beta \in \RR^d$, write $t=\beta-\beta^*$, and define 
\$
	 \Sigma_\tau(t)= \frac{1}{n}\sum_{i=1}^n\EE \{x_ix_i^\T 1(|\epsilon_i-x_i^\T t |\leq \tau)  \} , \\
  D_n(t)= \frac{1}{n}\sum_{i=1}^n \{ \ell_\tau(\epsilon_i - x_i^\T t ) - \ell_\tau(\epsilon_i)  \}  .
\$ 
Similarly to the bias analysis, we can show that, for any $t$ that satisfies $\| t \|_2 \leq r \lesssim (d/n)^{1/2}$,  a neighborhood of $0$,
	\$
	\EE   D_n(t)
	=\frac{1}{2}t^\T\Sigma_\tau(t) t+\rO\biggl(\frac{\|t\|_2\PP_n\|x\|_2}{\tau^{3+\eta}} \biggr)
	= \frac{1}{2}t^\T\Sigma_\tau(t) t+\ro\bigl(\|t\|_2^2\bigr) 
	\$
as long as $\tau^{-3-\eta}\PP_n \|x\|_2=\ro (\|t\|_2)$, which holds if $n=\ro(\tau^{6+2\eta})$ and  is implied by $\tau\propto n^{1/(2+\eta)}$.  
For $t$ in a shrinking neighborhood of $0$, let  $\tau>0$ satisfy $\max_{1\leq i \leq n} | x_i^\T t  | \leq \tau/2$. Then, we have
\$
&\|\Sigma_\tau(t)-\Sigma_\tau\|_2\\
&=\Biggl\| {\frac{1}{n}\sum_{i=1}^n \EE \bigl\{x_ix_i^\T 1(|\epsilon_i-x_i^\T t |\leq \tau)\bigr\}-  \frac{1}{n} \sum_{i=1}^n\EE \bigl\{x_ix_i^\T 1(|\epsilon_i |\leq \tau)\bigr\}}\Biggr\|_2\\
&{\leq}\Biggl\|  \frac{1}{n} \sum_{i=1}^n  x_ix_i^\T \, \EE \bigl\{ {1(\tau-|x_i^\T t|\leq |\epsilon_i |\leq \tau +|x_i^\T t|)}\bigr\}\Biggr\|_2\\
&\leq  \left\|  \frac{1}{n} \sum_{i=1}^n x_ix_i^\T \, \PP (\tau/2\leq |\epsilon_i |\leq 3\tau/2) \right\|_2\leq \frac{2m_1}{\tau } \|\Sigma\|_2\rightarrow 0, 
\$
provided that $\tau\rightarrow \infty$. It follows that
\$
\EE   D_n(t)
	= \frac{1}{2}t^\T\Sigma_\tau t+\ro\bigl( \|t\|_2^2 \bigr).
\$

Next, following a similar  proof of Theorem \ref{thm:cr}, we obtain the following maximal inequality
\$
\EE \sup_{\|t\|_2 \leq r} |D_n(t)- \EE   D_n(t) | \lesssim  r\sqrt{\frac{d}{n}},
\$
which further implies
\$
| D_n(t)-\EE D_n(t)| =\rO_{\PP} \bigl(  r (d/n)^{1/2} \bigr)
\$
uniformly over $t$ satisfying $\| t \|_2 \leq r$. Direct calculation yields 
	\$
	& \EE\{ \ell_\tau(\epsilon_i-x_i^\T t) - \ell_\tau(\epsilon_i) \} \\
	&=\EE\left\{\frac{1}{2}\epsilon_i^21(|\epsilon_i-x_i^\T t|\leq \tau)-\frac{1}{2}\epsilon_i^21(|\epsilon_i|\leq \tau)\right\} -\EE\bigg\{\epsilon_ix_i^\T t{1(|\epsilon_i-x_i^\T t|\leq \tau)}\bigg\}\\
	&\quad +\EE \bigg\{(x_i^\T t)^21(|\epsilon_i-x_i^\T t|\leq \tau)\bigg\}+\EE\left\{\frac{1}{2}\tau^21(|\epsilon_i-x_i^\T t|>\tau)-\frac{1}{2}\tau^21(|\epsilon_i|>\tau)\right\}.  
	\$
	Define the remainder
	\$
	R_i(t)  =\ell_\tau(\epsilon_i-x_i^\T t) -\ell_\tau(\epsilon_i)+{\psi_\tau(\epsilon_i)}x_i^\T t. 
	\$
By the definition of $\ell_\tau$, we have 
	\$
	R_i(t)&= \frac{1}{2}\epsilon_i^2\left\{1(|\epsilon_i-x_i^\T t|\leq \tau)-1(|\epsilon_i|\leq \tau)\right\} 
	+\epsilon_ix_i^\T t\left({1(|\epsilon_i|\leq  \tau)- 1(|\epsilon_i-x_i^\T t|\leq  \tau)   }\right)\\
	&\quad +(x_i^\T t)^21(|\epsilon_i-x_i^\T t|\leq \tau)+\frac{1}{2}\tau^2\left\{1(|\epsilon_i-x_i^\T t|>\tau)-1(|\epsilon_i|>\tau)\right\},
	\$
	which has an envelope $F: \RR^{d+1} \to \RR$ given by 
	\$
	F(z_i)&=\frac{1}{2}\epsilon_i^2 1(\tau-|x_i^\T t|\leq |\epsilon_i|\leq \tau+|x_i^\T t|)
	+{|\epsilon_ix_i^\T t| 1(\tau-|x_i^\T t|\leq \epsilon_i\leq \tau+|x_i^\T t|)}\\
	&\quad +(x_i^\T t)^21(|\epsilon_i-x_i^\T t|\leq \tau)+\frac{1}{2}\tau^21(\tau-|x_i^\T t|\leq |\epsilon_i|\leq \tau+|x_i^\T t|)
	\$
for $z_i = (x_i, \epsilon_i)$.		 For $\|t\|_2\leq r\leq \tau/(2\|x_i\|_2)\wedge 1$ and $\tau\geq 1$, 
	\$
	\EE F^2&\lesssim \EE\left\{ \epsilon_i^4 1(\tau-|x_i^\T t|\leq |\epsilon_i|\leq \tau+|x_i^\T t|)
	+\epsilon_i^2 (x_i^\T t)^21(\tau-|x_i^\T t|\leq \epsilon_i\leq \tau+|x_i^\T t|)\right\} \\
	&\quad +\EE\left\{(x_i^\T t)^41(|\epsilon_i-x_i^\T t|\leq \tau)+\tau^41(\tau-|x_i^\T t|\leq |\epsilon_i|\leq \tau+|x_i^\T t|)\right\}\\
	&\lesssim \frac{L_0|x_i^\T t|}{\tau^{1+\eta}}+ \frac{L_0|x_i^\T t|^2}{\tau^{3+\eta}} + (x_i^\T t)^4 \lesssim 
	 \|t\|_2^2, 
	\$
provided that  $\tau^{1+\eta}\gtrsim  \|x_i\|_2/\|t\|_2$, $\tau^{2+\eta}\gtrsim \|x_i\|_2^2  $ and $\|x_i\|_2^4\|t\|_2^2\lesssim1$, all of which hold if $n\gtrsim d^3$ and $\tau\gtrsim n^{1/(2+2\eta)}$.

Now we consider to bound  the local fluctuation uniformly 
	\$
	\sup_{t: \|t\|_2\leq r}\Biggl|\frac{1}{\sqrt{n}}\sum_{i=1}^n\left\{R_i(t)-\EE R_i(t)\right\}\Biggr|. 
	\$
	Similar to the proof of Theorem \ref{thm:unif}, we   write
	\$
	R_i(t)=\ell_\tau^+(\epsilon_i-x_i^\T t)+\ell_\tau^-(\epsilon_i-x_i^\T t)+\psi_\tau(\epsilon_i)x_i^\T t -\ell_\tau(\epsilon_i).
	\$
	This implies that the covering number for the class of functions $R_i(t)$, denoted by $\cG_r$, is upper bounded by
	\$
	N(s \|F\|_{Q,2}, \cG_r, \|\cdot\|_{Q,2})
	&\leq \left\{K (d+2) (16e)^{d+2}\left(\frac{1}{s}\right)^{2(d+1)}\right\}^3.
	\$
	Recall that $\PP f=\EE \PP_n f$. Applying the maximum inequality in Lemma \ref{lemma:5} with $\psi(x)=x^{1+\eta/4}$ and $\lambda=1$ acquires 	
	\$
	B:=\sqrt{\EE \max_{1\leq i \leq n} F^2(z_i)}\lesssim  ( \PP F^{2+\eta/2})^{\frac{1}{2+\eta/2}}= :n^{\frac{1}{2+\eta/2}}\|F\|_{\PP, 2+\eta/2}. 
	\$
	Applying Lemma \ref{lemma:4} with $\delta=\sqrt{\PP F^2}\lesssim  \|t\|_2$ yields 
	\$
	&\EE\sup_{ \|t\|_2\leq r}  \sqrt{n} |\PP_n R(t)-\PP R(t) | \\
	&\lesssim  J(\delta, F,\cF)\|F\|_{\PP,2}+\frac{BJ^2(\delta,  F,\cF)}{\delta^2\sqrt{n}}\\
	&\lesssim  \|F\|_{\PP,2} \left\{ J(\delta, F,\cF)+\frac{BJ^2(\delta,  F,\cF)}{\delta^2\sqrt{n}}\right\}\\
	&\lesssim  \|F\|_{\PP,2} \left\{ J(\delta, F,\cF)+n^{-\frac{\eta}{8+2\eta}}\frac{\|F\|_{\PP,2+\eta/2}J^2(\delta,  F,\cF)}{\|F\|_{\PP, 2}\delta^2}\right\}\\
	&\lesssim \|F\|_{\PP,2}J(\delta, F,\cF). 
	\$
	Similar to the proof of Theorem \ref{thm:unif}, we shall have  $J(\delta, F,\cF)\lesssim \|F\|_{\PP,2}\sqrt{d}$. This implies that uniformly in a shrinking neighborhood of  $0$,
	\$
	\sqrt{n} |\PP_n R(t)-\PP R(t)|=\rO_{\PP} \bigl(  \|t\|_2^2\sqrt{d}\bigr)=\ro_{\PP} \bigl(  \|t\|_2\bigr),
	\$
    if $n\gg d^2$ which is implied by $n\gtrsim d^3$. Hence
	\$
	D_n(t)= \EE D_n(t) + t^\T\PP_n\psi_\tau(\epsilon)x+\ro_{\PP}\bigl( \|t\|_2/\sqrt{n}\bigr) .
	\$
	Let $t_n= \widehat\beta_\tau -\beta^*$. Thus 
	\$
D_n(t_n)&= \EE D_n(t_n)  + t_n^\T\PP_n\psi_\tau(\epsilon)x+\ro_{\PP} \bigl(\|t_n\|_2/\sqrt{n}\bigr)\\
&= \frac{ 1+\ro_{\PP} (1)  }{2}\left\| \Sigma_\tau^{1/2} t_n- {\Sigma_\tau^{-1/2}\PP_n\psi(\epsilon)x}\right\|_2^2 - \frac{\|\Sigma_\tau^{-1/2}\PP_n\psi(\epsilon)x\|_2^2}{2}+ \ro_{\PP} (n^{-1}).
	\$
By the optimality of $\hat \beta_\tau$,  $D_n( t_n) \leq D_n (\Sigma_\tau^{-1/2}\PP_n\psi_\tau (\epsilon)x )$. Putting together the pieces, we conclude that 
	\$
	&\frac{1 +\ro_{\PP}(1)}{2}\left\| \Sigma_\tau^{1/2} t_n-{\Sigma_\tau^{-1/2}\PP_n\psi(\epsilon)x}\right\|_2^2 - \frac{\|\Sigma_\tau^{-1/2}\PP_n\psi(\epsilon)x\|_2^2}{2}+ \ro_{\PP} (n^{-1}) \\
	&\leq -\frac{\|\Sigma_\tau^{-1/2}\PP_n \psi_\tau(\epsilon)  x \|_2^2}{2}+ \ro_{\PP} (n^{-1}) ,
	\$
	or equivalently
	\$
	\sqrt{n}(\widehat\beta_\tau-\beta^*)= \frac{1}{\sqrt{n}}\sum_{i=1}^n \Sigma_\tau^{-1}\psi_\tau(\epsilon_i)x_i+\Delta_n,
	\$
	where $\|\Delta_n\|_2=\ro_\PP(1).$
	
\end{proof}

\subsection{Proof of Technical Results}	
This section collects the proofs for Lemma \ref{thm:bias} and Theorem \ref{thm:cr}. 	
	\begin{proof}[Proof of Lemma \ref{thm:bias}]
 Because $\beta^*_\tau\rightarrow \beta^*$ as $\tau\rightarrow \infty$, for any $r>0$, we have $\|\beta^*-\beta^*_\tau\|_2\leq r$ for sufficiently large $\tau$.   For simplicity, for a fixed predictor $x$ with $\|x\|_2\leq C_x d^{1/2}$, write $\mu= x^\T\beta$, $\mu^*=x^\T\beta^*$ and $\mu^*_\tau=x^\T\beta^*_\tau$.  Recall that $\epsilon$ denotes the model error  satisfying  $\EE\epsilon=0$. By direction calculations, we have
	\$
&\EE\ell_\tau\left( \epsilon\right) -\EE\ell_\tau\left(\epsilon+\mu^*-\mu^*_\tau\right)\\
	&=\EE\left\{\frac{1}{2}\epsilon^21(|\epsilon|\leq \tau)\right\} +\EE \left\{\frac{1}{2}\tau^21(|\epsilon|> \tau)\right\} \\
	&\quad - \left[\EE\left\{\frac{1}{2}(\epsilon+\mu^*-\mu^*_\tau)^21(|\epsilon+\mu^*-\mu^*_\tau|\leq \tau)\right\} +\EE\left\{\frac{1}{2}\tau^21(|\epsilon+\mu^*-\mu^*_\tau|>\tau)\right\}\right]\\
	&={\EE\left\{\frac{1}{2}\epsilon^2 \big(1(|\epsilon|\leq \tau)-1(|\epsilon+\mu^*-\mu^*_\tau|\leq \tau)\big)\right\}} {-{\EE\bigg\{\epsilon(\mu^*-\mu^*_\tau)1(|\epsilon+\mu^*-\mu^*_\tau|\leq \tau)\bigg\}}}\\
	&\quad -{\EE \bigg\{\frac{1}{2}(\mu^*-\mu^*_\tau)^21(|\epsilon+\mu^*-\mu^*_\tau|\leq \tau)\bigg\}}+{\EE\left\{\frac{1}{2}\tau^2\big(1(|\epsilon|>\tau)-1(|\epsilon+\mu^*-\mu^*_\tau|>\tau)\big)\right\}}\\
&=\Rom{1}{+\Rom{2}}+\Rom{3}+\Rom{4}. 
	\$
In what follows, we bound terms $\Rom{1}$ - $\Rom{4}$ respectively. 

Note that $|\mu^* - \mu_\tau^*| \leq C_x d^{1/2} r$.
Take $r>0$ sufficiently small such that $\Delta_0=\mu^*-\mu^*_\tau\in \big\{u\in \RR: |u|\leq \delta_0 \leq (m_{4+\eta}/m_1)^{1/(3+\eta)
}\big\}$.   We start with the first term. Assumption \ref{ass:anti} implies 
	\$
	\Rom{1}&\leq \EE\left\{\frac{1}{2}\epsilon^21(|\epsilon|\leq \tau)-\frac{1}{2}\epsilon^21(|\epsilon|\leq \tau-|\Delta_0|)\right\}\leq   \frac{L_0 |\Delta_0|}{2\tau^{3+\eta}}. 
	\$
	For term \Rom{2}, using the fact that $\EE \epsilon=0$,   
    we obtain
	\$
	\Rom{2}
	&=\EE\bigg\{\epsilon(\mu^*-\mu^*_\tau)1(|\epsilon+\mu^*-\mu^*_\tau|\leq \tau)\bigg\}=-\EE\bigg\{\epsilon(\mu^*-\mu^*_\tau)1(|\epsilon+\mu^*-\mu^*_\tau|> \tau)\bigg\}\\
	&\leq |\Delta_0|\frac{\EE\big\{|\epsilon||\epsilon+\Delta_0|^{3+\eta}1(|\epsilon+\Delta_0|>\tau)\big\}}{\tau^{3+\eta}}  \\
	&\leq |\Delta_0|\frac{ 2^{3+\eta}\big\{\EE(|\epsilon|^{4+\eta}1(|\epsilon+\Delta_0|>\tau))+|\Delta_0|^{3+\eta}\EE |\epsilon|\}}{\tau^{3+\eta}}\\
	&\leq \frac{2^{3+\eta} |\Delta_0|(m_{4+\eta}+ \delta_0^{3+\eta}m_1 )}{\tau^{3+\eta}}\leq \frac{2^{4+\eta} |\Delta_0|m_{4+\eta}}{\tau^{3+\eta}},
	\$	
where the last inequality uses the fact that $\delta_0^{3+\eta}m_1\leq m_{4+\eta}$, and the second inequality follows from the inequality   $(x+y)^a\leq 2^ax^a +2^ay^a$ for any $x,y,a\geq 0$ with the convention $0^0=1$. 
For term \Rom{4}, we have 
	\$
	\Rom{4}&\leq \EE\left[ \frac{1}{2}\tau^2 \left\{1( |\epsilon|> \tau ) - 1(|\epsilon|> \tau+|\Delta_0|)\right\} \right]\leq \frac{L_0 |\Delta_0|}{2\tau^{3+\eta}}.
	\$
Combining the bounds for terms \Rom{1}, \Rom{2}, \Rom{4} and moving \Rom{3} to the left-hand side,  we obtain
\$
&\EE\ell_\tau\left( \epsilon\right) -\EE\ell_\tau\left(\epsilon+\mu^*-\mu^*_\tau\right) +\frac{1}{2}\EE\left\{(\mu^*-\mu^*_\tau)^21(|\epsilon+\mu^*-\mu^*_\tau|\leq \tau)\right\}\\
&\leq \frac{L_0 |\Delta_0|}{\tau^{3+\eta}}+\frac{2^{4+\eta} |\Delta_0|m_{4+\eta}}{\tau^{3+\eta}}\leq  C \frac{|\Delta_0|}{\tau^{3+\eta}},
\$
where $C>0$ is a constant depending only on $L_0$ and $\eta$. 

Summing up the above inequalities over $\epsilon_1,\ldots, \epsilon_n$, and by Assumption \ref{ass:inv}, we conclude that
\$
&\kappa_\ell \left\|\beta^*-\beta^*_\tau\right\|_2^2+\frac{1}{2n}\sum_{i=1}^n\EE\bigl\{(x_i^\T\beta^*-x_i^\T\beta^*_\tau)^2 1(|\epsilon_i+ x_i^\T(\beta^*-\beta^*_\tau)|\leq \tau)\bigr\}\\
&\leq C \frac{\|\beta^*_\tau-\beta^*\|_2\PP_n\|x\|_2}{\tau^{3+\eta}}\leq C'\frac{ d^{1/2} \|\beta^*_\tau-\beta^*\|_2}{\tau^{3+\eta}}.
\$
This leads to the claimed bound immediately. 
\end{proof}

\begin{proof}[Proof of Theorem \ref{thm:cr}]
	Since $\widehat\beta_\tau \overset{\mathbb{P} }{\to}\beta_\tau^*$, we have $\|\widehat\beta_\tau-\beta_\tau^*\|_2\leq r$ for any $r>0$ with probability approaching one.  Therefore, using local strong convexity and the basic inequality, we have
	\$
	\kappa_\ell \|\widehat\beta_\tau-\beta^*_\tau\|_2^2  
	&\leq \EE \bigl\{ \ell_\tau(y-x^\T\widehat\beta_\tau)-\ell_\tau(y-x^\T\beta^*_\tau)\bigr\} \\
	&\leq (\PP_n-\PP)\bigl\{  \ell_\tau(y-x^\T\widehat\beta_\tau)-\ell_\tau(y-x^\T\beta^*_\tau)\bigr\}\\
	&\leq \sup_{\beta}(\PP_n-\PP)\big\{\ell_\tau(y-x^\T\beta)-\ell_\tau(y-x^\T\beta^*_\tau)\big\} 
	\$
	with probability approaching one.
	
In what follows, we tighten the upper bound in the proof of Theorem \ref{thm:unif} by considering a localized function class with $\beta$ falling in a local neighborhood of $\beta^*_\tau.$  For $(x,\epsilon)\in \RR^d \times \RR$ and $y=x^\T \beta^* + \epsilon$, define 
	\$
	f_\beta(x,\epsilon)=\ell_\tau(y- x^\T\beta)-\ell_\tau(y- x^\T\beta^*_\tau). 
	\$
	For sufficiently small $\delta>0$,  we consider the following localized function class
	\$
	\cF_\delta= \{f_\beta(x,\epsilon)=\ell_\tau(y- x^\T\beta)-\ell_\tau(y- x^\T\beta^*_\tau): \|\beta-\beta^*_\tau\|_2\leq \delta\}. 
	\$
	We first identify an envelope function $F$ for the function class $\cF_\delta$. Note that, for any $\beta$ satisfying $\|\beta-\beta^*_\tau\|_2 \leq \delta$, $|x^\T(\beta-\beta^*_\tau)|\leq \|x\|_2\delta$ and $|x^\T(\beta^*-\beta^*_\tau)|\leq \|x\|_2 b$, where $b : =\|\beta^*-\beta^*_\tau\|_2$ denotes the bias. Once again, we write $\mu= x^\T\beta$, $\mu^*=x^\T\beta^*$, $\mu^*_\tau=x^\T\beta^*_\tau$, and moreover, $\Delta_1=\mu^*-\mu$ and $\Delta_2=\mu^*-\mu^*_\tau$ such that $\Delta_1,\Delta_2 \in \big\{x: |x|\leq \delta_0 \leq \tau/2\big\}$. 
	Following a similar argument as in the proof of Theorem \ref{thm:bias}, we obtain 
	\$
	& \ell_\tau \left(y-\mu\right)-\ell_\tau\left(y-\mu^*_\tau\right)= \ell_\tau\left( \epsilon+\mu^*-\mu\right) -\ell_\tau\left(\epsilon+\mu^*-\mu^*_\tau\right)\\
	&=\frac{1}{2}\epsilon^21(|\epsilon+\mu^*-\mu|\leq \tau)-\frac{1}{2}\epsilon^21(|\epsilon+\mu^*-\mu^*_\tau|\leq \tau)\\
	&\quad +{\epsilon(\mu^*-\mu)1(|\epsilon+\mu^*-\mu|\leq \tau)-\epsilon(\mu^*-\mu^*_\tau)1(|\epsilon+\mu^*-\mu^*_\tau|\leq \tau)}\\
	&\quad +{\frac{1}{2}(\mu^*-\mu)^21(|\epsilon+\mu^*-\mu|\leq \tau)-\frac{1}{2}(\mu^*-\mu^*_\tau)^21(|\epsilon+\mu^*-\mu^*_\tau|\leq \tau)}\\
	&\quad+{\frac{1}{2}\tau^21(|\epsilon+\mu^*-\mu|>\tau)-\frac{1}{2}\tau^21(|\epsilon+\mu^*-\mu^*_\tau|>\tau)}\\
	&\leq {\frac{1}{2}\epsilon^21\left(\tau-|\Delta_1|\vee|\Delta_2|\leq |\epsilon|\leq \tau+|\Delta_1|\vee|\Delta_2|\right)}\\
        &\quad + \max\left\{ |\epsilon\Delta_1|1(|\epsilon+\Delta_1|> \tau), |\epsilon\Delta_2|1(|\epsilon+\Delta_2|> \tau)\right\}\\
        &\quad +\frac{1}{2} \max\left\{{\Delta_1^21(|\epsilon+\Delta_1|\leq \tau), \Delta_2^21(|\epsilon+\Delta_2|\leq \tau)}\right\}\\
        &\quad + \frac{1}{2}{\tau^21(\tau- |\Delta_1|\vee|\Delta_2|\leq |\epsilon|\leq \tau + |\Delta_1|\vee|\Delta_2|)}:= F(z). 
	\$
	where $z=(x ,\epsilon)$. Set $\delta_1=|\Delta_1|$, $\delta_2= |\Delta_2|$, and recall that $\delta_1\vee \delta_2\leq \delta_0\leq \tau/2$. We use the standard empirical process notation that $\PP f= \EE \PP_n f$ for any measurable function $f$. Then, the envelope function $F$ satisfies
	\$
	\PP F^2&\leq \PP\big\{\epsilon^4 1(\tau -\delta_0 \leq |\epsilon|\leq \tau+\delta_0 )+ 4\epsilon^2\delta_0 ^2 1(|\epsilon|>\tau-\delta_0 )\\
	&\qquad\quad +\delta_0 ^4+{\tau^4} 1(\tau-\delta_0 \leq |\epsilon|\leq \tau+\delta_0 ) \big\},
	\$
	where we use the inequality that $(\sum_{j=1} ^{4}a_j)^2\leq 4\sum_{j=1}^k a_j^2$. Recall that $m_k=\EE |\epsilon|^k$.   Applying Markov inequality acquires
	\begin{gather*}
	\PP\big\{\epsilon^4 1(\tau-\delta_0\leq |\epsilon|\leq \tau+\delta_0)\big\}\leq \frac{\PP\{ |\epsilon|^{4+\eta}1(\tau-\delta_0\leq |\epsilon|\leq \tau+\delta_0)\}}{\tau^\eta}\leq \frac{L_0\PP_n\|x\|_2(\delta + b)}{\tau^{1+\eta}}, \\
4\PP \{\epsilon^2\delta_0 ^2 1(|\epsilon|>\tau-\delta_0 )\}\leq \frac{2^{4+\eta}m_{4+\eta}\PP_n\|x\|_2^2 (\delta+ b)^2}{\tau^{2+\eta}}, \\
 \PP\delta_0^4\leq  (\delta+ b)^4 \PP_n\|x\|_2^4, \ 
\PP\big\{\tau^4 1(\tau-\delta_0\leq |\epsilon|\leq \tau+\delta_0)\big\}\leq \frac{2^{4+\eta}L_0 \PP_n\|x\|_2(\delta + b)}{\tau^{1+\eta}}.
	\end{gather*}
	Putting together the pieces, we obtain
	\$
	\PP F^2
	&\leq  4\Big\{  \frac{L_0 \PP_n\|x\|_2(\delta+ b)}{\tau^{1+\eta}}+   \frac{2^{4+\eta}m_{4+\eta}e_2^2 (\delta+b)^2}{\tau^{2+\eta}}\\
    &\qquad + (\delta+b)^4\PP_n\|x\|_2^4+  \frac{2^{4+\eta}L_0 (\delta +b) \PP_n\|x\|_2}{\tau^{1+\eta}} \Big\}\leq C' \delta^2 , 
	\$
	provided that 
    \#\label{eq:cond.thm.cr}
    b\lesssim\delta,~ \tau^{-(1+\eta)}\PP_n\|x\|_2\lesssim \delta, ~\tau^{-2-\eta} \PP_n\|x\|_2^2 \lesssim 1,~\delta^2\PP_n\|x\|_2^4\lesssim 1,
    \#
    where $C'$ only depends on $\eta, L_0, C_x$ and $m_{4+\eta}$. 
	
	Using a similar argument as in the proof of Theorem \ref{thm:unif} gives 
	\$
	&\EE \left\{\sup_{\|\beta-\beta_\tau^*\|_2\leq\delta} \left|(\PP_n-\PP)\left(\ell_\tau(y-x^\T \beta)-\ell_\tau(y-x^\T\beta_\tau^*)\right) \right|\right\}\\
	&\lesssim \frac{1}{\sqrt n}\|F\|_{L_2(\PP)}\int_0^{1} \sqrt{1+\log N(2s\|F\|_{L_2(\PP_n)}, \cF_\delta, L_2(\PP_n))}\d s. 
	\$
	The covering number of the function class $\cF_\delta$ can be bounded as
	\$
	N(2s\|F\|_{L_2(\PP_n)}, \cF_\delta, L_2(\PP_n))&\leq N(2s\|F\|_{L_2(\PP_n)}, \ell_\tau^+\circ\cV+\ell_\tau^-\circ\cV, L_2(\PP_n))\\
	&\leq  \left\{K (d+2) (16e)^{d+2}\left(\frac{1}{s}\right)^{2(d+1)}\right\}^2.
	\$
	Hence, Lemma \ref{lemma:3} yields
	\$
	&\EE \left\{\sup_{\|\beta-\beta_\tau^*\|_2\leq\delta} \left|(\PP_n-\PP)\left(\ell_\tau(y-x^\T \beta)-\ell_\tau(y-x^\T\beta_\tau^*)\right) \right|\right\}
	\leq  C\delta \sqrt{\frac{d}{n}} ,
	\$
    for some constant $C$   depending only on $\eta, L_0, C_x$ and $m_{4+\eta}$.

	Now we are ready to establish the convergence rate for $\widehat\beta_\tau$. Let $\phi_n(\delta):= C\delta (d/n)^{1/2}$. To prove $\|\widehat\beta_\tau-\beta^*_\tau\|_2=\rO_{\mathbb P}(\delta_n)$ for some $\delta_n$, it suffices to show that
	\$
 \PP\left(\|\widehat\beta_\tau-\beta^*_\tau\|_2>2^M \delta_n\right)\rightarrow 0  ~\text{as}~ M\rightarrow\infty. 
	\$
	For $\phi_n(\cdot)$ defined above, we have $\phi_n(c\delta) =  c\phi_n(\delta) $ for all $c, \delta>0$. It then follows from the basic inequality that
	\$
	\PP\bigl(\|\widehat\beta_\tau-\beta^*_\tau\|_2>2^M \delta_n\bigr)
	& =\sum_{j>M} \PP\left(2^{j-1}\delta_n<\|\widehat\beta_\tau-\beta^*_\tau\|_2\leq 2^j\delta_n\right)\\
	&\leq\sum_{j>M} \PP \left(\kappa_\ell  2^{2j-2}\delta_n^2\lesssim \sup_{f\in\cF_\delta}(\PP_n-\PP)f, \|\widehat\beta_\tau-\beta^*_\tau\|_2\leq 2^j\delta_n\right) \\
	&=\sum_{j>M}\PP \left( \sup_{f\in\cF_{2^j\delta_n}}(\PP_n-\PP)f\gtrsim 2^{2j-2}\delta_n^2 \kappa_\ell \right)\\
	&\leq \sum_{j>M}\frac{1}{\kappa_\ell  2^{2j-2}\delta^2_n}\EE \left\{\sup_{f\in\cF_{2^j\delta_n}} \left|(\PP_n-\PP)f \right|\right\}\\
	&\leq 4\sum_{j>M}\frac{\phi(2^j\delta_n)}{\kappa_\ell  2^{2j}\delta^2_n}=\frac{4}{\kappa_\ell } \sum_{j>M} \frac{\phi(2^j\delta_n)}{2^{2j}\delta^2_n}\\
	&\leq \frac{4}{\kappa_\ell } \frac{\phi(\delta_n)}{\delta_n^2}\sum_{j>M} \frac{1}{2^{j}}. 
	\$
    We choose $\delta_n$ in a way that $\phi_n(\delta_n) = \kappa_\ell \delta_n^2$, implying $\delta_n = C \kappa_\ell^{-1}(d/n)^{1/2}$. Consequently,
    \$
    \PP\bigl(\|\widehat\beta_\tau-\beta^*_\tau\|_2>2^M \delta_n\bigr) 
    \leq 4\sum_{j>M} 2^{-j} \to 0 ~\mbox{ as }~ M \to \infty , 
    \$
    which in turn implies $\|\widehat\beta_\tau-\beta^*_\tau\|_2 = \rO_{\mathbb{P}} (\kappa_\ell^{-1} (d/n)^{1/2})$, as claimed.

    It remains to verify that \eqref{eq:cond.thm.cr} holds under the conditions of the theorem. By Lemma \ref{thm:bias}, $b\lesssim \delta_n$ if $\tau\gtrsim n^{1/(6+2\eta)}$. Moreover, $\tau^{-(1+\eta)}\PP_n \|x\|_2\lesssim \delta_n$ if $\tau\gtrsim n^{1/(2+2\eta)}$; $\tau^{-2-\eta}\PP_n\|x\|_2^2\lesssim 1$ if $\tau\gtrsim d^{1/(2+\eta)}$;   $\delta_n^2\PP_n\|x\|_2^4\lesssim 1$ if $n\gtrsim d^3$. Thus \eqref{eq:cond.thm.cr} is satisfied under the constraints $n\gtrsim d^3$ and $\tau\gtrsim n^{1/(2+2\eta)}$. This completes the proof. 
	\end{proof}

	\section{Inequalities for Empirical Processes}
	
	
	The first lemma, which is a direct consequence of the bounded differences inequality, provides a concentration inequality for the suprema of bounded empirical processes.
	
	\begin{lemma}\label{lemma:1}
		Let $\cF$ be  a class of measurable functions $f: \cX \rightarrow \RR $ that are uniformly $b$-bounded, that is, $\sup_{x\in \cX}|f(x) | \leq b$. 
		Then, for any $n\geq 1$ and $0\leq \delta\leq 1$, 
		\$
		\left\|\PP_n -\PP\right\|_\cF\leq \EE\|\PP_n-\PP\|_\cF+ b \sqrt{\frac{2\log(1/\delta)}{n}} 
		\$
		with probability at least $1-\delta$. 
	\end{lemma}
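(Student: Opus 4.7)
The plan is to recognize this as a textbook application of McDiarmid's bounded differences inequality to the random variable $Z = \|\PP_n - \PP\|_\cF$, viewed as a function of the independent sample $X_1, \ldots, X_n$. The only content is to verify the bounded differences condition; the rest is inversion of the resulting Gaussian-type tail.

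First I would write $Z(X_1, \ldots, X_n) = \sup_{f \in \cF} |n^{-1} \sum_{i=1}^n (f(X_i) - \EE f(X_i))|$, and show that replacing $X_i$ by an independent copy $X_i'$ while fixing all other coordinates changes $Z$ by at most $2b/n$. This follows because for any fixed $f \in \cF$, the quantity $n^{-1} \sum_j (f(X_j) - \EE f(X_j))$ changes by $n^{-1}(f(X_i') - f(X_i))$, whose absolute value is at most $2b/n$ since $|f| \leq b$ pointwise; taking supremum over $f$ and using that $|\sup_f A(f) - \sup_f B(f)| \leq \sup_f |A(f) - B(f)|$ transfers the bound to $Z$.

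Next I would apply McDiarmid's inequality with differences $c_i = 2b/n$, yielding
\[
\PP(Z - \EE Z \geq t) \leq \exp\!\left(-\frac{2 t^2}{\sum_{i=1}^n c_i^2}\right) = \exp\!\left(-\frac{n t^2}{2 b^2}\right).
\]
Setting the right-hand side equal to $\delta$ and solving gives $t = b\sqrt{2\log(1/\delta)/n}$, which is precisely the desired inequality.

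I don't anticipate any real obstacle here: the bounded-differences verification is the only nontrivial step, and it is immediate from uniform boundedness of $\cF$. The assumption $\delta \in [0,1]$ is only needed so that $\log(1/\delta) \geq 0$ and the bound is meaningful (the boundary cases $\delta = 0$ and $\delta = 1$ being trivial).
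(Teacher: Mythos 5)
Your proof is correct and is exactly the route the paper intends: the paper states Lemma \ref{lemma:1} as ``a direct consequence of the bounded differences inequality'' and does not write out the details, which are precisely your McDiarmid argument with differences $c_i = 2b/n$ and the tail inversion $t = b\sqrt{2\log(1/\delta)/n}$. Nothing is missing.
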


	Let $Z_n^1=(Z_1,\ldots, Z_n)^\T$, where $Z_i$'s are identically and independently distributed random variables.  For a function class $\cF$, 
	define the  empirical Rademacher complexity $\cR(\cF(Z_1^n))$ and the Rademacher complexity $\cR_n(\cF)$ as
	\$
	&\cR(\cF(Z_1^n)/n)\coloneqq \EE_{\varepsilon}\left[\sup_{f\in \cF} \left|\frac{1}{n} \sum_{i=1}^n \varepsilon_i f(Z_i)\right|\right], \\
	&\cR_n(\cF)\coloneqq \EE_{Z}\left[\cR(\cF(Z_1^n)/n)\right]=\EE_{Z,\varepsilon}\left[\sup_{f\in \cF} \left|\frac{1}{n} \sum_{i=1}^n \varepsilon_i f(Z_i)\right|\right]. 
	\$
		Define  Dudley's entropy integral as
	\$
	J(\delta, F,\cF):=\int_0^\delta\sqrt{1+\log\sup_Q N( s\|F\|_{Q,2}, \cF, \|\cdot\|_{Q,2})}  \, \d s.
	\$
	
	\begin{lemma}\label{lemma:3}
		Let $F$ be an envelope for the class $\cF$ with $\PP F^2<\infty$. Then 
		\begin{gather*}
		\EE\sup_{f\in\cF} \left(\sqrt{n} |\PP_nf-\PP f |\right)\leq C \|F\|_{L^2(\PP)}J(1, F,\cF),
		\end{gather*} 
		\text{where} $C$ is a universal constant.
	\end{lemma}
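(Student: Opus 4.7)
\textbf{Proof proposal for Lemma \ref{lemma:3}.} The plan is to combine three classical empirical-process tools: symmetrization, sub-Gaussian concentration of Rademacher averages, and Dudley's chaining argument. First, by the standard symmetrization inequality (e.g.\ Lemma 2.3.1 of \cite{wellner1996weak}), introducing i.i.d.\ Rademacher signs $\varepsilon_1,\ldots,\varepsilon_n$ independent of $Z_1,\ldots,Z_n$,
\[
\EE\sup_{f\in\cF}\sqrt{n}\,|\PP_n f-\PP f|
\;\leq\; 2\,\EE\sup_{f\in\cF}\Bigl|\tfrac{1}{\sqrt n}\sum_{i=1}^n \varepsilon_i f(Z_i)\Bigr|.
\]
This reduces the problem to controlling a Rademacher process on $\cF$.

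Next, I would condition on $Z_1^n=(Z_1,\ldots,Z_n)$. Given the sample, the map $f\mapsto n^{-1/2}\sum_i \varepsilon_i f(Z_i)$ is a sub-Gaussian process with respect to the pseudo-metric $\rho(f,g)=\|f-g\|_{L^2(\PP_n)}$: by Hoeffding's inequality, the increments satisfy $\PP_\varepsilon\{|n^{-1/2}\sum_i \varepsilon_i(f-g)(Z_i)|>t\}\leq 2\exp\{-t^2/(2\rho(f,g)^2)\}$. The diameter of $\cF$ under $\rho$ is at most $2\|F\|_{L^2(\PP_n)}$, since $|f|\leq F$ pointwise.

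Now apply Dudley's entropy integral bound (the classical chaining inequality for sub-Gaussian processes) conditionally on $Z_1^n$:
\[
\EE_\varepsilon\sup_{f\in\cF}\Bigl|\tfrac{1}{\sqrt n}\sum_{i=1}^n \varepsilon_i f(Z_i)\Bigr|
\;\leq\; C\int_0^{2\|F\|_{L^2(\PP_n)}} \sqrt{\log N(s,\cF,\|\cdot\|_{L^2(\PP_n)})}\,{\rm d}s .
\]
Substituting $s=u\,\|F\|_{L^2(\PP_n)}$ converts the integral to the scale-free form, and using $\sup_Q$ in the definition of $J$ dominates the $Q=\PP_n$ instance. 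Absorbing the constant and the leading $\|F\|_{L^2(\PP_n)}$ factor, this conditional bound becomes $C\,\|F\|_{L^2(\PP_n)}\,J(1,F,\cF)$.

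Finally, I take expectation over $Z_1^n$. Since $J(1,F,\cF)$ is deterministic (its definition involves $\sup_Q$), it pulls out, and Jensen's inequality yields $\EE\|F\|_{L^2(\PP_n)} = \EE\bigl(n^{-1}\sum_i F^2(Z_i)\bigr)^{1/2}\leq \|F\|_{L^2(\PP)}$. Combining with the symmetrization step completes the proof.

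The only subtle point is the interface between the random metric $L^2(\PP_n)$ produced by chaining and the deterministic $L^2(\PP)$-norm appearing in the statement; this is the reason the uniform entropy integral $J$ is defined with a supremum over all probability measures $Q$. Apart from that, each step is a textbook invocation, so I do not anticipate any technical obstacle—the proof is essentially a bookkeeping exercise in combining symmetrization with sub-Gaussian chaining.
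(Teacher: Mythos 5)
Your proposal is correct and follows essentially the same route as the paper: symmetrization to a Rademacher process, sub-Gaussian chaining (Dudley's entropy integral) conditionally on the sample with respect to the $L^2(\PP_n)$ pseudometric, a change of variables to pass to the uniform entropy integral $J$, and Jensen's inequality to replace $\EE\|F\|_{L^2(\PP_n)}$ by $\|F\|_{L^2(\PP)}$. The only cosmetic differences are that the paper packages the conditional chaining step as its Lemma~\ref{lemma:2} (stated for $T\cup\{0\}$ so the integral runs only to $\sup_{f}\sqrt{\PP_n f^2}$ rather than to twice that), and that absorbing your upper limit $2\|F\|_{L^2(\PP_n)}$ into $J(1,F,\cF)$ uses the $1+\log$ in the integrand; neither affects correctness.
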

	\begin{proof}[Proof of Lemma \ref{lemma:3}]
		We first bound the expectation of suprema  of the empirical process by the Rademacher complexity. By a symmetrization argument, we obtain 
		\$
		\EE \|\PP_n-\PP\|_\cF\leq  2\cR_n(\cF). 
		\$
		It remains to bound the right hand side.  For a general class of functions $\cF$, applying  Lemma  \ref{lemma:2} with  $T=\cF(Z_n^1)\coloneqq \big\{(f(Z_1),\ldots, f(Z_n)): f\in \cF\big\}$
		gives 
		\#\label{thm:eq:1}
		\EE \left[\sup_{f\in \cF} \left|\frac{1}{\sqrt n}\sum_{i=1}^n\varepsilon _i f(Z_i)\right|\right]\leq  C\EE \int_0^{D_\cF} \sqrt{\log N (s, \cF(Z_n^1)\cup \{0\}, d_n)}\d s,
		\#
		where $D_\cF=\sup_{f\in\cF}\sqrt{\PP_n f^2}$. 
		Let $L_2(\PP_n)$ refer to the pseudometric on $\cF$ given by
		\$
		(f,g)\mapsto  \sqrt{\frac{1}{n}\sum_{i=1}^n \big(f(X_i)-g(X_i)\big)^2}. 
		\$
		Rewrite \eqref{thm:eq:1} as
		\$
		\EE \left[\sup_{f\in \cF} \left|\frac       {1}{\sqrt n}\sum_{i=1}^n\varepsilon _i f(Z_i)\right|\right]
		&\leq  C \EE \int_0^{D_\cF} \sqrt{\log N (s, \cF\cup \{0\}, L_2(\PP_n))}\d s\\
		&\leq  C\EE\int_0^{D_\cF} \sqrt{1+\log N (s, \cF, L_2(\PP_n))}\d s.
		\$
		
		Now because $F$ is an envelope for the class $\cF$, that is 
		\$
		\sup_{f\in \cF}|f(x)|\leq F(x),~\text{for every}~ x\in \cX,
		\$
		we have $\sup_{f\in \cF}\sqrt{\PP_n f^2}\leq \sqrt{\PP_n F^2}$. Thus 
		\$
		\EE\sup_{f\in\cF} \left(\sqrt{n} |\PP_nf-\PP f |\right)&\leq2 \EE \left[\sup_{f\in \cF} \left|\frac  {1}{\sqrt n}\sum_{i=1}^n\varepsilon _i f(Z_i)\right|\right]\tag{Symmetrization}\\
		&\leq  C\EE \int_0^{D_\cF} \sqrt{1+\log N (s, \cF, L_2(\PP_n))}\d s\\
		&\leq C\EE \int_0^{\sqrt{\PP_n F^2}} \sqrt{1+\log N (s, \cF, L_2(\PP_n))}\d s\\
		&\leq C\EE \sqrt{\PP_n F^2}\int_0^{1} \sqrt{1+\log N (s\sqrt{\PP_n F^2}, \cF, L_2(\PP_n))}\d s\\
		&\leq C\sqrt{\PP F^2}\int_0^{1} \sqrt{1+\log \sup_{\QQ} N (s\sqrt{\QQ F^2}, \cF, L_2(\QQ))}\d s. 
		\$
	\end{proof}

	The next lemma bounds the expectation of localized empirical process \citep{chernozhukov2014gaussian}, which sharpens the bound obtained by directly applying Lemma \ref{lemma:2}.

	\begin{lemma}\label{lemma:4}
		Suppose that $\|F\|_{\PP,2}<\infty.$ Let $\sigma^2>0$ be any positive constant such that $\sup_{f\in \cF} \PP f^2\leq \sigma^2\leq \|F\|_{\PP, 2}$. Let $\delta=\sigma/\|F\|_{\PP,2}$. Define $B=\sqrt{\EE\max_{1\leq i\leq n}F^2(X_i)}$. Then 
		\$
		\EE \bigl( \sqrt{n}\left\|\PP_n-\PP\right\|_\cF \bigr) \leq C\Biggl\{ J(\delta, F,\cF)\|F\|_{\PP,2}+\frac{BJ^2(\delta, F, \cF)}{\delta^2\sqrt{n}}\Biggr\},
		\$
		where $C>0$ is a universal constant.
	\end{lemma}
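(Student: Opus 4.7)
The plan is to follow the standard localized chaining argument in the style of van der Vaart--Wellner and Chernozhukov--Chetverikov--Kato. First I would symmetrize: $\EE\sqrt{n}\|\PP_n-\PP\|_\cF \leq 2\EE\sup_{f\in\cF}|n^{-1/2}\sum_{i=1}^n\varepsilon_i f(X_i)|$. Then, conditioning on $X_1,\dots,X_n$, I would apply Dudley's entropy bound (Lemma \ref{lemma:3}'s pointwise form) to the Rademacher process, obtaining
\[
\EE\Bigl[\sup_{f\in\cF}\Bigl|\tfrac{1}{\sqrt{n}}\sum_{i=1}^n\varepsilon_i f(X_i)\Bigr|\,\Big|\,X_1^n\Bigr] \lesssim \int_0^{\hat\delta\sqrt{\PP_n F^2}}\!\!\sqrt{1+\log N(s,\cF,L_2(\PP_n))}\,\rd s,
\]
where $\hat\delta = \sup_{f\in\cF}\sqrt{\PP_n f^2}/\sqrt{\PP_n F^2}$. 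Rescaling $s\mapsto s\sqrt{\PP_n F^2}$ and passing to the supremum over probability measures $Q$ bounds this by $\sqrt{\PP_n F^2}\,J(\hat\delta,F,\cF)$.

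The delicate step is to replace the random upper limit $\hat\delta$ by the deterministic $\delta = \sigma/\|F\|_{\PP,2}$. The key lemma is a tail bound on $\sup_{f\in\cF}\PP_n f^2$: by symmetrization and the contraction principle (applied to the $1$-Lipschitz map $x\mapsto x^2/(2\|F\|_\infty)$, or more cleanly with a truncation argument using $B$),
\[
\EE\sup_{f\in\cF}|(\PP_n-\PP)f^2| \lesssim \frac{B}{\sqrt{n}}\,\EE\sup_{f\in\cF}\Bigl|\tfrac{1}{\sqrt{n}}\sum_{i=1}^n\varepsilon_i f(X_i)\Bigr|^{1/2}\cdot(\text{entropy factor}),
\]
which after Jensen yields $\EE\sup_f\PP_n f^2\leq \sigma^2 + (\text{envelope correction})$. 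I would then use monotonicity and concavity of $J(\cdot,F,\cF)$ (it is subadditive in its first argument up to a constant, since $J(a+b)\leq J(a)+J(b)$ when the integrand is decreasing) together with Jensen's inequality to write
\[
\EE\sqrt{\PP_n F^2}\cdot J(\hat\delta,F,\cF) \leq \|F\|_{\PP,2}\,J(\delta,F,\cF) + \|F\|_{\PP,2}\,J\!\left(\sqrt{\EE|\hat\delta^2-\delta^2|},F,\cF\right).
\]

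Setting $Z = \EE\sqrt{n}\|\PP_n-\PP\|_\cF$ and combining the pieces gives a self-bounding inequality roughly of the form $Z \lesssim \|F\|_{\PP,2}J(\delta,F,\cF) + \|F\|_{\PP,2}J\!\left(C\sqrt{B Z/(n\|F\|_{\PP,2}^2)},F,\cF\right)$. Using $J(cs)\leq c J(s)$ only when monotonicity is favorable, one splits into the cases where the second term dominates the first or not, and in the former case solves for $Z$, yielding the $BJ^2(\delta,F,\cF)/(\delta^2\sqrt{n})$ correction. The main obstacle is the book-keeping in this self-bounding step: one must carefully handle the concavity/monotonicity of the entropy integral $J$ and keep track of the envelope $B$ that enters through the localization of $\PP_n f^2$ around $\sigma^2$. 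Everything else is routine symmetrization and chaining.
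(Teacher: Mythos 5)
First, a point of comparison: the paper does not actually prove Lemma \ref{lemma:4} --- it is stated with a citation to \cite{chernozhukov2014gaussian} and used as a black box. Your outline reconstructs precisely the argument of that reference (symmetrization, conditional Dudley bound with the random radius $\hat\delta$, localization of $\sup_{f\in\cF}\PP_n f^2$ around $\sigma^2$ with an envelope correction, and a self-bounding inequality resolved via $J(c\delta,F,\cF)\le cJ(\delta,F,\cF)$ for $c\ge 1$), so the strategy is the right one and is the same one the paper implicitly relies on.

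That said, two steps in your sketch are not correct as written and would need repair. (i) The displayed bound on $\EE\sup_{f\in\cF}|(\PP_n-\PP)f^2|$ is malformed: the contraction principle cannot be invoked with Lipschitz constant $\|F\|_\infty$ since the envelope is only assumed square-integrable, and the factor $\EE\sup_f|n^{-1/2}\sum_i\varepsilon_i f(X_i)|^{1/2}$ times an ``entropy factor'' is not what the argument produces. The correct route is $\EE\sup_f|(\PP_n-\PP)f^2|\le 4n^{-1}\EE[\max_{1\le i\le n}F(X_i)\cdot\sup_f|\sum_i\varepsilon_i f(X_i)|]$, then Cauchy--Schwarz to extract $B$, and then the Hoffmann--J{\o}rgensen inequality to convert the resulting \emph{second} moment of the supremum back into its \emph{first} moment (which is the quantity $Z$ you are bounding) plus another $B$ term; without this last conversion the self-bounding inequality does not close. (ii) The step $\EE[\sqrt{\PP_nF^2}\,J(\hat\delta,F,\cF)]\le\|F\|_{\PP,2}J(\delta,F,\cF)+\cdots$ is not a direct application of Jensen to $J$ alone, because $\sqrt{\PP_nF^2}$ and $\hat\delta$ are dependent random quantities multiplying each other; one needs the joint concavity of the map $(x,y)\mapsto\sqrt{y}\,J(\sqrt{x/y},F,\cF)$ --- this is the key technical lemma in Chernozhukov et al.\ --- after which subadditivity of $J$ (which you correctly note follows from monotonicity of the integrand) splits off the $\sigma^2$ piece from the correction. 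With those two repairs, the case analysis you describe does deliver the stated bound $C\{J(\delta,F,\cF)\|F\|_{\PP,2}+BJ^2(\delta,F,\cF)/(\delta^2\sqrt{n})\}$.
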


	We need another useful maximal inequality. 
	\begin{lemma}\label{lemma:5}
		Let $\psi: R\mapsto R^+$ be a convex function that is strictly increasing on $R^+$. Let $X_1,\ldots, X_n$ be $n$ random variables. Then
		\$
		\EE \max_{1\leq i\leq n} X_i\leq \inf_{\lambda>0}\frac{1}{\lambda}\psi^{-1}\left(\sum_{i=1}^n\EE \psi(\lambda X_i)\right). 
		\$
		
	\end{lemma}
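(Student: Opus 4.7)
\textbf{Proof proposal for Lemma \ref{lemma:5}.} The statement is a classical maximal inequality in the style of Pisier/Ledoux--Talagrand, and the plan is to combine Jensen's inequality with the fact that an increasing convex $\psi$ commutes with $\max$. Concretely, I would fix an arbitrary $\lambda>0$ and chain the following four observations, each essentially one line.

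First, since $\psi$ is convex and $\lambda>0$, Jensen's inequality applied to the random variable $\lambda\max_i X_i$ gives
\[
\psi\!\left(\lambda\,\EE\max_{1\leq i\leq n} X_i\right) \leq \EE\,\psi\!\left(\lambda \max_{1\leq i\leq n} X_i\right).
\]
Second, because $\psi$ is strictly increasing on $\RR^+$ and takes values in $\RR^+$ (so in particular is monotone where it matters), $\psi(\lambda\max_i X_i)=\max_i \psi(\lambda X_i)$. Third, since each $\psi(\lambda X_i)\geq 0$, the maximum is dominated by the sum, so $\EE\max_i\psi(\lambda X_i)\leq \sum_{i=1}^n\EE\psi(\lambda X_i)$. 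Fourth, $\psi$ strictly increasing on $\RR^+$ makes $\psi^{-1}$ well-defined and monotone on the relevant range, so applying $\psi^{-1}$ to both sides of the resulting inequality yields
\[
\lambda\,\EE\max_{1\leq i\leq n} X_i \leq \psi^{-1}\!\left(\sum_{i=1}^n \EE\psi(\lambda X_i)\right).
\]
Dividing by $\lambda$ and taking infimum over $\lambda>0$ gives the claim.

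The only technical care needed is around the domain of $\psi^{-1}$ and the sign of the $X_i$. Strictly, the second step uses that $\psi$ is nondecreasing on the range of the $\lambda X_i$; for $X_i$ possibly negative one can either extend $\psi$ to $\RR$ preserving convexity and nondecreasingness, or without loss of generality replace $X_i$ by $X_i^+$ (noting $\max_i X_i\leq \max_i X_i^+$ when at least one $X_i\geq 0$, and otherwise the bound is trivial). This is the only mildly delicate point; the rest is bookkeeping. No deep probabilistic tool is required, so I do not anticipate a real obstacle.
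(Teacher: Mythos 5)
Your proposal is correct, and it is the canonical argument for this inequality (Jensen applied to $\psi(\lambda\,\EE\max_i X_i)$, then $\psi(\max)=\max\psi$ by monotonicity, then bounding the max of nonnegative terms by their sum, then inverting $\psi$). The paper itself states Lemma \ref{lemma:5} without proof, treating it as a known maximal inequality (it is essentially Lemma 2.2.2 of van der Vaart and Wellner), so there is no alternative route to compare against; your handling of the only delicate point --- monotonicity of $\psi$ on the range of the $\lambda X_i$ and the domain of $\psi^{-1}$ --- is appropriate, and in the paper's application the variables are nonnegative so the caveat is harmless.
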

	
	\subsection{Technical Lemmas}

	For $s, t\in \RR^n$, define the metric $d_n(s,t)$ as
	\$
	d_n(s,t)\coloneqq n^{-1/2}\|s-t\|_2=\sqrt{\frac{1}{n}\sum_{i=1}^n (s_i-t_i)^2},
	\$
	a re-scaled Euclidean metric.
	
	\begin{lemma}\label{lemma:2}
		Let $\varepsilon_1,\ldots, \varepsilon_n$ be i.i.d. Rademacher random variables. Suppose $T\subseteq \RR^n$ and consider the stochastic process $\{X_t: t\in T\}$ given by 
		$
		X_t\coloneqq n^{-1/2} \sum_{i=1}^n{\varepsilon_i t_i}. 
		$
		Then 
		\$
		\EE \left[\sup_{t\in T} \left|\frac{1}{\sqrt n}\sum_{i=1}^n\varepsilon_i t_i\right|\right]\leq C\EE \left[ \int_0^{D} \sqrt{\log N (s, T\cup \{0\}, d_n)}\d s \right],
		\$
		where $D=\max_{t\in T} \sqrt{\sum_{i=1}^n t_i^2/n}$ and $C>0$ is a universal constant. 
	\end{lemma}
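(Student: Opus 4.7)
The plan is to recognize this as the classical Dudley entropy-integral bound for sub-Gaussian processes, specialized to the Rademacher process $X_t = n^{-1/2}\sum_{i=1}^n \varepsilon_i t_i$ indexed by $T\cup\{0\}$. First I would verify sub-Gaussian increments of $\{X_t\}$ with respect to $d_n$. For any $s,t$ and any $\lambda \in \RR$, each $\varepsilon_i(s_i-t_i)$ is a bounded symmetric random variable with range $[-|s_i-t_i|,\,|s_i-t_i|]$, so Hoeffding's lemma together with independence gives
$$\EE \exp\{\lambda (X_s - X_t)\} = \prod_{i=1}^n \EE \exp\{\lambda n^{-1/2}\varepsilon_i(s_i-t_i)\} \leq \exp\{\lambda^2 d_n(s,t)^2/2\}.$$
Since $0 \in T\cup\{0\}$ and $X_0 = 0$, it then suffices to bound $\EE \sup_{t \in T\cup\{0\}} |X_t - X_0|$ in terms of the metric $d_n$.

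Next I would run the standard chaining argument. Set $D = \sup_{t\in T\cup\{0\}} d_n(t,0)$ and, for each integer $k \geq 0$, choose a minimal $\epsilon_k$-net $T_k \subseteq T\cup\{0\}$ under $d_n$ with $\epsilon_k = 2^{-k}D$ (in particular $T_0 = \{0\}$ works), and let $\pi_k$ denote nearest-point projection onto $T_k$. For each fixed $t$ and $K \geq 1$, use the telescoping identity
$$X_t - X_0 = (X_t - X_{\pi_K(t)}) + \sum_{k=1}^K (X_{\pi_k(t)} - X_{\pi_{k-1}(t)}).$$
A standard approximation argument (reducing to finite index sets whose union is dense in $T$) drives the residual $X_t - X_{\pi_K(t)}$ to zero as $K \to \infty$. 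Since $d_n(\pi_k(t),\pi_{k-1}(t)) \leq d_n(t,\pi_k(t)) + d_n(t,\pi_{k-1}(t)) \leq 3\epsilon_{k-1}$ and there are at most $|T_k|\,|T_{k-1}| \leq N(\epsilon_k,T\cup\{0\},d_n)^2$ distinct pairs at level $k$, the sub-Gaussian maximal inequality $\EE \max_{i\leq N} |Y_i| \lesssim \sigma\sqrt{\log N}$ applied level by level yields
$$\EE \max_{t}|X_{\pi_k(t)} - X_{\pi_{k-1}(t)}| \leq C\,\epsilon_{k-1}\sqrt{\log N(\epsilon_k, T\cup\{0\},d_n)}$$
for a universal $C$ (the $\log N^2 \sim 2\log N$ is absorbed into $C$).

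Finally, I would convert the dyadic series to the entropy integral: since $s \mapsto \sqrt{\log N(s,T\cup\{0\},d_n)}$ is nonincreasing,
$$\epsilon_{k-1}\sqrt{\log N(\epsilon_k,T\cup\{0\},d_n)} \leq 4 \int_{\epsilon_{k+1}}^{\epsilon_k} \sqrt{\log N(s,T\cup\{0\},d_n)}\,\d s,$$
so summing over $k \geq 1$ telescopes to at most $C' \int_0^D \sqrt{\log N(s,T\cup\{0\},d_n)}\,\d s$, which is the claimed bound (taking outer expectation handles the case where $T$ or $D$ depends on auxiliary data, as used in Lemma~\ref{lemma:3}). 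The main obstacle is bookkeeping rather than analysis: keeping a consistent choice of $\pi_k$ so the telescoping is well-defined, controlling the $K \to \infty$ residual, and tracking universal constants in the sub-Gaussian maximal inequality and dyadic-to-integral conversion. No step requires technology beyond Hoeffding's inequality for Rademacher sums and a union bound.
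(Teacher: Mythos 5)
Your proposal is correct and follows essentially the same route as the paper: both verify via Hoeffding that $\{X_t\}$ has sub-Gaussian increments with respect to $d_n$ and then obtain the bound from Dudley's entropy integral. The only difference is that the paper invokes Dudley's theorem as a black box (after noting separability), whereas you reprove it by explicit dyadic chaining; the extra bookkeeping is standard and does not change the substance of the argument.
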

	
	\begin{proof}[Proof of Lemma \ref{lemma:2}]
		
		By Hoeffding's inequality, for every $u\geq 0,$
		\$
		\PP\left(|X_t-X_s|\geq u \right)\leq 2\exp\left(\frac{-nu^2}{2\sum_{i=1}^n (s_i-t_i)^2}\right)=2\exp\left(\frac{-u^2}{2d_n^2(s,t)} \right),
		\$
		so that $\{X_t, t\in T\}$ is a sub-Gaussian process with the metric $d_n$. Since $T\subset \RR^n$ is naturally separable and  the map $t\mapsto n^{-1/2}\sum_{i=1}^n\varepsilon_i t_i$ is linear and continuous in $t$, $\{X_t, t\in T\}$ is separable. Therefore applying the Dudley's entropy integral bound to the Rademacher complexity, we get  
		\$
		\EE \left[\sup_{t\in T} \left|\frac{1}{\sqrt n}\sum_{i=1}^n\varepsilon_i t_i\right|\Bigg| Z_n^1\right]\leq  C\int_0^{D} \sqrt{\log N (s, T\cup \{0\}, d_n)}\d s,
		\$
		where 
		$
		D=\diam \left(T\cup \{0\}\right)=\max_{t\in T}\sqrt{\sum_{i=1}^nt_i^2/n}.  
		$
		Applying expectation with respect to $Z_i$'s on both sides finishes the proof. 
	\end{proof}

\section{Video Surveillance}	
\subsection{An Alternating Minimization Algorithm for Problem \eqref{eq:ARR_original}}\label{section: AM}
We reformulate the optimization problem \eqref{eq:ARR_original} as
\begin{align}\label{eq:ARR_2}
\min_{m,U,\{s_i\}_{i=1}^n,\{\delta_i\}_{i=1}^n} \quad& \frac{1}{2n} \Bigg (\sum_{i=1}^{n}\|(y_i- m-Us_i)\odot [(1-\delta_i)1_p]\|_2^2+\tau^2\sum_{i=1}^n \delta_i \Bigg ),\\
\text{s.t. }\quad&U^{\T}U=I_q, \, \delta_i=1(\|y_i-m-Us_i\|_2>\tau) \text{ for } 1\leq i\leq n, \nonumber
\end{align}
where $1_p$ denotes the $p \times 1$ vector of all ones, $\odot$ denotes the Hadamard product, $\delta_i=1(\|y_i-m-Us_i\|_2>\tau)$ is a decision variable taking values 0 or 1.
To optimize \eqref{eq:ARR_2}, we develop an alternating minimization algorithm to    iteratively update the parameters  in the order of $m \rightarrow \{s_i\} _{i=1}^n \rightarrow U \rightarrow \delta_i$.   
We first fix $\delta_i \in \{0,1\},\, 1\leq i\leq n$ and  run one round of updates of $m, U, s_i$. We then  update $\delta_i$ using feasibility conditions $\delta_i=1(\|y_i-m-Us_i\|_2>\tau)$. This finishes one round. We then run multiple steps until convergence. We derive the details in what follows. 

We derive the updating rules for $m, U,$ and $s_i$. Optimizing  \eqref{eq:ARR_2} with respect to $m, U, s_i$ reduces to
\begin{align}\label{eq:ARR_4}
\min_{m,U,\{s_i\}_{i=1}^n} \quad& \frac{1}{2n} \sum_{i=1}^{n}\|(y_i- m-Us_i)\odot [(1-\delta_i)1_p]\|_2^2,\\
\text{s.t. }\quad&U^{\T}U=I_q. \nonumber
\end{align}
 At iteration $k\geq 0$, with fixed $\delta_i^k, U^k$ and $s^k_i$, we first take the partial derivative of the objective function in \eqref{eq:ARR_4} with respect to $m$, set it to zero, and obtain 
\$
  m\odot\frac{1}{n}\sum_{i=1}^n  (1-\delta_i^k)1_p=\frac{1}{n}\sum_{i=1}^n (y_i-U^{k}s_i^{k})\odot [(1-\delta_i^k)1_p].
\$
We then  update $m^{k+1}$ as the solution to  the above equations
 \$
 m^{k+1}=\sum_{i=1}^n (y_i-U^{k}s_i^{k})\odot [(1-\delta_i^k)1_p] \oslash \sum_{i=1}^n [(1-\delta_i^k)1_p],
 \$
where $\oslash$ denotes the element-wise division. 
Similarly, 
we update $s_i^{k+1}$ as
\$
s_i^{k+1}=(U^k)^{\T}[(y_i-m^{k+1})\odot (1-\delta_i^k) 1_p]. 
\$
We then optimize  $U$ while fixing other variables to their up-to-date values. Write $Y=(y_1,\ldots, y_n)^\T\in \RR ^{n\times p}$. Let 
\$
\tilde{Y}_o^{k+1}=Y-1_n (m^{k+1})^{\T}-O^{k+1},
\$
 where $O^{k+1}\coloneqq(o_1^{k+1},...,o_n^{k+1})^{\T}$ such that  $o_i^{k+1}=(y_i-m^{k+1}-U^ks_i^{k+1})\odot \delta_i^k, \;i=1,...,n$. We update $U^{k+1}$ as
\#
\label{eq:ARR_svd}
U^{k+1}=\argmin_{U \in \mathbb{R}^{p \times q}} \frac{1}{2n}\|\tilde{Y}_o^{k+1}-S^{k+1}U^{\T}\|_{\text{F}}^2,\quad\textnormal{s.t. } U^{\T} U=I_q.
\#
We need the following lemma to obtain a closed form update of $U^{k+1}$. 
\begin{lemma}\label{lem:ARR_svd}
	For $Y\in \mathbb{R}^{n \times p}$ and $S \in \mathbb{R}^{n \times q}$, let  $LDR^{\T}=$svd$(Y^{\T}S)$ be the SVD decomposition of $Y^\T S$ with $L\in \mathbb{R}^{p \times q}, D\in \mathbb{R}^{q \times q}, R\in \mathbb{R}^{q \times q}$.
    Then $W=LR^{\T}$ is the solution to the following constrained least squares problem
	\#
{W}=\argmin_{U \in \mathbb{R}^{p \times q}} \frac{1}{2n}\|Y-SU^{\T}\|_{\text{F}}^2,\quad\textnormal{s.t. } U^{\T} U=I_q.
	\# \end{lemma}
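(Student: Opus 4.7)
The plan is to reduce the constrained minimization to an equivalent trace maximization problem, and then to solve that via the given singular value decomposition using (a special case of) von Neumann's trace inequality.

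First I would expand the Frobenius norm, writing
\[
\tfrac{1}{2n}\|Y-SU^\T\|_{\rm F}^2 = \tfrac{1}{2n}\|Y\|_{\rm F}^2 - \tfrac{1}{n}\mathrm{tr}(U^\T Y^\T S) + \tfrac{1}{2n}\mathrm{tr}(U S^\T S U^\T).
\]
The cyclic property of the trace together with the constraint $U^\T U = I_q$ gives $\mathrm{tr}(U S^\T S U^\T)=\mathrm{tr}(S^\T S U^\T U)=\mathrm{tr}(S^\T S)$, which does not depend on $U$. The problem therefore reduces to maximizing $\mathrm{tr}(U^\T Y^\T S)$ over matrices $U$ with orthonormal columns.

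Next I would substitute the SVD $Y^\T S = L D R^\T$ and, using the cyclic property of the trace once more, rewrite the objective as $\mathrm{tr}(DR^\T U^\T L) = \mathrm{tr}(DZ)$, where $Z := R^\T U^\T L \in \mathbb{R}^{q\times q}$. Since $L$ has orthonormal columns, $R$ is orthogonal, and $U$ has orthonormal columns, the operator norm satisfies $\|Z\|_2 \le 1$; consequently every diagonal entry of $Z$ satisfies $|z_{ii}|\le 1$. Because $D=\mathrm{diag}(d_1,\ldots,d_q)$ has nonnegative entries, this yields the upper bound
\[
\mathrm{tr}(DZ) \;=\; \sum_{i=1}^q d_i z_{ii} \;\le\; \sum_{i=1}^q d_i \;=\; \mathrm{tr}(D).
\]

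Finally, I would verify that the candidate $W = LR^\T$ is both feasible and attains this bound: feasibility follows from $W^\T W = R L^\T L R^\T = R R^\T = I_q$, and substituting $U = W$ gives $R^\T W^\T L = R^\T R L^\T L = I_q$, so $\mathrm{tr}(D R^\T W^\T L) = \mathrm{tr}(D)$, matching the upper bound. This identifies $W = LR^\T$ as a minimizer of the original problem. The only mild subtlety is the passage from $\|Z\|_2\le 1$ to $|z_{ii}|\le 1$, which is immediate from $|z_{ii}| = |e_i^\T Z e_i| \le \|Z\|_2$; aside from that, the argument is essentially bookkeeping with the SVD and the cyclic trace identity.
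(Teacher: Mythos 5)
Your argument is correct and follows essentially the same route as the paper's proof: reduce the problem to maximizing $\mathrm{tr}(U^\T Y^\T S)$ (since $\mathrm{tr}(US^\T SU^\T)=\mathrm{tr}(S^\T S)$ is constant under the constraint), substitute the SVD, and show the maximum $\mathrm{tr}(D)$ is attained at $U=LR^\T$. The only cosmetic difference is that you bound the diagonal of $Z=R^\T U^\T L$ via $\|Z\|_2\le 1$, whereas the paper bounds each diagonal entry $\tilde u_i^\T l_i$ by Cauchy--Schwarz; these are interchangeable.
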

Applying lemma \ref{lem:ARR_svd}, we update $U^{k+1}$ as the  multiplication of the left and right singular matrices of $(\tilde{Y}_0^{k+1})^{\T}S^{k+1}$, that is    $U^{k+1}=L^{k+1}(R^{k+1})^\T$ with $L^{k+1}D^{k+1}(R^{k+1})^{\T}=\text{svd}((\tilde{Y}_o^{k+1})^{\T}S^{k+1})$.  To finish the cycle, we last update $\delta_i^{k+1}=1(\|y_i-m^{k+1}-U^{k+1}s_i^{k+1}\|_2>\tau)$. We {initialize the full algorithm with  $\delta_i^0=0$, for $i=1,\cdots,n$ and $U^0=0_p0_q^{\T}$, where $0_p$ denotes a $p \times 1$ vector of all zeros}, and  repeat the above steps  until convergence, that is, $|\cL_{n,\tau}(y_i-m^{k+1}-U^{k+1}s_i^{k+1})-\cL_{n,\tau}(y_i-m^k-U^ks_i^k)|\leq \varepsilon_\text{opt}$ for some pre-specified optimization error $\varepsilon_\text{opt}$.  We use $\varepsilon_\text{opt}=10^{-5}$ in our experiments.  Algorithm \ref{alg:AM} summarizes the pseudo code of the full algorithm. 


\begin{algorithm}[!t]
	\caption{An alternating minimization (AM) algorithm for problem \eqref{eq:ARR_original} using CLS.   }\label{alg:AM}
	\begin{algorithmic}[1]
		\STATE{\textbf{Algorithm}: $\{\hat{m},\hat{U},\{\hat{s}_i\}_{i=1}^n,\{\hat{\delta}_i\}_{i=1}^n\} \leftarrow \text{AM}\big(Y, \tau, \varepsilon_{\text{opt}} )$.}	
		\STATE{\textbf{Initialization}: $U^0=0_p0_q^{\T}$, $\delta_i=0,\, 1\leq i\leq n.$}
		\STATE{\textbf{Input}: $\tau>0$}
		\STATE{\textbf{For }$k\geq 0$ \textbf{until} $|\cL_{n,\tau}(y_i-m^{k+1}-U^{k+1}s_i^{k+1})-\cL_{n,\tau}(y_i-m^k-U^ks_i^k)|\leq\varepsilon_\text{opt}$ \textbf{do}} 
		\begin{flalign*}
		&m^{k+1}=\sum_{i=1}^n (y_i-U^{k}s_i^{k})\odot [(1-\delta_i^k)1_p] \oslash \sum_{i=1}^n [(1-\delta_i^k)1_p],\\
		&s_i^{k+1}=(U^k)^{\T}[(y_i-m^{k+1})\odot (1-\delta_i^k) 1_p],\, 1\leq i \leq n,\\
		&U^{k+1}=L^{k+1}(R^{k+1})^{\T},\\ 
		&\delta_i^{k+1}=1(\|y_i-m^{k+1}-U^{k+1}s_i^{k+1}\|_2>\tau),\, 1\leq i\leq n.
		\end{flalign*}
		{\bf end for}
		\STATE{{\bf Output}: $\{\hat{m},\hat{U},\{\hat{s}_i\}_{i=1}^n,\{\hat{\delta}_i\}_{i=1}^n\}=\{m^{k+1},U^{k+1},\{s_i^{k+1}\}_{i=1}^n, \{\delta_i^{k+1}\}_{i=1}^n\}$.}
	\end{algorithmic}
\end{algorithm}	
\subsubsection{Proof of Lemma \ref{lem:ARR_svd}}

	\begin{proof}[Proof of Lemma \ref{lem:ARR_svd}]
	$W$ solves
	\$ 
	&\quad\argmin_{U^\T U=I_q} \langle Y-SU^{\T}, Y-SU^{\T} \rangle \\
	&=\argmin_{U^\T U=I_q} (\|Y\|_{\text{F}}^2+\|SU^{\T}\|_{\text{F}}^2-2 \langle Y, SU^{\T} \rangle)\\
	&=\argmax_{U^\T U=I_q} \langle Y, SU^{\T} \rangle      \tag{$\|SU^{\T}\|_{\text{F}}^2=\text{tr}(US^{\T}SU^{\T})=\text{tr}(S^{\T}S)$}\\
	&=\argmax_{U^\T U=I_q} \langle U, Y^{\T}S \rangle\\
	&=\argmax_{U^\T U=I_q} \text{tr}(U^{\T} Y^{\T}S)\\
	&=\argmax_{U^\T U=I_q} \text{tr}( U^{\T} LDR^{\T} )   \tag{$LDR^{\T}=$svd$(Y^{\T}S)$}\\
	&=\argmax_{U^\T U=I_q} \text{tr}(R^{\T}U^{\T}  LD).
	\$
	Let $\tilde{U}=UR$, we have
	$$
	\text{tr}(R^{\T}U^{\T}  LD)=\text{tr}(\tilde{U}^{\T}  LD).
	$$
	Since $D$ is a diagonal matrix with entries $\geq 0$,  $\text{tr}(\tilde{U}^{\T}  LD)$ is maximized when the diagonal entries of $\tilde{U}^{\T}  L$, $\tilde{u}_i^{\T}l_i, i=1,\cdots q$, are non-negative and maximized. By Cauchy-Schwartz inequality, the maximum is achieved when $\tilde{U}=L$, which can be done by setting  $U=\tilde{U}R^{\T}=LR^{\T}=W$. This completes the proof. 
\end{proof}

\subsection{The Ordinary Least Squares}
The ordinary least squares method optimizes
\#\label{eq:ARR_ols}
\argmin_{m,U,\{s_i\}_{i=1}^n} \frac{1}{n}\sum_{i=1}^n\ell (y_i-m-Us_i),\quad\textnormal{s.t. } U^{\T} U=I_q,
\#
where $\ell(x)=\|x\|_2^2/2$.

Taking  the derivative of the objective function in \eqref{eq:ARR_ols} with respect to $s_i$ we obtain 
\$
&\frac{\partial}{\partial s_i}\frac{1}{2n}\sum_{i=1}^n\left\{(y_i-m)^{\T}(y_i-m)-2(y_i-m)^{\T}Us_i+s_i^{\T}U^{\T}Us_i\right\},\\
&=\frac{1}{n}\left\{-U^{\T}(y_i-m)+U^{\T}Us_i\right\}.\$ 
Setting it to $0$ and using $U^{\T}U=I_q$, we obtain $s_i=U^{\T}(y_i-m)$.
Plugging $s_i=U^{\T}(y_i-m)$ into  \eqref{eq:ARR_ols} acquires 
\#\label{eq:ols_relax}
\argmin_{U,m} \frac{1}{2n}\sum_{i=1}^n\|(y_i-m)-UU^{\T}(y_i-m)\|_2^2,\,\textnormal{s.t. } U^{\T} U=I_q.
\#
To further simplify the above optimization problem,  we take the derivative of the  objective function in the above display  with respect to $m$, set it to zero and obtain 
\begin{equation}\label{eq:eq1}
(I_p-UU^{\T})(\frac{1}{n}\sum_{i=1}^n y_i-m)=0,
\end{equation}
to which  $m=\sum_{i=1}^n y_i/n=: \bar y$ is a solution.
Using this, \eqref{eq:ols_relax} further reduces to
\#
\label{eq:svd}
\argmin_{U} \frac{1}{2n}\sum_{i=1}^n\|(y_i-\bar{y})-UU^{\T}(y_i-\bar{y})\|_2^2,\quad\textnormal{s.t. } U^{\T} U=I_q.
\#
We need the following lemma to obtain a closed-form update for $U$. 
	\begin{lemma}\label{lem:svd}
	Suppose $q\leq p$, the solution to \eqref{eq:svd} is $U=R(:,1:q)$ where $ LDR^{\T}=\text{svd}(\tilde{Y})$ with  $\tilde{Y}=Y-1_n\bar{y}^{\T}$ and $R(:,1:q)$  consists of the first $q$ columns in $R$.
\end{lemma}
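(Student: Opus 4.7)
The plan is to recast the optimization problem in matrix form and then reduce it to a classical PCA/Ky~Fan maximization. First I would stack the residuals as rows to recognize that
\[
\sum_{i=1}^n \|(y_i-\bar{y}) - UU^{\T}(y_i-\bar{y})\|_2^2 = \|\tilde{Y} - \tilde{Y}UU^{\T}\|_{\rm F}^2 = \|\tilde{Y}(I_p - UU^{\T})\|_{\rm F}^2 .
\]
Under the constraint $U^{\T}U = I_q$, the matrix $P = UU^{\T}$ is an orthogonal projector, so $I_p - P$ is also an orthogonal projector and $(I_p-P)^2 = I_p - P$. Expanding the squared Frobenius norm and using the cyclic property of the trace then yields
\[
\|\tilde{Y}(I_p - UU^{\T})\|_{\rm F}^2 = \|\tilde{Y}\|_{\rm F}^2 - \mathrm{tr}(U^{\T}\tilde{Y}^{\T}\tilde{Y}\, U).
\]
Since $\|\tilde{Y}\|_{\rm F}^2$ does not depend on $U$, the minimization in \eqref{eq:svd} is equivalent to maximizing $\mathrm{tr}(U^{\T}\tilde{Y}^{\T}\tilde{Y}\, U)$ subject to $U^{\T}U = I_q$.

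Next I would diagonalize the Gram matrix via the SVD $\tilde{Y} = LDR^{\T}$. This gives $\tilde{Y}^{\T}\tilde{Y} = R(D^{\T}D)R^{\T}$, which is the spectral decomposition with eigenvalues equal to the squared singular values $\sigma_j^2$ (arranged in nonincreasing order) and eigenvectors given by the columns of $R$. Applying the Ky~Fan maximum principle (equivalently, the trace version of the Courant--Fischer theorem), the maximum of $\mathrm{tr}(U^{\T}AU)$ over $U^{\T}U = I_q$ for a symmetric positive semidefinite $A \in \mathbb{R}^{p\times p}$ equals the sum of its top $q$ eigenvalues and is attained precisely when the columns of $U$ span the corresponding top eigenspace. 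Choosing $U = R(:,1:q)$ therefore attains the maximum, which establishes the claim.

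I do not expect any genuine obstacle here — this is the standard PCA characterization — so the main care is bookkeeping: making sure the SVD dimensions are consistent with $q \leq p$ (so that taking the first $q$ right singular vectors is well defined), and noting that the solution is unique up to right multiplication of $U$ by an orthogonal $q\times q$ matrix whenever the $q$-th and $(q+1)$-th singular values of $\tilde{Y}$ are distinct. Both of these are immediate from the SVD and do not require additional argument.
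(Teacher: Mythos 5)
Your proposal is correct and follows essentially the same route as the paper: both reduce the problem to maximizing $\mathrm{tr}(U^{\T}\tilde{Y}^{\T}\tilde{Y}U)$ over $U^{\T}U=I_q$, diagonalize $\tilde{Y}^{\T}\tilde{Y}=RD^{\T}DR^{\T}$ via the SVD, and conclude that the top $q$ right singular vectors are optimal. The only cosmetic difference is that you invoke the Ky~Fan maximum principle where the paper applies Von Neumann's trace inequality to $\mathrm{tr}(\tilde{U}\tilde{U}^{\T}D^{\T}D)$ with $\tilde{U}=R^{\T}U$; these are interchangeable here.
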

Applying lemma \ref{lem:svd}, we obtain $U=R(:,1:q)$ as the solution to \eqref{eq:svd}. Finally, we could use $m$ and $U$ to update $s_i=R(:,1:q)^{\T}(y_i-\bar{y})$. Therefore, a solution to  \eqref{eq:ARR_ols} is 
\$
\{\hat{m},\hat{U},\hat{s_i}\}=\left \{\sum_{i=1}^n y_i/n, R(:,1:q), R(:,1:q)^{\T}(y_i-\bar{y})\right\}.
\$
We mention that the alternating optimization algorithm only needs one sweep and thus is very fast. 

\subsubsection{Proof of Lemma \ref{lem:svd}}	\begin{proof}[Proof of Lemma \ref{lem:svd}]
	We rewrite the objective function of \eqref{eq:svd}, ignoring the factor $1/2n$, as
	\$
	&\quad \sum_{i=1}^n\|(y_i-\bar{y})-UU^{\T}(y_i-\bar{y})\|_2^2\\
	&=\|\tilde{Y}(I_p-UU^{\T})\|_{\text{F}}^2\\
	&=\text{tr}((I_p-UU^{\T})^{\T}\tilde{Y}^{\T}\tilde{Y}(I_p-UU^{\T}))\\
	&=\text{tr}((I_p-UU^{\T})^{\T}RD^{\T}L^{\T}LDR^{\T}(I_p-UU^{\T})\tag{$LDR^{\T}=\text{svd}(\tilde{Y})$}\\
	&=\text{tr}((I_p-UU^{\T})^{\T}RD^{\T}DR^{\T}(I_p-UU^{\T}))\\
	&=\text{tr}(RD^{\T}DR^{\T})-\text{tr}(UU^{\T}RD^{\T}DR^{\T}UU^{\T}).
	\$
	Taking $\tilde{U}=R^{\T}U$, we obtain
	\$
	&\quad\text{tr}(UU^{\T}RD^{\T}DR^{\T}UU^{\T})\\
	&=\text{tr}(U^{\T}UU^{\T}RD^{\T}DR^{\T}U)\\
	&=\text{tr}(\tilde{U}^{\T}D^{\T}D\tilde{U}). \tag{$U^{\T}U=I_q$}
	\$
To proceed, we need the following  lemma.
	\begin{lemma}[Von Neumann's trace inequality]
	Let  $A$, $B$ be  real  $n\times n$ matrices with singular values, $\alpha_1 \geq \cdots \alpha_n$, $\beta_1 \geq \cdots \geq \beta_n$, respectively. Then 
	\$
	|\text{tr}(AB)| \leq \sum_{i=1}^n \alpha_i \beta_i.
	\$
	\end{lemma}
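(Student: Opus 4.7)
The plan is to reduce Von Neumann's trace inequality to a rearrangement inequality via the singular value decompositions of both matrices. First I would fix reduced SVDs
$$
A = \sum_{i=1}^n \alpha_i u_i v_i^\T, \qquad B = \sum_{j=1}^n \beta_j p_j q_j^\T,
$$
where $\{u_i\}, \{v_i\}, \{p_j\}, \{q_j\}$ are each orthonormal bases of $\RR^n$. Expanding the trace using bilinearity and the cyclic property gives
$$
\text{tr}(AB) = \sum_{i,j=1}^n \alpha_i \beta_j \, (v_i^\T p_j)(q_j^\T u_i) = \sum_{i,j=1}^n \alpha_i \beta_j \, M_{ij},
$$
where $M_{ij} := (v_i^\T p_j)(q_j^\T u_i)$, so that $|\text{tr}(AB)| \leq \sum_{i,j} \alpha_i \beta_j |M_{ij}|$.

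Next I would show that the nonnegative matrix $(|M_{ij}|)$ is doubly sub-stochastic. For each fixed $i$, Cauchy--Schwarz combined with Parseval (applied in the orthonormal bases $\{p_j\}$ and $\{q_j\}$) yields
$$
\sum_{j=1}^n |M_{ij}| \leq \Bigl( \sum_{j} (v_i^\T p_j)^2 \Bigr)^{1/2} \Bigl( \sum_{j} (q_j^\T u_i)^2 \Bigr)^{1/2} = \|v_i\|_2 \, \|u_i\|_2 = 1,
$$
and symmetrically the column sums are at most $1$.

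Finally I would finish using Birkhoff's theorem in its doubly sub-stochastic form: every such nonnegative matrix is a convex combination of sub-permutation matrices, $(|M_{ij}|) = \sum_k \lambda_k P_k$ with $\lambda_k \geq 0$, $\sum_k \lambda_k \leq 1$. Since the sequences $\alpha_i$ and $\beta_j$ are both nonincreasing, the classical rearrangement inequality gives $\sum_{i,j} \alpha_i \beta_j (P_k)_{ij} \leq \sum_i \alpha_i \beta_i$ for every partial permutation $P_k$, and taking the convex combination yields the claim. The main obstacle is the last step: invoking Birkhoff's decomposition feels heavier than the result deserves. A more elementary alternative I would consider is an Abel summation argument applied to the partial sums $\sum_{i \leq k} \alpha_i$ and $\sum_{j \leq \ell} \beta_j$, exploiting monotonicity of $\alpha_i, \beta_j$ together with the row/column sum bounds on $(|M_{ij}|)$ to collapse the double sum without invoking Birkhoff at all.
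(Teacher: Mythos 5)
The paper does not actually prove this lemma: it is stated as a classical result (Von Neumann's trace inequality) and invoked without proof inside the argument for Lemma \ref{lem:svd}, so there is no in-paper proof to compare yours against. Judged on its own, your argument is correct and is one of the standard proofs. The trace expansion $\text{tr}(AB)=\sum_{i,j}\alpha_i\beta_j (v_i^\T p_j)(q_j^\T u_i)$ is right, and the Cauchy--Schwarz/Parseval computation showing that $(|M_{ij}|)$ is doubly substochastic is clean. Two small points deserve explicit mention if you write this out in full: (i) the decomposition of a doubly \emph{sub}stochastic matrix into a convex combination of partial permutation matrices is not Birkhoff's theorem itself but its extension (due to Mirsky; equivalently, every doubly substochastic matrix is dominated entrywise by a doubly stochastic one), so you should cite or prove that variant; and (ii) the rearrangement step uses the nonnegativity of singular values twice --- once to pass from $|\text{tr}(AB)|$ to the sum of absolute values, and once to extend a partial matching to a full permutation without decreasing the sum --- so it is worth flagging that $\alpha_i,\beta_j\geq 0$ is essential there. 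Your proposed alternative via Abel summation against the partial row/column sum bounds is indeed the more elementary route and is how the inequality $\sum_{i,j}\alpha_i\beta_j D_{ij}\leq\sum_i\alpha_i\beta_i$ for doubly substochastic $D$ is usually established in the majorization literature; either finish is acceptable.
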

	Let $\alpha_1$, $\beta_i$ be the singular values of $\tilde{U}\tilde{U}^{\T}$ and $D^{\T}D$ in descending order, respectively. Since $\tilde{U}^{\T}\tilde{U}$ is a projection matrix,  $\alpha_1=\cdots=\alpha_q=1, \alpha_{q+1}=\cdots=\alpha_{p}=0$. $D^{T}D$ is a diagonal matrix with non-negative descending diagnoal entries, thus $\beta_i=d_{i,i}^2$, where $d_{i,i}$ is the $(i,i)$th diagonal entry of $D$. We use Von Neumann's trace inequality to find a upper bound of $\text{tr}(\tilde{U}^{\T}D^{\T}D\tilde{U})$,
	\$
	\text{tr}(\tilde{U}^{\T}D^{\T}D\tilde{U})&=\text{tr}(\tilde{U}\tilde{U}^{\T} D^{\T}D)\\
	&\leq \sum_{i=1}^p \alpha_i \beta_i \tag{Von Neumann's trace inequality}\\
	&=\sum_{i=1}^q d_{i,i}^2,
	\$
	The last equality holds when $$\tilde{U}\tilde{U}^{\T}=\begin{bmatrix}
	I_q & 0 \\
	0& 0 
	\end{bmatrix}, U=R\tilde{U}=R(:,1:q),$$ which makes  $\alpha_1,\cdots,\alpha_q$ the first $q$ diagonal entries of $\Sigma$, where $M\Sigma N^{\T}=\text{svd}(\tilde{U}\tilde{U}^{\T})$. Therefore, the equality holds when $U=R(:,1:q)$.
\end{proof}
\subsection {Results of Different Low-rank Parameters}
Figures \ref{fig:video_q=1}-\ref{fig:video_q=3} show video surveillance results with different low-rank parameters $q=1,2,3$. The ordinary least squares method is able to recover the stationary background in the first two instances when $q=1$ or $q=2$, but fails  when $q=3$. Moreover, the ordinary least squares cannot recover the background with illumination changes even when $q=1$. This indicates the nonrobustness for ordinary least squares. Our method, on the other hand, recovers the background for all cases with $q=1,2,3$. 

\begin{figure}[H]
	\centering
	\includegraphics[width=5.5in]{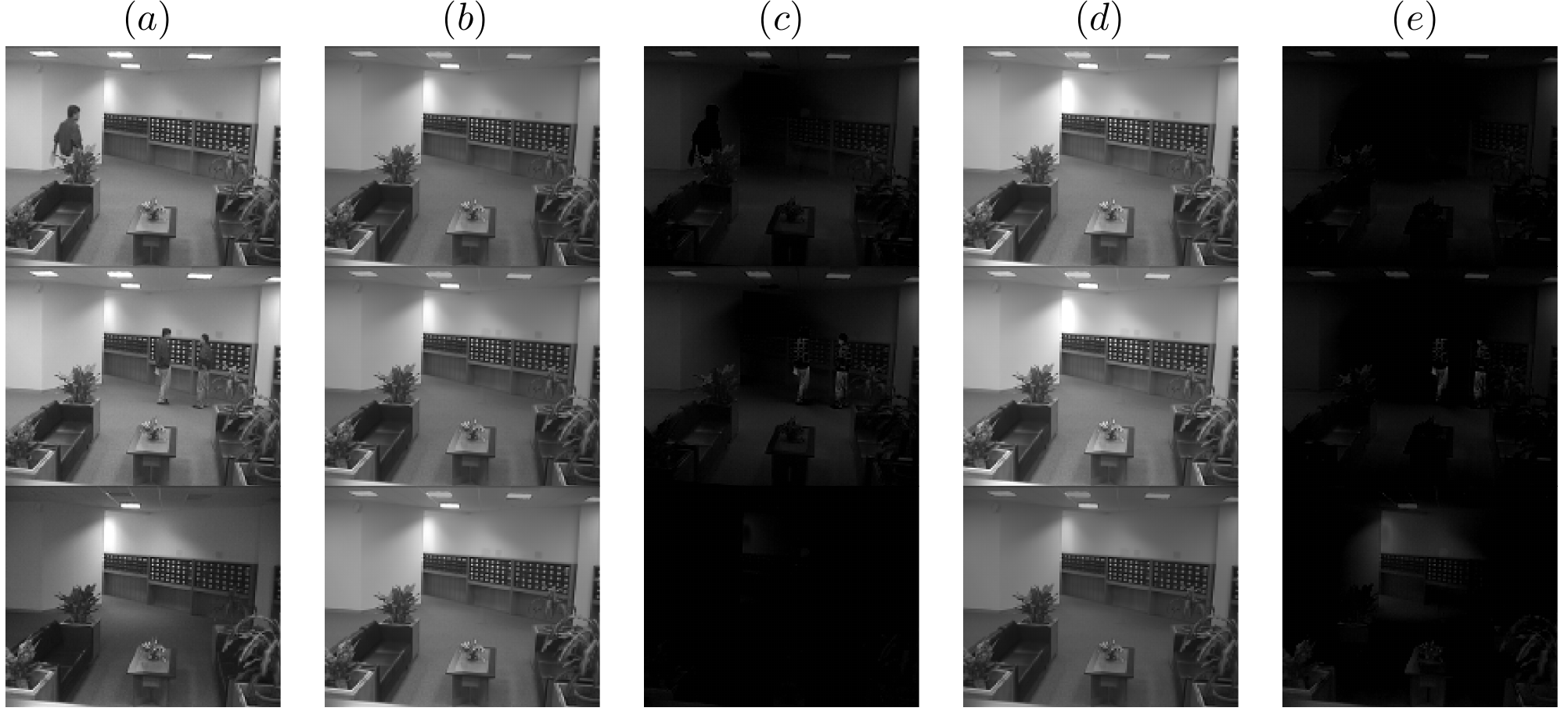} 
	\caption{{Video surveillance with low-rank parameter $q=1$: (a) Original Frames. (b)-(c) Background extraction and outliers from robust regression conducted on adaptive capped least squares regression model. (d)-(e) Background extraction and outliers from the ordinary least squares method.}}\label{fig:video_q=1}
\end{figure}

\begin{figure}[H]
	\centering
	\includegraphics[width=5.5in]{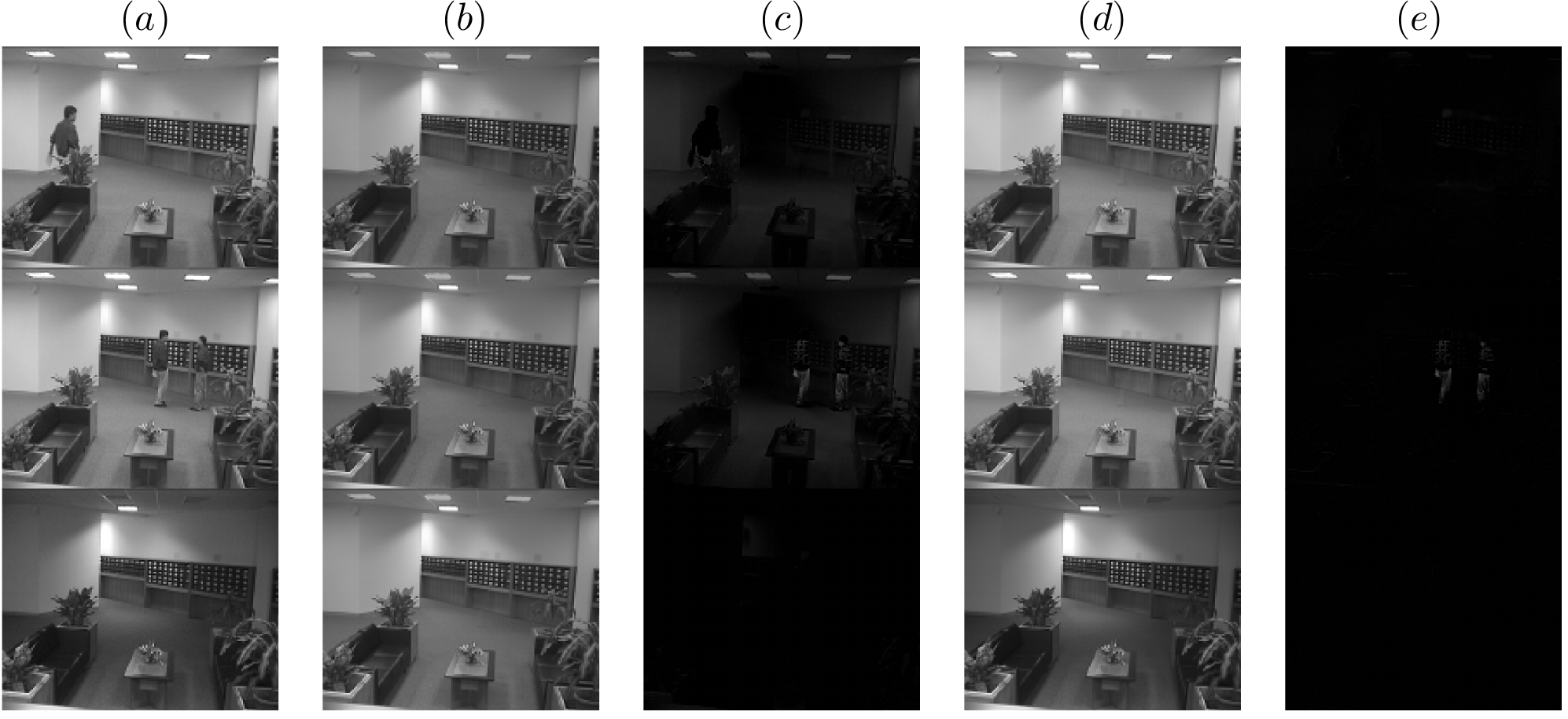} 
	\caption{{Video surveillance with low-rank parameter $q=2$: (a) Original Frames. (b)-(c) Background extraction and outliers from robust regression conducted on adaptive capped least squares regression model. (d)-(e) Background extraction and outliers from the ordinary least squares method.}}\label{fig:video_q=2}
\end{figure}

\begin{figure}[H]
	\centering
	\includegraphics[width=5.5in]{figure/q=1.pdf} 
	\caption{{Video surveillance with low-rank parameter $q=3$: (a) Original Frames. (b)-(c) Background extraction and outliers from robust regression conducted on adaptive capped least squares regression model. (d)-(e) Background extraction and outliers from the ordinary least squares method.}}\label{fig:video_q=3}
\end{figure}

\section{Blind Image Inpainting}
We reformulate \eqref{eq:ARR_inpainting} as
\begin{align}\label{eq:inpainting2}
\argmin_{\alpha \in \mathbb{R}^{m \times p}, \{\delta_{i}\}_{i=1}^p} &\sum_{i=1}^{p}\Bigg(\frac{1}{2}\|(y_i-D\alpha_i)\odot (1_n-\delta_i)\|_2^2+\lambda\|\alpha_i\|_1 +\frac{1}{2}\tau^2\sum_{j=1}^n \delta_{ij} \Bigg ),\\
\text{s.t. }\quad&\delta_{ij}=1(|y_{ij}-[D\alpha_i]_j|>\tau), \;i=1,...,p, j=1,...,n,\nonumber
\end{align}
where $D$ is a dictionary matrix,  $\lambda$ is the regularization parameter, $\delta_{i}=(\delta_{i1},\cdots,\delta_{in})^{\T}$, $\delta_{ij}$ is a decision variable taking values 0 or 1, $[D\alpha_i]_j$ is the $j$th element of $D\alpha_i$, and $\odot$ denotes the Hadamard product. 

To solve \eqref{eq:inpainting2}, we develop an alternating minimization algorithm  to iteratively update $\alpha$ and $\delta_i$ in the order of $\alpha\rightarrow \delta_i$. Let 
\$
f(\alpha, \Delta; Y, D,\lambda)= \frac{1}{2}\|(Y-D\alpha)\odot (1_n1_p^{\T}-\Delta)\|_{2,1}^2+\sum_{i=1}^{n}\lambda\|\alpha_i\|_1,
\$
where $\Delta=(\delta_{ij})$ and $\|\cdot\|_{2,1}^2$ is the $\ell_1$-norm of the row-wise $\ell_2$-norms. 
For a fixed $\Delta$, optimizing \eqref{eq:inpainting2} with respect to $\alpha$  becomes 
\begin{equation}\label{eq:inpainting3}
\argmin_{\alpha \in \mathbb{R}^{m \times p}}f(\alpha, \Delta; Y, D,\lambda).
\end{equation}
For a fixed $\alpha$, we update $\delta_{ij}$ using the feasibility conditions $\delta_{ij}=1(|y_{ij}-[D\alpha_i]_j|>\tau), \, 1\leq i\leq p,\, 1\leq j\leq n$.

The full algorithm goes as follows. We  initialize $\Delta^0=1_n1_p^{\T}$. At iteration $k\geq 0$, we first apply Lasso to update 
\$
\alpha^{k+1}=\argmin_{\alpha \in \mathbb{R} ^{m \times p} } f(\alpha, \Delta^k; Y, D,\lambda),
\$then we update $\delta^{k+1}_{ij}=1(|y_{ij}-[D\alpha^{k+1}_i]_j|>\tau)$ and repeat the above steps until convergence, that is, $\Delta^{k+1}=\Delta^k$. We use the output $Y \odot (1-\Delta^{k+1})+D\alpha^{k+1} \odot \Delta^{k+1}$ for the signal matrix restoration. Algorithm \ref{alg:ARR_Lasso} summarizes the pseudo code. 

		\begin{algorithm}[!t]
		\caption{An alternating minimization algorithm for blind image inpainting using CLS.}\label{alg:ARR_Lasso}
		\begin{algorithmic}[1]
			\STATE{\textbf{Algorithm}: $\{\hat{\alpha}, \hat{\Delta}\} \leftarrow \text{CLS-inpainting}\big(Y, D,\tau)$}\\		
			\STATE{\textbf{Initialization}: $\Delta^0=1_n1_p^{\T}$}
			\STATE{\textbf{Input}: $\tau>0, D, \lambda$}\\
			\STATE{\textbf{for} $k=0,1,\cdots, \textbf{until } \Delta^{k+1} = \Delta^k$ }  ${\bf do}$\\
			$\alpha^{k+1}=\argmin_{\alpha \in \mathbb{R} ^{m \times p} }f(\alpha, \Delta^k; Y, D,\lambda)$,\\
            $\delta_{ij}^{k+1}=1(|y_{ij}-[D\alpha^{k+1}_i]_j|>\tau)\;\text{for}\; i=1,\cdots p, j=1,\cdots n$.\\
			\textbf{end for}\\
			\STATE{{\bf Output}: $\{\hat{\alpha}, \hat{\Delta}\}=\{\alpha^{k+1}, \Delta^{k+1}\}$}
		\end{algorithmic}
	\end{algorithm}

\end{document}